\documentclass[prx,twocolumn,english,superscriptaddress,floatfix,longbibliography]{revtex4-2}

\usepackage[utf8]{inputenc}
\usepackage[T1]{fontenc}
\usepackage{lmodern}

\usepackage{adjustbox}
\usepackage[ruled,vlined,linesnumbered]{algorithm2e}
\usepackage{amssymb,amsmath,amsthm,amsfonts}
\usepackage{bbm}
\usepackage{braket}
\usepackage{comment}
\usepackage{enumitem}
\usepackage{float}
\usepackage{footnote}
\usepackage[margin=1in]{geometry}
\usepackage{graphicx}
\usepackage[colorlinks]{hyperref}
\usepackage{longtable}
\usepackage{mathtools}
\usepackage{mathrsfs}
\usepackage{outlines}
\usepackage{physics}
\usepackage{stmaryrd}
\usepackage{textgreek}
\usepackage{tikz}
\usetikzlibrary{quantikz}
\usepackage{xcolor}
\usepackage[noend]{algpseudocode}



\newtheorem{theorem}{Theorem}
\newtheorem{definition}{Definition}

\newtheorem{proposition}{Proposition}
\newtheorem{example}{Example}

\newcommand{\eqn}[1]{\hyperref[eqn:#1]{EQ~\ref*{eqn:#1}}}
\newcommand{\thm}[1]{\hyperref[thm:#1]{Theorem~\ref*{thm:#1}}}
\newcommand{\cor}[1]{\hyperref[cor:#1]{Corollary~\ref*{cor:#1}}}
\newcommand{\defn}[1]{\hyperref[defn:#1]{Definition~\ref*{defn:#1}}}
\newcommand{\lem}[1]{\hyperref[lem:#1]{Lemma~\ref*{lem:#1}}}
\newcommand{\prop}[1]{\hyperref[prop:#1]{Proposition~\ref*{prop:#1}}}
\newcommand{\fig}[1]{\hyperref[fig:#1]{FIG~\ref*{fig:#1}}}
\newcommand{\tab}[1]{\hyperref[tab:#1]{Table~\ref*{tab:#1}}}
\newcommand{\algo}[1]{\hyperref[algo:#1]{Algorithm~\ref*{algo:#1}}}
\renewcommand{\sec}[1]{\hyperref[sec:#1]{Section~\ref*{sec:#1}}}
\newcommand{\append}[1]{\hyperref[append:#1]{Appendix~\ref*{append:#1}}}
\newcommand{\fac}[1]{\hyperref[fac:#1]{Fact~\ref*{fac:#1}}}
\newcommand{\lin}[1]{\hyperref[lin:#1]{Line~\ref*{lin:#1}}}
\newcommand{\fnote}[1]{\hyperref[fnote:#1]{Footnote~\ref*{fnote:#1}}}

\newcommand{\rmk}[1]{\hyperref[rmk:#1]{Remark~\ref*{rmk:#1}}}


\def\>{\rangle}
\def\<{\langle}

\newcommand{\x}{\ensuremath{\mathbf{x}}}

\newcommand{\Q}{\mathbb{Q}}

\newcommand{\M}{\mathcal{M}}
\renewcommand{\H}{\mathcal{H}}
\renewcommand{\O}{\mathcal{O}}

\DeclareMathOperator{\diag}{diag}

\def\:{\hbox{\bf:}}

\def \eps {\epsilon}

\let\oldnl\nl
\newcommand{\nonl}{\renewcommand{\nl}{\let\nl\oldnl}}

\hyphenation{Schrö-ding-er}
\hypersetup{
    colorlinks=true,
}

\let\oldnl\nl



\begin{document}
\title{Implementation and Learning of Quantum Hidden Markov Models}

\author{Vanio Markov}
\affiliation{Wells Fargo}
\author{Vladimir Rastunkov}
\affiliation{IBM Quantum, IBM Research}
\author{Amol Deshmukh}
\affiliation{IBM Quantum, IBM Research}
\author{Daniel Fry}
\affiliation{IBM Quantum, IBM Research}
\author{Charlee Stefanski}
\affiliation{Wells Fargo}

\normalsize

\date{\today}

\begin{abstract}
In this article, we use the theory of quantum channels and open quantum systems to provide an efficient unitary characterization of a class of stochastic generators known as quantum hidden Markov models (QHMMs). By utilizing the unitary characterization, we demonstrate that any QHMM can be implemented as a quantum circuit with mid-circuit measurement. We prove that QHMMs are more compact and more expressive definitions of stochastic process languages compared to the equivalent classical hidden Markov models (HMMs). 
Starting with the formulation of QHMMs as quantum channels, we employ Stinespring's construction to represent these models as unitary quantum circuits with mid-circuit measurement.
By utilizing the unitary parameterization of QHMMs, we define a formal QHMM learning model. The model formalizes the empirical distributions of target stochastic process languages, defines hypothesis space of quantum circuits, and introduces an empirical stochastic divergence measure - hypothesis fitness - as a success criterion for learning. We demonstrate that the learning model has a smooth search landscape due to the continuity of Stinespring's dilation. The smooth mapping between the hypothesis and fitness spaces enables the development of efficient heuristic and gradient descent learning algorithms.

We propose two practical learning algorithms for QHMMs. The first algorithm is a hyperparameter-adaptive evolutionary search. The second algorithm learns the QHMM as a quantum ansatz circuit using a multi-parameter non-linear optimization technique.

\end{abstract}

\maketitle
\hypersetup{linkcolor=black}
\tableofcontents
\newpage
\thispagestyle{empty}


\section{Introduction}\label{sec:intro}

In this article we study quantum models of discrete stochastic processes.
We assume that the processes are generated by \textit{finitary} mechanisms with restricted memory and time.
Furthermore, we assume that only subset of the process variables are observable, 
and the distributions of these observations are dependent on an underlying unobservable process. A common modeling approach for such situations is to use two joint stochastic processes: a latent state process, and a dependent output emission process. The state process is considered a stochastic Markovian process \cite{karlin2012first} with a finite number of "hidden" states. The observable process is represented as a probabilistic map from the hidden states to the symbols of a finite alphabet. Such a modeling framework is referred to as a \textit{hidden Markov model} (HMM) \cite{mamon2007hidden}.

Hidden Markov models are applicable in various real-world situations.
The concept of latent variables is used to estimate circumstances and relations which are not directly observable. The assumption of underlying Markovian dynamics allows for the development of tractable learning and inference procedures. These models were introduced by Baum and Petri \cite{baum1966statistical} in 1966, and since then have been
successfully applied in areas such as linguistics \cite{rabiner1989tutorial,anandika2021review,Lefevre2003,KJK2018}, DNA analysis
\cite{porter2021profile,YPR2018,KBS1994}, engineering \cite{Jong2004,UAK2018}, and finance \cite{mamon2007hidden,UAK2019}.

The \textit{learning problem} for HMMs is to estimate the transition and emission operators of an unknown HMM given a finite sample of its emissions distributions. The conventional approach to solving the learning problem is based on maximization of observations' likelihood
\cite{baum1966statistical,baum1970maximization,baum1972inequality,bickel1998asymptotic,KNW2013}. Despite the fact that the expectation-maximization cannot be performed in polynomial time \cite{terwijn2002learnability}, this approach has been widely successful in practice \cite{ghojogh2019hidden}. Another group of learning algorithms, known as \textit{spectral algorithms} \cite{anderson1999realization,hsu2012spectral, balle2014spectral} have polynomial time complexity. These algorithms represent the joint frequencies of the observed sequences as a matrix known as a Hankel matrix \cite{anderson1999realization}.
The rank of the matrix is used as an estimate of the order of the model. The model parameters are mapped to the factors of matrix's singular value decomposition. These learning algorithms belong to a specialized area known as \textit{realization problem} for HMMs \cite{anderson1999realization,hsu2012spectral}.

The objective of this research is to propose a methodology for the physical implementation of quantum HMMs, as well as to develop practical learning algorithms for these models. Our approach is to \textit{quantize} a classical HMM by replacing its stochastic vector space by the space of quantum density operators. The classical transition and emission operators are represented as \textit{observable operators} \cite{jaeger2000observable} and replaced by \textit{quantum operations}.
 A quantum operation, also known as \textit{quantum channel}, is a linear, completely positive trace-preserving map (CPTP map) defined on the space of density operators \cite{nielsen_chuang_2010}. When a quantum operation is used to quantize a classic HMM, it has two roles. Firstly, it defines the stochastic changes of model's state, representing the underlying Markovian process. Secondly, when interpreted as a \textit{general measurement} (POVM) operator, the quantum operation describes the observable process. 

The definition of a quantum hidden Markov model (QHMM) as a quantum operation was initially introduced by Monras et al. \cite{monras2011hidden}. Since then, various aspects of QHMMs were studied by Srinivasan et al. \cite{Srinivasan2017learningHQMM,Srinivasan2018LearningInference,Adhikary2019LearningQuantumGraphicalModels,Adhikary2020Expressiveness,Srinivasan2021Towards,Srinivasan2020QuantumTensorNetworks,srinivasan2022quantum}, O'Neill \cite{ONeill2012hqmm_one_qubit}, and Clark \cite{clark2015hidden}. Javidian \cite{javidian2021learning} discussed a different type of model called the circular QHMM. Cholewa et al. \cite{cholewa2017quantum} studied HMMs based on quantum walks and transition operation matrices. Elliot \cite{Elliott2021Compress} uses the concept of \textit{memory compression} to demonstrate that the quantum HMMs achieve strict advantage compared to the corresponding classical HMMs. The memory compression is related to the size of the models' state-space and the amount of information it stores.
Furthermore, other quantum approaches to sequence modeling are being proposed. For instance, Blank et al. \cite{blank2021quantum} investigated characteristic function of discrete stochastic processes and proposed quantum algorithm for learning it, leveraging techniques such as quantum amplitude estimation and quantum Monte Carlo methods.

To implement QHMMs on quantum computers and develop feasible learning procedures, we employ the open quantum systems theory \cite{nielsen_chuang_2010} and utilize Stinespring's dilation theorem \cite{Stinespring1955} to derive unitary representation of the models. According to this theorem, every quantum operation can be represented as unitary evolution in an extended Hilbert space, followed by disregarding some degrees of freedom. We use this construction to parameterize any QHMM, defined by a quantum operation, as a unitary transformation in a larger Hilbert space. This parametrization enables the physical implementation of QHMMs as unitary quantum circuits with mid-circuit measurements. Furthermore, we exploit the continuity of Stinespring's procedure \cite{KRETSCHMANN20081889}, to define a hypothesis space and corresponding learning objective, enabling efficient learning of QHMMs.

The main contributions of our study are
\begin{itemize}
    \item We demonstrate that QHMMs are more \textit{compact generators} of stochastic process languages compared to classical hidden Markov models. We show that if a QHMM generates a stochastic language of finite rank $r$, it requires only a Hilbert space of dimension at most $N = \sqrt{r} $. In contrast, the minimal classical HMM for such a language, if the model exists, requires at least $n=r$-dimensional classic vector space.

     \item We demonstrate that QHMMs are more \textit{expressive generators} of stochastic process languages compared to classical hidden Markov models. For every stochastic language with finite rank $r$, there exists QHMMs $\mathbf{Q}$ in Hilbert space with dimension $N = \lfloor \sqrt{r} \rceil$. It is well known \cite{Dharm63, Fox68}, that there exist finite-rank stochastic languages which cannot be realized by classical QHMMs.
   
    \item For every QHMM defined as a quantum operation we provide an equivalent QHMM implementation as a unitary
    circuit with mid-circuit measurement.
   
    \item We provide a formal QHMM learning model formalizing the training data sample, hypothesis space of unitary
    quantum circuits, and hypothesis quality criteria as stochastic divergence between two distributions.
    
    \item We propose a tractable empirical distance measure between two QHMMs in terms of the divergence of their observable distributions. We prove that the empirical distance is dominated by the distance measure induced on the space of QHMMs by the diamond norm.  
    
    \item We demonstrate that the landscape of the proposed QHMM learning model is smooth, which follows from the convergence properties of Stinespring's dilation theorem \cite{KRETSCHMANN20081889}. The smooth landscape implies efficient heuristic and gradient-based learning algorithms.
 
    \item We specify a QHMM learning algorithm as adaptive evolutionary search in the space of quantum circuits
    with mid-circuit measurement.
 
    \item We specify a QHMM learning algorithm as multi-parameter nonlinear optimization in the space of pre-defined
    quantum ansatz.
\end{itemize}
 
This article is organized as follows:
\begin{itemize}
    \item Section \ref{section:quantum_hidden_markov_models} presents some preliminary information regarding stochastic
    process languages, classic hidden Markov models and observable operators models. It introduces critical concepts including quantum states,
    quantum operations, general measurement and quantum mid-circuit computation.
   
    \item Section \ref{section:QHMM} presents the basic definition of a QHMM as quantum operation and provides unitary definition of QHMMs and their implementation as
    quantum circuits with mid-circuit measurement.
 
    \item Section \ref{section:learning_quantum_hidden_markov_models} discusses a formal QHMM learning model
    defining data samples, hypothesis space and learning criterion. It specifies adaptive evolutionary and ansatz-based QHMM learning
    algorithms.
 
    \item Section \ref{section:examples_of_qhmms} presents two examples: a market model and a stochastic volatility
    model.
 
    \item Section \ref{section:conclusions_outlook} discusses our conclusion and outlook.
 
\end{itemize}

\section{\label{section:quantum_hidden_markov_models}Preliminaries}

\subsection{\label{subsubsection:stochastic_and_hmms}Stochastic Languages and Hidden Markov Models}

We study a class of discrete time stationary stochastic processes,
\begin{equation*}
    \{ Y_t : t \in \mathbb{N}, Y_t \in\Sigma \},
\end{equation*}
where $\Sigma=\{a_1, \ldots, a_m\}$ is a finite set of observable symbols, called an \textit{alphabet}.
The set of all finite sequences over the alphabet $\Sigma$, including the empty sequence $\epsilon$, is denoted by ${\Sigma}^{*}$.
Finite sequences of symbols are denoted by the bold lowercase letters $\textbf{a},\, \textbf{p},\, \textbf{s}$.
The set of all sequences with length exactly $t$ is denoted by ${\Sigma}^{t}$.
For any $\textbf{a} \in {\Sigma}^{*} $ let $|\textbf{a}|$ be the number of symbols in the sequence.
If \textbf{p} and \textbf{s} are sequences, then \textbf{ps} denotes their \textit{concatenation}.
The sequence \textbf{p} is called the \textit{prefix} and \textbf{s} is called the \textit{suffix} of \textbf{ps}.
Any subset of ${\Sigma}^{*}$ is a \textit{language} ${L}$ over the alphabet.
The sequences belonging to a language are referred to as \textit{words}.
The set of sequences originated by observations or measurement of the evolution of a discrete-time process is called
the \textit{process language}.
It is easy to verify that if a word results from the observation of a process, then every one of its subwords has also been observed. Therefore, the process languages are \textit{subword-closed}.

A \textit{stochastic} process language $L$ is a process language together with a set 
\begin{equation}
\label{eqn:stochstic language distribution_0}
 D^L=\bigl \{D_t^L : t \geq 0 \bigr \} 
\end{equation}
\noindent
of finite dimensional probability 
distributions $D_t^L$, each of which is defined on the sequences with length exactly $t$:

\begin{equation}
\label{eqn:stochstic language distribution}
    D_t^{L} = \bigl \{ P[\textbf{a}] : \textbf{a} \in {\Sigma}^{t}, \sum_{\textbf{a} \in {\Sigma}^{t} } P[\textbf{a}] = 1  \bigr \}.
\end{equation}

The following properties of a stochastic process language are straightforward to prove:

\begin{itemize}
    \item If the probability of a sequence $\textbf{p}$ is $P[\textbf{p}]$, then the conditional probability to
    observe a symbol $a$ after observing $\textbf{p}$ is
    \begin{equation*}
    P[a\mid\textbf{p}]=\frac{P[\textbf{p}a]}{P[\textbf{p}]}.
    \end {equation*}
    \item Every stochastic language $L$ and sequence of symbols $\textbf{p}\in {\Sigma}^{*}$ define collections of
    distributions $P_{t}[ \textbf{s} \mid \textbf{p}]$  over the sequences with equal length $t > 0$:
    \begin{equation*}
        \sum_{\textbf{s} \in {\Sigma}^{t} } P[\textbf{ps}] = P[\textbf{p}], \forall\textbf{p}\in {\Sigma}^{*}.
    \end{equation*}
\end{itemize}
    
Two sequences $\mathbf{p}_1$ and $\mathbf{p}_2$ are called \textit{t-equivalent} with respect to a stochastic language $L$ if they define the same distribution
    over ${\Sigma}^{t}$:
    \begin{equation*}
     P_{t}[\textbf{s} \mid \textbf{p}_1] = P_{t}[\textbf{s} \mid \textbf{p}_2], \forall \textbf{s} \in {\Sigma}^{t}.
    \end{equation*}

We study observable processes $\{ Y_t: Y_t \in {\Sigma} \}$ where each observation $Y_t$ depends on the state of a non-observable or \textit{hidden} finite-state process $\{ X_t : X_t \in S \}$, where $S =\{s_1, \ldots, s_n\}$.
The processes $\{ Y_t \}$ and $\{ X_t \}$ are called \textit{emission process} and \textit{state process} respectively.
If the joint process $\{ Y_t, X_t : t \in \mathbb{N},\,Y_t \in {\Sigma},\, X_t \in S \}$ is assumed to be stationary,
then we can describe it by a linear model known as (classical) hidden Markov model (HMM).
The model assumes \textit{Markovian} evolution of the hidden state process $\{ X_t \}$.
The observed process $\{ Y_t\}$ is emitted by a linear stochastic operator mapping the current hidden state to an observable
symbol.

\begin{definition}[Classical Hidden Markov Model]
    \label{def_chmm}
    A classical hidden Markov model is defined as a 5-tuple:
    \begin{equation*}
        \textbf{M}= \bigl\{ \Sigma,S,A,B, \x_0 \bigr\}
    \end{equation*}
    where $\Sigma=\{a_1, \ldots, a_m\}$ is a finite set of observable symbols, $S=\{s_1, \ldots, s_n\}$ is a finite set
    of unobservable or hidden states and the number of states $n$ is called the \textit{order of the model}, $A$ is
    a row-stochastic state transition matrix, $B$ is a column-stochastic observations emission matrix, and $\x_0$ is a
    stochastic vector defining the initial superposition of the process' states.
\end{definition}
At any moment in time $t$, the model is in a superposition (stochastic mixture) of its hidden states, described by a stochastic vector $x_t \in \mathbb{R}^n$. The component $x_t^i$ represents the probability of being in state $s_i$. The evolution of the hidden states follows a Markovian process defined by the transition matrix:
\begin{equation*}
    x_{t+1} = Ax_t.
\end{equation*}

At any moment in time $t$, the model defines symbol emission probabilities using a stochastic vector $y_t \in \mathbb{R}^m$, which depends on the current state $x_t$ through the emission matrix $B$:
\begin{equation*}
    y_{t} = Bx_t.
\end{equation*}
The component $y_t^i$ represents the probability of emitting symbol $a_i$. However, it's important to note that the evolution of the observation process, in general, is not a Markovian process.
 
The \textit{steady state} of a stationary HMM is a stochastic vector $\textbf{x}^{*}$ defined as
\begin{equation*}
    \textbf{x}^{*} = A\textbf{x}^{*}.
\end{equation*}
The steady state $\textbf{x}^{*}$ is the eigenvector of the state transition operator $A$ with eigenvalue 1,
normalized to represent a probability distribution.

For every HMM $\mathbf{M}$ we can define a set of \textit{observable operators} \textbf{T} as
\begin{equation}
\label{classic observable operators}
    \textbf{T} = \{ T_a : T_a=AB_a, a \in\Sigma \},
\end{equation}
where $B_a = \diag(B[a,i]), i \in 1,\dots,n$ are diagonal matrices defining symbol observation probabilities for each
state $s_i$ ~\cite{jaeger, carlyle_paz}.

Every element of an observable operator $T_a$ defines the conditional probability of the model being in state $s_j$ if, at the previous moment in time, it was in state $s_i$ and the symbol $a$ was emitted.
\begin{equation*}
    T_a^{i,j} = P[s_j \mid s_i, a].
\end{equation*}
Let $\textbf{a}={a_1} \ldots {a_t}$ be any sequence.
We define the observable operator corresponding to $\textbf{a}$ as follows: 
\begin{equation*}
    T_\textbf{a} = T_{a_t} \ldots T_{a_1}.
\end{equation*}
For any sequences $\textbf{a}, \textbf{b} \in {\Sigma}^{*} $ the following composition of the observable operators is easy to verify:
\begin{equation*}
    T_\textbf{ab} = T_\textbf{b}T_\textbf{a}.
\end{equation*}
The state transition operator $A$ is related to the observable operators $\{T_a\}$ as follows:
\begin{equation*}
    A = \sum_{a \in {\Sigma}} T_a.
\end{equation*}

A Hidden Markov Model (HMM) $\mathbf{M}$ defines the probability of every finite observable sequence $\mathbf{a} = a_1 \ldots a_t$ as:
\begin{equation}
 \label{classic_observable_prob}  
    P[\textbf{a} \vert \mathbf{M}]=\textbf{1}T_{a_t} \ldots T_{a_1}\textbf{x}_0=\textbf{1}T_{\textbf{a}}\textbf{x}_0,
\end{equation}
where \textbf{1} is the all-ones row vector with dimension $n$.
With every model $\mathbf{M}$ we associate a \textit{sequence function} $f^{\mathbf{M}} : {\Sigma}^{*} \rightarrow \left[ 0, 1 \right] $ defined as:
\begin{equation}
    \label{eqn:hmm_sequence_function}
    {f^{\mathbf{M}}}(\textbf{a})=P[\textbf{a} \vert \mathbf{M}],\forall \textbf{a} \in {\Sigma}^{*}.
\end{equation}

Through its sequence function, every HMM $\mathbf{M}$ defines a stochastic process language $L^M$ consisting of a set of distributions for each $t \in \mathbb{N}$:
\begin{equation}
\label{eqn: D_M^T classic}
    D^{\mathbf{M}}_t = \{ {f^\mathbf{M}}(\textbf{a}) : \textbf{a} \in {\Sigma}^{t} \}
\end{equation}

We compare HMMs based on the languages they define or generate. Two HMMs are considered \textit{equivalent} if they define the same stochastic process language.

Let us assume that we have observed a process up to a certain moment in time $t$:
\begin{equation*}
    \textbf{a}={a_1}\ldots{a_t}.
\end{equation*}
The probability of the next symbol being $a$ is given by:
\begin{equation*}
    \begin{split}
    P[a\vert\textbf{a}] & =\frac{P[\textbf{a}a]}{P[\textbf{a}]} \\
    & = \textbf{1}T_a \frac{T_{\textbf{a}}\textbf{x}_0}{\textbf{1}T_{\textbf{a}}\textbf{x}_0} \\
    & = \textbf{1}T_a\textbf{x}_t,
    \end{split}
\end{equation*}
where
\begin{equation*}
    \textbf{x}_t = \frac{T_{\textbf{a}}\textbf{x}_0}{\textbf{1}T_{\textbf{a}}\textbf{x}_0}
\end{equation*}
is the stochastic mixture of states after the emission of the sequence ${\mathbf{a}}$.

For every HMM $\mathbf{M}$, let's consider a \textit{bi-infinite} matrix, commonly referred to as a generalized \textit{Hankel matrix}, defined by the sequence function (\ref{eqn:hmm_sequence_function}) of the model \cite{anderson1999realization}:
\begin{equation}
\label {eqn:hankel matrix}
    H^\mathbf{M} \in {R}^{{\Sigma}^{*} \times {\Sigma}^{*}}, {H^\mathbf{M}} \left [ \textbf{p},\textbf{s} \right ] = f^\mathbf{M} (\textbf{p}\textbf{s}), \forall\textbf{p},\textbf{s}\in {\Sigma}^{*},
\end{equation}
where $\textbf{p}\textbf{s}$ denotes a concatenation of sequences $\textbf{p}$ and $\textbf{s}$.
We can say that $H^\mathbf{M}$ is indexed by the prefix $\mathbf{p}$ and suffix $\mathbf{s}$ of the sequence $\mathbf{p}\mathbf{s}$. The rows of $H^\mathbf{M}$ are indexed by the prefixes $\mathbf{p}$ in the \textit{first lexicographical order}, and the columns are indexed by suffixes $\mathbf{s}$ in the \textit{last lexicographical order} of the sequences~\cite{anderson1999realization}.
A sequence function $f^L$ and the corresponding generalized Hankel matrix $H^L$ are defined for every stationary stochastic language $L$ by the distributions $D^{L}_t$ (\ref {eqn:stochstic language distribution}):
\begin{equation}
\label{eqn:stochastic language seq function}
    {f^{\mathbf{L}}}(\textbf{a})=P[\textbf{a} \vert D^{L}_t],\forall \textbf{a} \in {\Sigma}^{t}, \forall t \in \mathbb{N}.
\end{equation}
\noindent
The \textit{rank} of a stochastic language is the rank of its Hankel matrix. This matrix is important, because its finite-rank property is a necessary condition for the stochastic language to have a HMM realization. 

\begin{theorem}[Anderson~\cite{anderson1999realization}]
 \label {theorem: Anderson}
    If $H$ is the infinite generalized Hankel matrix associated with a HMM $\mathbf{M}$ of order $n$, then $rank(H) \leq n$.
\end{theorem}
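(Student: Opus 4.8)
The plan is to exhibit an explicit rank factorization of $H^\mathbf{M}$ that passes through the $n$-dimensional hidden-state space, from which the bound $\operatorname{rank}(H^\mathbf{M}) \le n$ follows by elementary linear algebra. The single idea driving everything is that the observable operators compose multiplicatively, $T_\mathbf{ab} = T_\mathbf{a}\,T_\mathbf{b}$, so the prefix and suffix contributions to each Hankel entry separate.

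First I would rewrite an arbitrary entry of the matrix using this composition rule together with the sequence-function formula $f^\mathbf{M}(\mathbf{a}) = \textbf{1}\,T_\mathbf{a}\,\x_0$. For any prefix $\mathbf{p}$ and suffix $\mathbf{s}$,
\begin{equation*}
 H^\mathbf{M}[\mathbf{p},\mathbf{s}] = f^\mathbf{M}(\mathbf{ps}) = \textbf{1}\, T_{\mathbf{ps}}\, \x_0 = \bigl(\textbf{1}\, T_\mathbf{p}\bigr)\bigl(T_\mathbf{s}\, \x_0\bigr).
\end{equation*}
This is the crucial observation: the prefix factor $\textbf{1}\,T_\mathbf{p}$ and the suffix factor $T_\mathbf{s}\,\x_0$ interact only through the $n$-dimensional state vector that one produces and the other consumes.

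Next I would package this separation as a matrix product. Setting $v_\mathbf{s} = T_\mathbf{s}\,\x_0 \in \R^{n}$, let $P$ be the $\Sigma^*\times n$ array whose $\mathbf{p}$-th row is the vector $\textbf{1}\,T_\mathbf{p}$, and let $Q$ be the $n\times\Sigma^*$ array whose $\mathbf{s}$-th column is $v_\mathbf{s}$. The entrywise identity above says precisely that $H^\mathbf{M} = PQ$, a factorization routed through $\R^n$. Since every column of $H^\mathbf{M}$ has the form $P\,v_\mathbf{s}$ with $v_\mathbf{s}\in\R^n$, the column space of $H^\mathbf{M}$ is contained in the image of the linear map represented by $P$, whose dimension is at most $n$; hence $\operatorname{rank}(H^\mathbf{M}) \le n$.

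The only point requiring care is that $H^\mathbf{M}$ is bi-infinite, so \emph{rank} must be read as the dimension of the linear span of its columns (equivalently, rows), and one must confirm that the elementary fact ``the rank of a product is bounded by the shared inner dimension'' still applies to such infinite arrays. I expect this definitional check, rather than any computation, to be the main (and only mild) obstacle. It is harmless precisely because the factorization passes through the finite space $\R^n$: the column span of $H^\mathbf{M}$ lies inside the at-most-$n$-dimensional image of $P$, so its dimension is bounded by $n$ regardless of the infinite prefix and suffix index sets.
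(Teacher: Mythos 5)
Your proposal is correct. The factorization $H^\mathbf{M}[\mathbf{p},\mathbf{s}] = (\text{prefix factor})\cdot(\text{suffix factor})$ through the $n$-dimensional state space, followed by the observation that the column span of a product routed through $\R^n$ has dimension at most $n$, is a complete and valid proof, and your handling of the bi-infinite index sets (rank as dimension of the column span, bounded by the image of a map into $\R^n$) is exactly the right definitional care. Note, however, that the paper itself does not prove this theorem; it cites it to Anderson and uses it as an external fact. The closest thing to a ``paper proof'' is the proof of its quantum analog (Theorem~\ref{theorem:Main}), which bounds $\operatorname{rank}(H^Q)\le N^2$ by writing $H^Q = CF$ as a product of two semi-infinite matrices with inner dimension $N^2$: rows are expanded via the state $\rho_{\mathbf{p}}$ reached after the prefix, columns via the sequence functions $f_m(\mathbf{s})$ of a fixed basis. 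Your argument is precisely the classical specialization of that strategy, so the two are methodologically identical. One small caveat: with the paper's ordering convention $T_\mathbf{a}=T_{a_t}\cdots T_{a_1}$, concatenation actually composes as $T_{\mathbf{ps}}=T_\mathbf{s}T_\mathbf{p}$ (the paper's displayed rule $T_{\mathbf{ab}}=T_\mathbf{a}T_\mathbf{b}$ is inconsistent with its own product ordering), so the clean separation is $H^\mathbf{M}[\mathbf{p},\mathbf{s}]=\bigl(\textbf{1}\,T_\mathbf{s}\bigr)\bigl(T_\mathbf{p}\,\x_0\bigr)$, with the prefix producing a state vector and the suffix a linear functional --- the mirror image of what you wrote. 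This swap does not affect your conclusion in any way, since the factorization still passes through $\R^n$ and the rank bound follows identically.
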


It has been proven \cite{huang}, that the equality in \textbf{Theorem}  \ref{theorem: Anderson} can be reached, i.e. there exist minimal order $n$ HMMs with $rank(H) = n$. If a HMM $M$ has order $n$ its finite dimensional distributions $D^{\mathbf{M}}_t$ are completely defined by the values of the sequence function $f^\mathbf{M}(\mathbf{a})$ for all sequences $\mathbf{a}$ of length at most $2n-1$~\cite{carlyle_paz, jaeger}.

 In general, not every finite-rank stochastic language has a HMM~\cite{vidyasagar}. Determining whether a stochastic language $L$ has finite rank is an undecidable problem \cite{sontag}. 

    \begin{table*}[ht]
        \begin{tabular}{c|l}
            \hline
            \hspace*{1.5mm}{$s$}\hspace*{1.5mm} &  \hspace*{1.5mm}Description  \\
            \hline
            0 &   \hspace*{1.5mm}Bear - Tendency down \\
            1 &   \hspace*{1.5mm}Bull - Tendency up \\
            2 &   \hspace*{1.5mm}Transition to Bear \\
            3 &   \hspace*{1.5mm}Transition to Bull \\
            \hline
        \end{tabular}
        \hfill
        \begin{tabular}{|c|c|c|c|c|}
            \hline
            \hspace*{1.5mm}{$s$}\hspace*{1.5mm} & \hspace*{3.5mm}{0}\hspace*{3.5mm} & \hspace*{3.5mm}{1}\hspace*{3.5mm}
            & \hspace*{3.5mm}{2}\hspace*{3.5mm} & \hspace*{3.5mm}{3}\hspace*{3.5mm}  \\
            \hline
            0 &  0.5&	0.1&	0.15&	0.25 \\
            1 &  0.1&	0.5&	0.25&	0.15 \\
            2 &  0.25&	0.15&	0.5&	0.1 \\
            3 &  0.15&	0.25&	0.1&	0.5 \\
            \hline
        \end{tabular}
        \hfill
        \begin{tabular}{|c|c|c|}
            \hline
            \hspace*{1.5mm}{$s$}\hspace*{1.5mm} & {Symbol \textbf{0}}& {Symbol \textbf{1}}  \\
            \hline
            0 &  0.8&	0.2 \\
            1 &  0.2&	0.8 \\
            2 &  0.4&	0.6 \\
            3 &  0.6&	0.4 \\
            \hline
        \end{tabular}
        \caption{Left: Market hidden states descriptions. Center: State transition probabilities. Right: Observed symbol probabilities.}
        \label{tab:hmm_example}
    \end{table*}

\begin{figure*}[ht]
        \centering
        \begin{minipage}{.4\textwidth}
            \centering
            \includegraphics[width=0.7\linewidth]{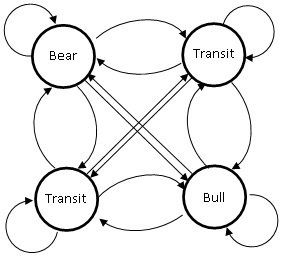}
        \end{minipage}
        \begin{minipage}{.59\textwidth}
            \centering
            \includegraphics[width=1.0\linewidth]{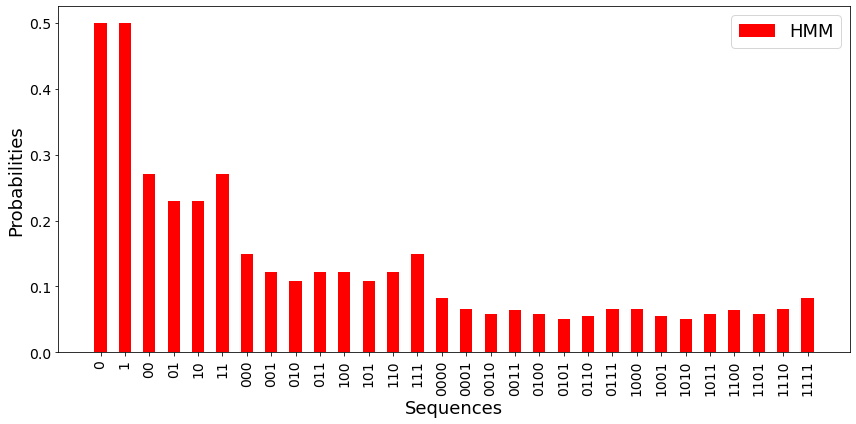}
        \end{minipage}
 
        \caption{Left: State transition graph. Right: Distribution of observed symbol sequences.}
        \label{fig:hmm_example}
    \end{figure*}

\begin{example}
    \label{example1}
    \normalfont
The evolution of asset prices on today's exchanges is driven by the interaction of ask and bid orders placed
on an electronic two-sided queue known as a \textit{Limit Order Book} (LOB) \cite{bouchaud_2018}.
One important statistic associated with the LOB is the \textit{mid-price}. The mid-price is the mean of lowest ask order price and highest bid order price at any given moment. In this example we present very simple HMM of the discrete-time mid-price direction of change.

    At any given moment, the symbol $\textbf{1}$ is observed if the price has increased, while the symbol $\textbf{0}$ is observed if the price has decreased or remained unchanged. It is assumed that the distribution of observed directions of change depends on the underlying state of the market, which is not directly observable. Let us consider a very simple situation, where the market can be in one of four states: bear, bull, transition to bear, or transition to bull, as listed in Table~\ref{tab:hmm_example}.
    The distribution of observation sequences is shown in \fig{hmm_example}.
    The generalized Hankel matrix associated with the model has a rank of 4 for sequences containing up to 20 symbols. Therefore, we conclude that the order of this model is 4.

\end{example}

\subsection{\label{subsubsection:qstates_qoperations}Quantum States and Quantum Operations}

We study quantum models of discrete-time joint stochastic processes. These models assume two processes: an underlying, hidden Markov process that evolves in discrete time according to linear stochastic dynamics, and a second, observable process that produces an observation at each time step by applying a stochastic map to the hidden state. To quantize this model, it is necessary to define quantum concepts  that correspond to its states, state evolution, observables, and the dynamics of the measurement process. 
The states of the process will be represented by the states of an $n$-qubit quantum system, where the associated complex Hilbert space $\H$ ~\cite{nielsen_chuang_2010} is a linear vector space with an inner product and dimension $N=2^n$. The vectors of $\H$ are denoted by $\ket{v}$. A complete description of the quantum system at any given moment in time is called a \textit{quantum state}. When a quantum state is precisely determined, it is a \textit{pure} state. The pure states are defined by \textit{rays} in the space $\H$. The ray is an equivalence class of vectors that differ by multiplication by a nonzero complex scalar. Vectors within the same ray represent the same pure quantum state. Every ray is represented by a vector $\ket{v}$ with unit norm $\braket{v} = 1$. This normalization ensures that the representative vector captures the essential geometric properties and direction of the quantum state.
To give a probabilistic interpretation of the pure states let's assume that $\left\{\ket{x}\right\}$ is an orthonormal basis in $\H$ and any pure state $\ket{v}$ can be represented as

\begin{equation*}
  \ket{v} = \sum_x a_x \ket{x}.
\end{equation*}

The coefficients $a_x$ are called amplitudes and the sum of the squares of their modules is normalized:
\begin{equation*}
   \braket{v} = \sum_x {\lvert a_x\rvert }^2 = 1
\end{equation*}
The square of the magnitude of the amplitude $a_x$ corresponds to the probability of observing the system in the classical state $\ket{x}$:
\begin{equation}
    P \bigl[\ket{x}\big | \ket{v}\bigr] = \vert\braket{x}{v}\vert^{2}=\bra{v}P_x\ket{v},
\end{equation}
where $P_x$ is the projector onto the subspace $\ket{x}$.

When the state of a quantum system is uncertain, it is described as a statistical mixture of pure states, each occurring with a specified probability. This ensemble is referred to as a \textit{mixed state}, which can be mathematically represented as:
\begin{equation}
\label {mixed state}
    \sum_v p_v \ket{v}, \quad \sum_v p_v = 1.
\end{equation}
Here, $p_v$ represents the probability of the system to be in the pure state $\ket{v}$, and the sum of all these probabilities is equal to 1.

The space of pure and mixed states forms a \textit{convex} subspace within the Hilbert space $\mathcal{H}$. The extreme points of this subspace are the pure states, while the mixed states are considered as points within this convex subspace.

If the system is in the mixed state (\ref{mixed state}), then the expected value of an observable $M$ from the ensemble is defined as:
\begin{equation}
     \langle M \rangle = \sum_v p_v \bra{v}M\ket{v}   =  \operatorname{tr} \bigl(M\rho \bigr),
\end{equation}
\noindent
where $\rho$ represents the \textit{density matrix} or the \textit{density operator} of the system:
\begin{equation}
    \label{eqn:pure_states_ensemble}
    \rho = \sum_v p_v \rho_v,
\end{equation}
\noindent
and $\rho_v$ denotes the density operator of a pure state $\ket{v}$:
\begin{equation*}
\rho_v = \ket{v} \bra{v}.
\end{equation*}
\noindent
The density operator encodes all the essential statistics of the ensemble  even when the number of its pure states is greater than $N$. Formally, a density operator $\rho$ is a bounded linear operator $\rho:\H \rightarrow \H$ with the following properties:
\begin{enumerate}
    \item $\rho$ is \textit{Hermitian}, i.e. $\rho = \rho^{\dagger}$.
    \item $\rho$ is a non-negative definite: $\bra{v}\rho\ket{v} \geq 0$, $\forall \ket{v} \in \H$.
    \item $\rho$ has unit trace: $\operatorname{tr}(\rho) = 1$.
    \item The eigenvalues $  \lambda_1,\, \cdots,\,\lambda_N  $ of $\rho$ form a probability distribution. 
\end{enumerate}
\noindent

The density operators of pure states satisfy the equation $\rho = \rho^2$ and they form a complex projective space. In other words, they are projectors onto the one-dimensional subspaces of $\mathcal{H}$.
We will denote the space of bounded linear operators on the Hilbert space $\H$ by $B(\H)$, the space of Hermitian operators  by $H(\H)$ and the space of corresponding density operators by $D(\H)$.
The space $H(\H)$ is a linear vector space, but $D(\H)$ is not.
A general linear combination of density operators is Hermitian, but its trace is not necessarily equal to one. The space of density operators $D(\H)$ forms an $N^2-1$ - dimensional convex space which embeds in $H(\H)$~\cite{Geometry_2017}.

The classical Hidden Markov Models use linear operators to describe the stochastic evolution of states. Similarly, in the context of discrete-time quantum systems, quantum state transformations are defined by linear operators known as quantum operations. A quantum operation, denoted by $\mathcal{O}$, is a linear map:
\begin{equation*}
    \O: B(\H) \rightarrow B(\H)
\end{equation*}
\noindent
with the following properties:

\begin{enumerate}
    \item $\O$ preserves convex combinations of density operators. If the map is applied to a stochastic ensemble of density operators, then the outcome is the same ensemble of the resulting densities.
    \item  $\O$ is trace preserving:
    \begin{equation*}
    \operatorname{tr}(\O\beta) = \operatorname{tr}(\beta), \forall \beta \in B(\H) 
    \end{equation*}
    This property ensures the conservation of probability in physical processes.
    \item $\O$ is a \textit{completely positive} map. It is positive: $\rho \geq 0 \rightarrow \O\rho \geq 0$, and it remains positive 
    under any tensor product extensions. In other words, $\left( \O  \otimes I_K \right)$ is positive for $\forall K > 0$.
\end{enumerate}
\noindent
These properties ensure that a quantum operation is a valid and consistent evolution of quantum states, preserving the probabilistic nature and positivity of density operators.

According to these properties a quantum operation is a \textit{Completely Positive Trace-Preserving} (CPTP) linear map. It is commonly assumed that the evolution of a discrete-time quantum system can be described by a quantum operation. In the context of quantum communication, a quantum operation is often referred to as a \textit{quantum channel}.

An example of quantum operation is the density operator evolution under unitary
transformation $U$:
\begin{equation}
    \label{eqn:UO}
    \O\rho = U \rho U^{\dagger}.
\end{equation}
The general process of extracting classical information from a quantum state or \textit{generalized measurement}, also can be described by quantum operation. The operation is a \textit{complete} set of Hermitian non-negative operators $\mathcal{M} = \{ E_e \}$ with
decomposition $E_e = M_e^{\dagger} M_e$: 

\begin{equation}
\label{eqn:mes_oper_general}
    \mathcal{M} \rho = \sum_e M_e \rho M_e^{\dagger}.
\end{equation}
\noindent

The operators satisfy the \textit{completeness} property $\sum_{e} E_e = I$~\cite{nielsen_chuang_2010}. Each operator  $E_e$ is associated with a classical measurement outcome $e$. If the system is in an ensemble of pure states $\rho$ each  measurement outcome realizes with probability
\begin{equation}
\label{eqn:mes_prob}
    P \left[e\vert\rho\right] = \operatorname{tr}(E_e \rho ).
\end{equation}
\noindent
\noindent
The normalized state after observing outcome $e$ is defined by
\begin{equation}
\label{eqn:mes_stat}
    \rho_e = \frac{ E_e \rho}{P \left[e\vert\rho \right] }.
\end{equation}
A special case of generalized measurement is the \textit{projective measurement}, where the Hermitian measurement operators $\{ E_e \}$ in addition to satisfying the completeness property are orthogonal, i.e. $ E_{e_1} E_{e_2} =  \delta_{e_1, e_2}E_{e_1} $. In this case the operator $\mathcal{M}$ has eigenstates $\{ \ket{e} \}$ and eigenvalues $\{ e \}$ which form an orthogonal basis in $\H$:

\begin{equation*}
    \mathcal{M} = \sum_e e E_e,
\end{equation*}
\noindent
where $E_e = \ket{e} \bra{e}$ are orthogonal projections on the corresponding eigenspaces. The outcome of the projective measurement corresponds to one of the eigenvalues $\{e\}$ associated with the measurement operators, and the post-measurement state of the system collapses to the corresponding eigenstate. In contrast, for generalized measurement operations, the post-measurement states are mixed states, and the measurement outcomes are distributions of outcomes. Another important distinction between projective and generalized measurements is that the successive orthogonal measurements will produce the same outcome, while this is not the case for the generalized measurement operations.

The evolution of the hidden state is modeled by the dynamics of a quantum system with Hilbert space $\H_S$, dimension $N$ and orthonormal basis $\{ \ket{s} \}$. 
This quantum system represents the underlying dynamics of the stochastic process and is referred to as the "state" or $S$-system. 

The emission process, which depends on the hidden state, is modeled by a second quantum system - \textit{emission} system or $E$-system, with Hilbert space $\H_E$, dimension $M$, and orthonormal basis $\{ \ket{e} \}$. The dimension $M$ of the emission system is equal or greater than the number of the observable symbols of the stochastic process. To represent a quantum system composed of two subsystems, we use a quantum operation that combines the components through a tensor product, creating a \textit{bipartite} quantum system:

\begin{equation}
     \rho_S \mapsto \rho_S \otimes \rho_E = \rho_{SE},
\end{equation}
\noindent
where the composed Hilbert space is
\begin{equation}
    \H_{SE} = \H_S \otimes \H_E.
\end{equation}
\noindent
Here the state $\rho_{SE}$ is a \textit{tensor product state}, i.e. initially the subsystems are not entangled. A further assumption is that the initial state of the emission component is a fixed pure state $\rho_E = \ket{e_0} \bra{e_0}$, where $\ket{e_0} \in \{ \ket{e} \}$.

In order to describe one of the components of a bipartite system while disregarding the other we use a quantum operation known as \textit{partial trace} ~\cite{nielsen_chuang_2010}. The partial trace operation is a linear transformation that maps the density operator of the bipartite system to the \textit{reduced density} operator of one of the components:

\begin{equation}
\label{eqn:partial_trace}
    \begin{split}
        \O\rho_{SE} & = \operatorname{tr}_E \left( \rho_{SE} \right) \\
        & = \operatorname{tr}_E \left( \rho_S \otimes \rho_E \right) \\
        & = \sum_e \left(I_N \otimes \bra{e} \right)\left( \rho_S \otimes \rho_E \right) \left( I_N \otimes \ket{e} \right)\\
        & = \rho_S 
    \end{split}
\end{equation}

The partial trace operation in (\ref{eqn:partial_trace}) is applied to a non-entangled bipartite system and therefore the partial density of the state system is independent from the emission system. 

To model joint stochastic processes a unitary operation (\ref{eqn:UO}) is applied to entangle the subsystems and transfer information from the hidden state to the observable:
\begin{equation*}
    \O_U \rho_{SE} = U \rho_{SE} U^{\dagger}.
\end{equation*}
The evolution of the hidden state is described by tracing out the emission component (\ref{eqn:partial_trace}):

\begin{widetext}
\begin{align}
    \label{eqn:op_sum}
    \begin{split}
        \operatorname{tr}_E \left( \O_U \rho_{SE} \right) & = \operatorname{tr}_E \left(U \rho_{SE} U^{\dagger} \right)  = \operatorname{tr}_E \left(U \left( \rho_S \otimes \ket{e_0} \bra{e_0} \right) U^{\dagger} \right) \\
        & = \sum_e I_N \otimes \bra{e} \left[ U \left( (I_N \otimes \ket{e_0}) \rho_S (I_N \otimes \bra{e_0}) \right)
        U^{\dagger} \right] I_N \otimes \ket{e}.
    \end{split}
\end{align}
\end{widetext}

The equation (\ref{eqn:op_sum}) defines  \textit{operator-sum representation} of the quantum operation $\mathcal{T}$ as follows:
\begin{equation}
    \label{eqn:t_u}
    \mathcal{T}\rho_S = \sum_e K_e \rho_S K_e^{\dagger}.
\end{equation}
where the operators $\{K_e\}$, known as \textit{Kraus operators}, act on the state system $\rho_S$ and are defined as follows:
\begin{equation}
    \label{eqn:kraus_op}
    K_e = \left( I^N \otimes \bra{e} \right) U \left( I^N \otimes \ket{e_0} \right).
\end{equation}
\noindent
The Kraus operators depend on the unitary $U$, and the arbitrary selected orthonormal basis  $\{ \ket{e} \}$ of the emission system. Since any unitary transformation of a basis is also an orthonormal basis, the operator-sum representation is defined up to a unitary transformation. A quantum operation can be represented by different number of Kraus operators but there is a minimum. The minimal integer $r$ for which the operation $\mathcal{T}$ has decomposition with $r$ operators is called \textit{Kraus rank} of the operation $\mathcal{T}$ denoted by $\operatorname{rank}(\mathcal{T})$. The Kraus rank is equal to the rank of a positive linear operator known as \textit{Choi matrix} \cite{Cho75} of the operation $\mathcal{T}$:   
\begin{equation}
    \label{eqn:operator_choi}
    J_{\mathcal{T}} =  \sum_{i,j=1}^{N} \ket{i}\bra{j}\otimes\mathcal{T}(\ket{i}\bra{j})  .
\end{equation}
\noindent
The Kraus rank of a quantum operation ranges from 
$1$ for unitary transformations to $N^2$ for completely depolarizing operations. Each Kraus operator has
$N^2$ complex parameters, and all operators must satisfy the completeness constraint. Since the Kraus representation is unique up to a unitary transformation, the total number of free real parameters for a quantum operation is $N^4-N^2$. The set of quantum operations is a convex set with extreme points defined by linearly independent Kraus operators $\{K_e\}$ with maximal rank $N$~\cite{Cho75}. Every quantum operation can be represented as convex combination of  $N^2$ extreme operations. 

For every Kraus operator $K_e$ we define a quantum operation $T_e$ acting on the state $\rho_S$ as follows:
\begin{equation}
    \label{eqn:operator_v}
    T_e \rho_S = K_e \rho_S K_e^{\dagger}.
\end{equation}
\noindent
The operations $T_e = K_e^{\dagger}K_e $ are Hermitian and positive. They satisfy the completeness property $\Sigma_e T_e = I^N$, since the quantum operation $\mathcal{T}$ defined in (\ref{eqn:t_u}) is trace-preserving for all states $\rho_S$:
\begin{equation}
    \operatorname{tr}(\mathcal{T}\rho_S) = \operatorname{tr}\left(\sum_e T_e \rho_S \right)=1.
\end{equation}
\noindent
The operators $\left\{T_e\right\}$ define a generalized measurement  or a \textit{Positive Operator-Valued Measure} (POVM) operation on the state subsystem $S$. The  probability distribution of the measurement outcomes is reflected in the post-measurement density operator $\mathcal{T}\rho_S$ (\ref{eqn:t_u}). Each Kraus operator corresponds to a measurement outcome, and the set of Kraus operators forms a complete set of operators that span the space of possible measurement outcomes. Therefore, if the Kraus operators are linearly independent, then the number of measurement outcomes is equal to the Kraus rank.

We have shown that when an entangled bipartite system is in a pure state, the states of its individual components exhibit characteristics similar to mixed states when observed independently or described using the partial trace operation. This property of the composite system is important as it allows the use of density operators as models of classical stochastic states, despite the underlying pure quantum nature of the system.

A mixed state of one of the components in an entangled bipartite pure state can also be defined by performing a projective measurement on the other component. The measurement outcome on one component provides information about the state of the other component due to the entanglement-related correlation between the components. 
Let $\left\{\ket{e}\right\}$ be an orthonormal basis of the emission component. The corresponding set of orthogonal projectors is:
\begin{equation*}
    \{ M_e = \ket{e} \bra{e}, e = 0, \ldots, M-1 \}.
\end{equation*}
\noindent
We can extend this set to operators $\{ P_e \}$ acting on the bipartite system: 
\begin{equation*}
    P_e = I_N \otimes M_e.
\end{equation*}
\noindent
The projective measurement operation on the full system is defined as (\ref{eqn:mes_oper_general}):

\begin{equation*}
    \mathcal{M} (\rho_{SE}) = \sum_e P_e \rho_{SE} P_e^{\dagger}.
\end{equation*}
\noindent
The measurement outcomes have probabilities defined in  (\ref{eqn:mes_prob}): 
\begin{equation*}
    P \left[ e \mid U \rho_{SE} U^{\dagger} \right] = \operatorname{tr} \left( P_e U (\rho_S \otimes \rho_E) U^{\dagger}
    P_e^{\dagger} \right).
\end{equation*}
\noindent
These probabilities depend on the Kraus operators defined in (\ref{eqn:kraus_op}), and can be expressed in
terms of the measurement basis $\{ \ket{e} \}$:
\begin{widetext}
\begin{align}
\label{eqn:tr_Vrho}
    \begin{split}
        P \left[ e \mid U \rho_{SE} U^{\dagger} \right] & = \operatorname{tr} \left( P_e U (\rho_S \otimes \rho_E) U^{\dagger}
        P_e^{\dagger} \right) \\
        & = \operatorname{tr} \left( I_N \otimes \bra{e} \left[ U \left( (I_N \otimes \ket{e_0}) \rho_S (I_N \otimes \bra{e_0}) \right)
        U^{\dagger} \right] I_N \otimes \ket{e} \right) \\
        & = \operatorname{tr} \left( K_e \rho_S K_e^{\dagger} \right) \\
        & = \operatorname{tr} \left( T_e \rho_S \right)
    \end{split}
\end{align}
\end{widetext}
\noindent

We can summarize that in a bipartite system $\rho_{SE} = \rho_S \otimes \rho_E$ the operation of entanglement of the components, followed by tracing out of one component, e.g. $\rho_E$, has the same operator-sum representation as a projective measurement of $\rho_E$ and "discarding" the outcome.  Both cases are equivalent to a POVM operation on $\rho_S$ as defined in (\ref{eqn:operator_v}):
\begin{equation}
    \begin{split}
    \label{eqn:povm_equality}
    \mathcal{T} \rho_S & =  \sum_e \left( P_e U (\rho_S \otimes \rho_E)           U^{\dagger}P_e^{\dagger} \right) \\ 
        & =\sum_e K_e \rho_s K_e^{\dagger} = \sum_e T_e \rho_S \\ 
        & =\mathcal{M}\rho_S.
    \end{split}
\end{equation}

The equality in (\ref{eqn:povm_equality}) provides two equivalent approaches for modeling the joint process of a state evolution and dependent symbol emission. It can be represented either as a POVM operation on the state system or as a unitary transformation  $U$ entangling the full system followed by a projective measurement on the emission component. In both cases, the state of the state system $S$ becomes a mixed state, representing a stochastic ensemble that averages all possible post-measurement states. This behavior reflects the characteristics of classical stochastic generators and allows the quantum model to reflect the probabilistic nature of the joint stochastic process.

\section{\label{section:QHMM}Quantum Hidden Markov Models}

The POVM operations (\ref{eqn:operator_v}) as described in Section~\ref{subsubsection:qstates_qoperations} define a joint stochastic process
of quantum state transitions and symbols emission.
The quantum operation acting on a quantum state resembles the classical observable operators (\ref{classic observable operators}) transforming a classical stochastic state as defined in Section~\ref{subsubsection:stochastic_and_hmms}.
These similarities provide the foundation for the following quantum hidden Markov model (QHMM) definition.

\begin{definition}[Quantum Hidden Markov Model~\cite{monras2011hidden}]
    \label{def_qhmm}
    A \textit{quantum} HMM (QHMM) $\mathbf{Q}$ over an  $N$-dimensional Hilbert space $\H$ is a 4-tuple:
    \begin{equation}
    \label{qhmm}
        \mathbf{Q} = \{ \Sigma,\, \H,\, \mathcal{T}=\{ T_a \}_{a \in \Sigma},\, \rho_0 \},
    \end{equation}
    where
    \begin{itemize}
    \item $\Sigma$ is a finite alphabet of observable symbols. 
    \item $\H$ is an $N$-dimensional Hilbert space with space of density operators $D(\H)$.
    \item $\mathcal{T}$ is a CPTP map (quantum channel) $\mathcal{T}: D(\H) \rightarrow D(\H)$. 
    \item $\{ T_a \}_{a \in \Sigma}, \sum_{a \in \Sigma}T_a^{\dagger}T_a=I_N$ is an operator-sum representation of $\mathcal{T}$ in terms of a complete set of Kraus operators. 
    \item $\rho_0$ is an initial state, $\rho_0 \in D(\H) $.
    \end{itemize}
\end{definition}

When the operation $\mathcal{T}=\{ T_a \}_{a \in \Sigma}$ is applied to the state $\rho \in D(\H)$, any symbol $a \in \Sigma$ will be observed/emitted with probability (\ref{eqn:mes_prob}):
\begin{equation*}
    P \bigl[ a  \vert \rho \bigr] = \operatorname{tr}(T_a \rho)
\end{equation*}
\noindent
and the system's state will become (\ref{eqn:mes_stat}):
\begin{equation*}
    \rho_a = \frac{T_a \rho}{P\bigl[ a \vert \rho \bigr]}.
\end{equation*}

If the operation $\mathcal{T}$ is applied $t$ times starting at the initial state the model will emit a sequence of symbols $\mathbf{a} = a_1 \ldots a_t$ with
probability
\begin{equation}
    \label{eqn:seq_probability}
    P\bigl[ \mathbf{a} \vert \rho_0 \bigr] = \operatorname{tr}(T_{\mathbf{a}} \rho_0).
\end{equation}
\noindent
where
\begin{equation}
\label{string T}
T_{\mathbf{a}}=T_{a_t} \ldots T_{a_1}
\end{equation}
\noindent

For each QHMM $\mathbf{Q}$ the equation (\ref{eqn:seq_probability}) defines a \textit{sequence function} $f^{\mathbf{Q}}$ and corresponding Hankel matrix (\ref{eqn:hankel matrix}) $H^{\mathbf{Q}}$ as follows:
\begin{equation}
    \label{eqn:qhmm_sequence_function}
    f^{\mathbf{Q}} (\mathbf{a}) = P\bigl[ \mathbf{a} \vert \rho_0 \bigr], \forall \mathbf{a} \in
    \Sigma^{*}, 
\end{equation}
\begin{equation}
    \label{eqn:qhmm_hankel_matrix}
    {H^\mathbf{Q}} \left [ \textbf{p},\textbf{s} \right ] = f^\mathbf{Q} (\textbf{p}\textbf{s}), \forall\textbf{p},\textbf{s}\in {\Sigma}^{*}.
\end{equation}
The sequence function (\ref{eqn:qhmm_sequence_function}) defines a distribution $D^{\mathbf{Q}}_t$ over the sequences of length $t$ for $\forall t>0$:
\begin{equation}
\label{eqn: D_M^T quantum}
    D^{\mathbf{Q}}_t =  \bigl\{ {f^\mathbf{Q}}(\textbf{a}) : \textbf{a} \in {\Sigma}^{t}  \bigr\}
\end{equation}
\noindent
Therefore, every QHMM $\mathbf{Q}$ defines a \textit{stochastic process language} $\mathbf{L^\mathbf{Q}}$ (\ref{eqn:stochstic language distribution_0}) over the set of finite sequences $\Sigma^{*}$.

The eigenvector $\rho^{*}$ of the quantum operation $\mathcal{T}$ with eigenvalue 1, normalized to represent a
distribution, is the \textit{steady state} of the model $\textbf{Q}$:
\begin{equation}
\label{steady_state}
    \rho^{*} = \mathcal{T} \rho^*.
\end{equation}
The dimension of the Hilbert space $dim(\H) = N$ is called \textit{order} of the quantum model $\mathbf{Q}$.

Any iteration of the quantum operation $\mathcal{T}$:
\begin{equation}
\label{t_times}
    \rho_{t} = \mathcal{T}^{t} \rho_0, t \geq 0,
\end{equation}
\noindent
defines a mixed state $\rho_{t}$, which corresponds to all the stochastic paths for generation of a symbol sequence with length $t$.

The following theorem, presented in~\cite{monras2011hidden}, shows that QHMMs can generate the class of languages generated by
the classic HMMs.

\begin{theorem}[Monras~\cite{monras2011hidden}]
    For every classic HMM of order $N$ there exists a QHMM of the same order which generates the same
    stochastic process language.
\end{theorem}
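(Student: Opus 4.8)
The plan is to give an explicit construction. Given a classical HMM $\mathbf{M} = \{\Sigma, S, A, B, \mathbf{x}_0\}$ of order $N$, I would build a QHMM $\mathbf{Q}$ on an $N$-dimensional Hilbert space $\H$ whose sequence function equals $f^{\mathbf{M}}$. The guiding idea is that a classical distribution over $N$ hidden states embeds as a \emph{diagonal} density operator. Fixing an orthonormal basis $\{\ket{1},\ldots,\ket{N}\}$ of $\H$ identified with the states $S$, I would map each stochastic vector $\mathbf{x}=(x_1,\ldots,x_N)^{\top}$ to $\iota(\mathbf{x})=\sum_i x_i\ket{i}\bra{i}=\diag(\mathbf{x})$, which is a genuine density operator precisely because $\mathbf{x}$ is a probability vector, and set $\rho_0=\iota(\mathbf{x}_0)$.

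Next I would promote each classical observable operator $T_a=AB_a$, with entries $(T_a)_{ij}=A_{ij}B[a,j]$, to a quantum operation by choosing Kraus operators $K_{a,i,j}=\sqrt{(T_a)_{ij}}\,\ket{i}\bra{j}$ and defining $T_a\rho=\sum_{i,j}K_{a,i,j}\,\rho\,K_{a,i,j}^{\dagger}$. A short computation shows this operation leaves the diagonal subspace invariant and restricts there to the classical operator: since $\bra{j}\diag(\mathbf{x})\ket{j}=x_j$, each term collapses to $(T_a)_{ij}x_j\ket{i}\bra{i}$, so $T_a(\diag(\mathbf{x}))=\diag(T_a\mathbf{x})$ (the $T_a$ on the right being the classical matrix). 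In particular the quantum emission probability $\operatorname{tr}(T_a\rho_0)=\mathbf{1}T_a\mathbf{x}_0$ reproduces the classical one in \eqref{classic_observable_prob}.

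The key admissibility check is that $\mathcal{T}=\sum_{a\in\Sigma}T_a$ is CPTP. Complete positivity is automatic from the operator-sum form, so the substantive step is trace preservation, i.e. the completeness relation $\sum_{a}\sum_{i,j}K_{a,i,j}^{\dagger}K_{a,i,j}=I_N$. Computing $K_{a,i,j}^{\dagger}K_{a,i,j}=(T_a)_{ij}\,\ket{j}\bra{j}$ and summing, the coefficient of $\ket{j}\bra{j}$ is $\sum_a\sum_i A_{ij}B[a,j]=\big(\sum_i A_{ij}\big)\big(\sum_a B[a,j]\big)$, which equals $1$ by column-stochasticity of $A$ ($\sum_i A_{ij}=1$) together with the normalization of the emission probabilities out of each state ($\sum_a B[a,j]=1$). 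This is the one place where the stochasticity hypotheses of Definition~\ref{def_chmm} are genuinely used, and it is the step I expect to demand the most careful index bookkeeping.

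Finally I would match the full sequence functions. Because each $T_a$ preserves the diagonal subspace and restricts there to the classical $T_a$, an induction on the word length $t$ gives $T_{a_t}\circ\cdots\circ T_{a_1}(\rho_0)=\diag(T_{a_t}\cdots T_{a_1}\mathbf{x}_0)$; taking the trace and using $\operatorname{tr}(\diag(\mathbf{v}))=\mathbf{1}\mathbf{v}$ yields $f^{\mathbf{Q}}(\mathbf{a})=\operatorname{tr}(T_{\mathbf{a}}\rho_0)=\mathbf{1}T_{\mathbf{a}}\mathbf{x}_0=f^{\mathbf{M}}(\mathbf{a})$ for every $\mathbf{a}\in\Sigma^{*}$. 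Identical sequence functions give identical Hankel matrices and hence the same stochastic process language, and the constructed $\mathbf{Q}$ lives on an $N$-dimensional Hilbert space, so it has the same order $N$. There is no deep obstacle here: the entire content lies in selecting the diagonal embedding and verifying the completeness relation.
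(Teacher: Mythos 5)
Your construction is essentially identical to the paper's own proof: the same Kraus operators $\sqrt{(T_a)_{ij}}\,\ket{i}\bra{j}$, the same diagonal embedding of stochastic vectors as density operators, and the same product-of-operators argument showing $\operatorname{tr}(T_{\mathbf{a}}\rho_0)=\mathbf{1}T_{\mathbf{a}}\mathbf{x}_0$ for every word. If anything, yours is slightly more careful, since you verify the completeness relation explicitly and take $\rho_0=\diag(\mathbf{x}_0)$ for an arbitrary initial distribution, whereas the paper fixes $\rho_0=\ket{0}\bra{0}$ and leaves the trace-preservation check implicit.
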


The proof is by construction of a QHMM $\mathbf{Q}$ equivalent to any given classic HMM. Let's consider a classic HMM $\mathbf{C}$, defined by alphabet $\Sigma$ of $m$ observable symbols, space of $n$ unobservable (hidden) process states, initial state distribution $\mathbf{x_0}$ and a set of observable operators $\mathbf{O} = \{ O_a | a \in \Sigma\}$ as defined in (\ref{classic observable operators}).
We can define an equivalent (generating the same stochastic language) QHMM $Q$ with Hilbert space $\H$ with dimension $N=n$ as follows:
    \begin{equation*}
        \mathbf{Q} =  \bigl\{ \Sigma, \H, \mathcal{T} = \{ T_a \}_{a \in \Sigma}, \rho_0  \bigr\},
    \end{equation*}
    where 
    \begin{enumerate}
        \item The initial density operator is $\rho_0 = \ket{0}\bra{0}$.  
        \item Model's states are the the diagonal density operators:
            \begin{equation*}
                D(\H): \bigl\{ \rho_i = \ket{i}\bra{i} | i \in [0, N-1] \bigr\}.
            \end{equation*}
        \item The components of the quantum operation $\mathcal{T}$ are defined as
            \begin{equation}
            \label{theorem_2_kraus}
                T_a = \sum_{ij} K^{ij}_a K^{ij\dagger}_a
            \end{equation}
        where 
            \begin{equation*}   
                K^{ij}_a = \sqrt{O_a[i,j]} \ket{i} \bra{j}.
            \end{equation*}
    \end{enumerate}
    We can verify using (\ref{theorem_2_kraus}),(\ref{classic_observable_prob}), and (\ref{eqn:seq_probability}), that for every sequence $\mathbf{a} \in \Sigma^{*}$ the classical and quantum HMMs define equal probability:
    \begin{equation*}
        P \bigl[\mathbf{a} \vert \mathbf{C}, s_0 \bigr] = \textbf{1}O_{\mathbf{a}} \mathbf{x_0} = \operatorname{tr}(T_{a_l} \ldots T_{a_1} \rho_0) = P
        \bigl[\mathbf{a} \vert \mathbf{Q}, \rho_0 \bigr]
    \end{equation*}
    and therefore they define the same stochastic language.

The QHMM construction above simulates a classical HMM of order $n$ in a Hilbert space with dimension $N=n$. It is important to find out whether the quantum framework allows more efficient representation of the stochastic process languages. 

Let's assume that a classical HMM $\mathbf{M}$ and a QHMM $\mathbf{Q}$ define the same stochastic process language $L$.
If $H^L$ is the generalized Hankel matrix (\ref{eqn:hankel matrix}) associated with $L$, then it is proven \cite{anderson1999realization} that $\operatorname{rank}(H^L)$ provides lower bound of the order of the classic models defining $L$. The bound can be reached for the classic HMM, i.e. if the stochastic language is defined by a classic HMM, then there exists minimal classic HMM defining $L$ with exactly $\text{rank}(H^L)$ states. 
In order to prove similar statements for the QHMM, since the the states are represented by density operators, it is necessary to estimate the size of a basis consisting of orthonormal density operators. The space of density operators $D(\H)$ is a closed and bounded convex space with dimension $N^2-1$. The extreme point of $D(\H)$ are the density operators of the pure states. Extensive study of the geometry of $D(\H)$ is presented for example in \cite{Geometry_2017}. Here we will show that the space of Hermitian matrices $H(\H)$ is spanned by a set of $N^2$ density operators. We will prove the following proposition.

\begin{proposition}
\label{proposition1}
For any Hilbert space $\H$ with dimension $N$ there is a basis of density operators $B_D \subset D(\H)$ with size $N^2$ which spans the space of Hermitian operators $H(\H)$.
\end{proposition}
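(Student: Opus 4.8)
The plan is to prove the proposition by an explicit construction: I will exhibit $N^2$ pure-state density operators and show directly that they form a basis of the real vector space $H(\H)$, which is exactly what the preceding discussion requires. First I would record the standard dimension count, namely that $H(\H)$ is a real vector space of dimension $N^2$, with a convenient real basis given by the $N$ diagonal matrix units $\ket{i}\bra{i}$, the $\binom{N}{2}$ symmetric operators $\ket{i}\bra{j}+\ket{j}\bra{i}$, and the $\binom{N}{2}$ antisymmetric operators $i(\ket{j}\bra{i}-\ket{i}\bra{j})$, for $0\le i<j\le N-1$. Since $N+2\binom{N}{2}=N^2$, it suffices to produce $N^2$ density operators whose span contains these $N^2$ generators.

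Second, I would take the candidate operators to be rank-one projectors onto suitably chosen pure states. For the diagonal directions set $\rho_i=\ket{i}\bra{i}$. For each pair $i<j$ I would introduce the two normalized superpositions
\begin{equation*}
    \ket{u_{ij}}=\tfrac{1}{\sqrt 2}\bigl(\ket{i}+\ket{j}\bigr),\qquad \ket{v_{ij}}=\tfrac{1}{\sqrt 2}\bigl(\ket{i}+i\ket{j}\bigr),
\end{equation*}
and define $\sigma_{ij}=\ket{u_{ij}}\bra{u_{ij}}$ and $\tau_{ij}=\ket{v_{ij}}\bra{v_{ij}}$. Each of these $N+2\binom{N}{2}=N^2$ operators is Hermitian, positive semidefinite, and of unit trace, hence lies in $D(\H)$.

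Third, I would verify the spanning claim by recovering the standard basis from the projectors. Expanding gives $2\sigma_{ij}-\rho_i-\rho_j=\ket{i}\bra{j}+\ket{j}\bra{i}$ and $2\tau_{ij}-\rho_i-\rho_j=i(\ket{j}\bra{i}-\ket{i}\bra{j})$, while the $\rho_i$ already supply the diagonal units. Thus every generator of $H(\H)$ is a real linear combination of the constructed density operators, so this set spans $H(\H)$. Because the set has exactly $N^2$ elements and spans an $N^2$-dimensional space, it is automatically linearly independent, and therefore constitutes a basis $B_D\subset D(\H)$ of the required size, completing the proof.

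The step I expect to be the main, though still modest, obstacle is the bookkeeping in the third step: one must confirm that the symmetric and antisymmetric off-diagonal generators are correctly isolated from the projectors and that no hidden relation collapses the rank below $N^2$. This is settled cleanly by the dimension-counting argument — spanning together with the exact cardinality $N^2$ forces linear independence — so no explicit Gram matrix or determinant computation is needed. (As a sanity check one could instead argue abstractly that $D(\H)$ is a full-dimensional convex body inside the affine hyperplane $\{\Tr=1\}$, whose linear span is all of $H(\H)$ since the hyperplane misses the origin; a basis of $N^2$ density operators can then be extracted, but the explicit construction above is more concrete and self-contained.)
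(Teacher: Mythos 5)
Your proof is correct and is essentially the same construction as the paper's: both start from the matrix-unit Hermitian basis (diagonal units plus symmetric and antisymmetric off-diagonal combinations, counted as $N+2\binom{N}{2}=N^2$) and realize the off-diagonal directions as density operators by mixing them with the diagonal projectors $\ket{i}\bra{i},\ket{j}\bra{j}$ and normalizing the trace, your $\sigma_{ij},\tau_{ij}$ being exactly the operators the paper intends. If anything, your version is tighter: presenting the elements as rank-one projectors onto $(\ket{i}+\ket{j})/\sqrt{2}$ and $(\ket{i}+\mathrm{i}\ket{j})/\sqrt{2}$ makes positivity manifest and silently corrects a slip in the paper's displayed set $B_D$, whose off-diagonal elements $b^{ij}+b^{ji}+\tfrac{1}{2}\bigl(b^{ii}+b^{jj}\bigr)$ have unit trace but eigenvalue $-\tfrac{1}{2}$ on the $\{i,j\}$ block (the factor $\tfrac{1}{2}$ must multiply the entire operator, as in your $\sigma_{ij}$), and your spanning-plus-cardinality argument for linear independence is more explicit than the paper's closing remark.
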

\begin{proof}
The computational basis for a Hilbert space $\H$ with dimension $N$
consists of $N$ orthonormal vectors:
\begin{equation*}
\bigl\{ \ket{i}: \bra{i}\ket{j} = \delta_{ij}, i \in [0,N-1], j \in [0,N-1]\bigr\}
\end{equation*}
We will define the basis set $B_H$ of Hermitian operators as follows:
\begin{equation*}
\begin{split}
B_H = & \bigl\{ \ket{i}\bra{i}:i \in [0,N-1]\bigr\} \cup \\
      &  \bigl\{\ket{i}\bra{j}+\ket{j}\bra{i}: 0 \le i < j \le N-1 \bigr\} \cup \\  
      &  \bigl\{ \operatorname{i}(\ket{i}\bra{j}-\ket{j}\bra{i}) : 0 \le i < j \le N-1\bigr\},
\end{split}
\end{equation*}
where $\operatorname{i}$ is the imaginary unit. 

\noindent
It is easy to verify that the set $B_H$ contains $N^2=N + {N \choose 2} + {N \choose 2}$ linearly independent operators which span the space $H(\H)$. 
The set contains three groups of Hermitian operators. The first group contains diagonal operators, projectors onto each computational basis state 
$\ket{i}$.The second group contains symmetric operators representing the off-diagonal components of Hermitian matrices. The third group contains antisymmetric operators representing the imaginary off-diagonal  parts of Hermitian matrices.

We will define a basis of density operators $B_D$ which span $H(\H)$ by  transforming the elements of $B_H$ (which are not density operators) to density operators using operation of element-wise summation and real scalar multiplication. These operations will change  the Hermitian basis $B_H$ to a density basis  $B_D$ as follows: 
\begin{itemize}
\item The set of diagonal operators  $\ket{i}\bra{i}$ which are density operators will not be changed.
$$\bigl\{\ket{i}\bra{i}:i \in [0,N-1]\bigr\}$$
\item Each non-diagonal symmetric operator   $\ket{i}\bra{j}+\ket{j}\bra{i}$ is traceless. To ensure that the resulting operator has a unit trace, we sum it with the corresponding diagonal operators and normalize the trace multiplying by the  real scalar $\frac{1}{2}$:
$$\frac{1}{2}(\ket{i}\bra{j}+\ket{j}\bra{i} + \ket{i}\bra{i}+\ket{j}\bra{j})$$

These operators are symmetric with unit trace, and they are positive semi-definite because they can be directly expressed as scaled outer products  $\frac{1}{2}\ket{v}\bra{v}$, where $\ket{v}$ is a linear combination of computational basis vectors:
$$\ket{v}=\ket{i}+\ket{j}$$
Therefore they are density operators.
\item Each non-diagonal anti-symmetric operator   $\operatorname{i}(\ket{i}\bra{j}-\ket{j}\bra{i})$ is traceless and  we also sum it with the corresponding diagonal operators and normalize the trace multiplying by the  real scalar $\frac{1}{2}$:
$$\frac{1}{2}(\operatorname{i}(\ket{i}\bra{j}-\ket{j}\bra{i}) + \ket{i}\bra{i}+\ket{j}\bra{j})$$

These operators are anti-symmetric with unit trace, and they are positive semi-definite because they can be directly expressed as scaled outer products  $\frac{1}{2}\ket{v}\bra{v}$, where $\ket{v}$ is  linear combination of computational basis vectors:
$$\ket{v}=\operatorname{i}\ket{i}-\ket{j}$$
These operators are density operators. 
\end{itemize}

\noindent
The derived set of density operators then becomes:
\begin{equation}
\label{eqn: B(D) basis}
\begin{split}
B_D  = &  \bigl\{ \ket{i}\bra{i}:0 \le i \le N-1\bigr\} \cup \\
    &  \bigl\{\frac{1}{2}(\ket{i}\bra{j}+\ket{j}\bra{i} + \ket{i}\bra{i}+\ket{j}\bra{j})\bigr\} \cup \\ 
    &  \bigl\{\frac{1}{2}(\operatorname{i}(\ket{i}\bra{j}-\ket{j}\bra{i}) + \ket{i}\bra{i}+\ket{j}\bra{j})\bigr\},\\
    & 0 \le i < j \le N-1
\end{split}
\end{equation}

The operators in $B_D$ are linearly independent because:
\begin{enumerate}
    \item The density operators are linearly independence within each group (diagonal, symmetric, and anti-symmetric) because they act on distinct pairs of basis vectors $\ket{i}, \ket{j}$.
    \item The density  operators belonging to different groups  are linearly independent because they act on distinct subspaces of the Hilbert space:
    \begin{itemize}
        \item The diagonal operators $\ket{i}\bra{i}$ act only on the diagonal elements of the matrix.
        \item The symmetric and anti-symmetric operators act on both the diagonal and off-diagonal elements.
        \item The symmetric operators are real, while the anti-symmetric operators always have imaginary component. A real operator cannot be written as a linear combination of operators with imaginary components  using real scalars.
    \end{itemize}
\end{enumerate}
Since the set $B_D$ contains $N^2$ linearly independent density operators it spans the space of Hermitian operators with dimension $N^2$ .
\end{proof}

The existence of a set of linearly independent density operators with size $N^2$ that span the space $H(\H)$ allows to estimate an upper bound on the rank of the stochastic process languages described by QHMMs. 
The following theorem provides a necessary condition for a finite-rank stochastic language to be generated by a QHMM.

\begin{theorem}
\label{theorem:Main1}
Let \(H\) be the infinite Hankel matrix associated with a quantum hidden Markov model (QHMM) \(\mathbf{Q}\) of order \(N\). Then:
\[
\operatorname{rank}(H) \leq N^2.
\]
\end{theorem}

\begin{proof}
Consider a QHMM \(\mathbf{Q}\) as defined in~\eqref{qhmm}, and let \( B_D = \{ b_m \}_{m=1}^{N^2} \) be a basis of the Hermitian operator space \( \mathcal{H}(\mathcal{H}) \), consisting of density operators \( b_m \in D(\mathcal{H}) \) as established in \textbf{Proposition}~\ref{proposition1}.

For each operator \( b_m \in B_D \), define the sequence function \( f_m: \Sigma^* \to \mathbb{R} \) as:
\begin{equation}
\label{eqn: base sequence function}
f_m(\mathbf{a}) = \operatorname{tr}(T_{a_t} \cdots T_{a_1} b_m), \quad \forall \mathbf{a} \in \Sigma^t, \ t \geq 0.
\end{equation}
We will show that the functions \( \{ f_m \} \) generate the column space of the Hankel matrix \( H \).
For any prefix and suffix sequences:
\[
\mathbf{p} = p_1 \cdots p_r, \quad \mathbf{s} = s_1 \cdots s_t,
\]
the entries of the Hankel matrix \( H \) are defined by~\eqref{eqn:seq_probability}:
\begin{equation*}
\begin{split}
H[\mathbf{p}, \mathbf{s}] &= \operatorname{tr}\bigl(T_{s_t} \cdots T_{s_1} T_{p_r} \cdots T_{p_1} \rho_0\bigr) \\
&= \operatorname{tr}\bigl(T_{s_t} \cdots T_{s_1} \rho_{\mathbf{p}}\bigr),
\end{split}
\end{equation*}
where we define the density operator:
\[
\rho_{\mathbf{p}} = T_{p_r} \cdots T_{p_1} \rho_0.
\]
Since \( \rho_{\mathbf{p}} \in \mathcal{H}(\mathcal{H}) \), it can be expressed in the basis \( B_D \):
\[
\rho_{\mathbf{p}} = \sum_{m=1}^{N^2} c_{\mathbf{p},m} b_m,
\]
for some coefficients \( c_{\mathbf{p},m} \in \mathbb{R} \).

Substituting this decomposition into the expression for \( H[\mathbf{p}, \mathbf{s}] \):
\begin{equation}
\label{eqn:HQ[p,s]}
\begin{split}
H[\mathbf{p}, \mathbf{s}] &= \operatorname{tr}\left(T_{s_t} \cdots T_{s_1} \sum_{m=1}^{N^2} c_{\mathbf{p},m} b_m\right) \\
&= \sum_{m=1}^{N^2} c_{\mathbf{p},m} \operatorname{tr}(T_{s_t} \cdots T_{s_1} b_m) \\
&= \sum_{m=1}^{N^2} c_{\mathbf{p},m} f_m(\mathbf{s}).
\end{split}
\end{equation}

Now, consider the row vector of \( H \) corresponding to prefix \(\mathbf{p}\):
\[
R_{\mathbf{p}} = H[\mathbf{p}, \cdot].
\]
From~\eqref{eqn:HQ[p,s]}, we have:
\begin{equation*}
R_{\mathbf{p}}(\mathbf{s}) = \sum_{m=1}^{N^2} c_{\mathbf{p},m} f_m(\mathbf{s}).
\end{equation*}

Thus, each row of the Hankel matrix is a linear combination of the sequence functions \( f_m \), demonstrating that the set \( \{ f_m \}_{m=1}^{N^2} \) spans the row space of \( H \).

This representation allows us to factorize the Hankel matrix \( H \) as:
\begin{equation}
\label{eqn: H - factorization}
H = C F,
\end{equation}
where:
$$ C[\textbf{p},m] =c_{\textbf{p},m}, \textbf{p} \in\Sigma^{*}, m\in[1,\ldots,N^2],    $$
$$ F[m,\textbf{s} ] = f_m(s),  \textbf{s} \in\Sigma^{*}, m\in[1,\ldots,N^2], $$
\noindent
Substituting into~\eqref{eqn: H - factorization}:
\[
H[\mathbf{p}, \mathbf{s}] = \sum_{m=1}^{N^2} c_{\mathbf{p}, m} f_m(\mathbf{s}).
\]

Since \( H \) is expressed as the product of a \((\infty \times N^2)\) matrix \( C \) and an \((N^2 \times \infty)\) matrix \( F \), the  rank of \( H \)  is bounded by:
\[
\operatorname{rank}(H) \leq N^2.
\]

\end{proof}

A comparison of \textbf{Theorem} \ref{theorem: Anderson} and \textbf{Theorem} \ref{theorem:Main1}  shows  that quantum models provide compact representation of the stochastic languages. They require a quadratically smaller state space to generate a finite-rank language compared to their classical counterparts.

Next, we provide a sufficient condition, guaranteeing that for any stochastic language of finite rank, there exists a QHMM that generates it. We also establish the explicit Hilbert space dimension required to construct such a QHMM.

\begin{theorem}
\label{theorem:Main2}
For every stochastic language of finite rank \(r\) over an alphabet of \(m\) symbols, there exists a quantum hidden Markov model (QHMM) \(\mathbf{Q}\), defined in a Hilbert space of dimension 
\[
N = \left\lceil \left( 0.5 + \sqrt{0.25 + m r^2} \right)^{\frac{1}{2}} \right\rceil,
\]
that generates the given stochastic language.
\end{theorem}
\begin{proof}
Let \( L \) be a finite-rank stochastic process language with an associated Hankel matrix \( H \) of finite rank \( r \). To construct a finite representation of \( H \), select \( r \) linearly independent rows, indexed by their corresponding prefixes in the Hankel matrix. Denote this set of prefixes as:
\[
P = \{\mathbf{p}_1, \ldots, \mathbf{p}_r\}.
\]

Since the rows in \( P \) form a basis for the row space of \( H \), the row corresponding to any prefix \(\mathbf{p}\) can be expressed as a linear combination of the basis rows. That is, there exist coefficients \(\alpha_1, \dots, \alpha_r \in \mathbb{R}\) such that:
\begin{equation}
\label{eqn:t33}
H[\mathbf{p}, \cdot] = \sum_{i=1}^{r} \alpha_i H[\mathbf{p}_i, \cdot].
\end{equation}

Assume that the process is in state \( q_i \), represented by the basis prefix \(\mathbf{p}_i\). Upon emitting a symbol \( a \), the process transitions to a new state defined by the concatenated prefix \(\mathbf{p}_i a\). The conditional observation probabilities are defined by the Hankel matrix:

\begin{equation}
\label{classical conditional prob}
\begin{split}
P\bigl[a|q_i\bigr] & = \frac{P[\mathbf{p}_i a]} {P[\mathbf{p}_i]} \\
                   & = \frac{H[\mathbf{p}_i a,\epsilon]}{H[\mathbf{p}_i,\epsilon]}
\end{split}
\end{equation}

By the linear structure of the language established in~\eqref{eqn:t33}, the row of the Hankel matrix indexed by \(\mathbf{p}_i a\) can be expressed as:
\begin{equation}
\label{representstion of rows in classical base}
H[\mathbf{p}_i a, \cdot] = \sum_{j=1}^r \alpha_{i,j}^{(a)} H[\mathbf{p}_j, \cdot],    
\end{equation}
where \(\alpha_{i,j}^{(a)} \in \mathbb{R}\) are the linear coefficients that describe how the new row is represented as a linear combination of the basis rows in \( P \).

This linear structure is directly reflected in the relation between the corresponding sequence probabilities. In particular:
\begin{equation}
\label{representstion of probabilities in classical base}
P[\mathbf{p}_i a] = \sum_{j=1}^r \alpha_{i,j}^{(a)} P[\mathbf{p}_j].   
\end{equation}
\noindent
Here, \(P[\mathbf{p}_i a]\) denotes the probability of observing the sequence defined by the prefix \(\mathbf{p}_i\) followed by the symbol \(a\), while \(P[\mathbf{p}_j]\) correspond to the probabilities associated with the basis prefixes. 
According to equation~\eqref{representstion of probabilities in classical base}, the probability of any observed sequence can be expressed as a linear combination of the probabilities of the basis prefixes. 

Substituting~\eqref{representstion of probabilities in classical base} into the classical conditional probability formula~\eqref{classical conditional prob} yields:
\begin{equation}
    \label{classical conditional 2}
    P\bigl[a \mid \mathbf{p}_i \bigr] = \frac{\sum_{j=1}^r \alpha_{i,j}^{(a)} P[\mathbf{p}_j]}{P[\mathbf{p}_i]}.
\end{equation}
\noindent

The non-normalized conditional transition probabilities allocated to the states $q_j$ are :
\begin{equation}
\label{transition probabilities}
P\bigl[q_j|q_i,a \bigr] = \alpha_{i,j}^{(a)} P\bigl[\mathbf{p}_j\bigr], \quad j=1,\ldots,r   
\end{equation}
These probabilities represent the contributions from states $q_j$ when transitioning from $q_i$ on symbol $a$.

From (\ref{transition probabilities}) we define the state transition matrix $\tau^{(a)}$ for each observable symbol $a \in \Sigma$:
\begin{equation}
    \label{transition matrix}
    \tau^{(a)}_{i,j} = P\bigl[q_j|q_i,a \bigr], \quad j=1,\ldots,r  
\end{equation}

The initial state distribution \(\omega \in \mathbb{R}^r\) is a stochastic vector and depends on $r-1$ free parameters. These parameters can be determined from the linear system:
\begin{equation}
\label{initial_distribution}
\mathbf{1}^\top \tau^{\mathbf{p}_i} \omega = H[\mathbf{p}_i, \epsilon], \quad i = 1, \dots, r-1,
\end{equation}
where \(\mathbf{1} = (1, \dots, 1)^\top \in \mathbb{R}^r\) is the all-ones vector, and:
\[
\tau^{\mathbf{p}_i} = \prod_{j=k}^1 \tau^{a_j}, \quad \mathbf{p}_i = a_1 \dots a_k.
\]

Equation~\eqref{initial_distribution} ensures that the initial state distribution is consistent with the given Hankel matrix probabilities for the selected basis prefixes.

Consider a QHMM \(\mathbf{Q}\) defined in a Hilbert space \(\mathcal{H}\) of dimension \( N \) as per Definition~\eqref{def_qhmm}. By \textbf{Proposition~\ref{proposition1}}, there exists a basis of density operators spanning the space \(\mathcal{H}(\mathcal{H})\) with size \( N^2 \):
\begin{equation}
\label{basis_states}
B = \{ b_1, \dots, b_{N^2} \}.
\end{equation}

Any state \(\rho \in D(\mathcal{H})\) of the QHMM can be expressed as a linear combination of the basis elements:
\[
\rho = \sum_{i=1}^{N^2} d_i^{(\rho)} b_i,
\]
where \( d_i^{(\rho)} \in \mathbb{R} \) are real expansion coefficients that depend on \(\rho\).

Let \( K_a \) be the Kraus operator corresponding to the emission of symbol \( a \in \Sigma \). The updated state after observing \( a \) is:
\begin{equation}
\begin{split}
\rho^{(a)} &= K_a \rho K_a^{\dagger} \\
           &= K_a \left( \sum_{i=1}^{N^2} d_i^{(\rho)} b_i \right) K_a^{\dagger} \\
           &= \sum_{i=1}^{N^2} d_i^{(\rho)} \left( K_a b_i K_a^{\dagger} \right).
\end{split}
\end{equation}

This shows that the action of \( K_a \) on a superposition of basis states is the corresponding superposition of the transformed basis elements:
\begin{equation}
\label{Kraus_action_on_basis}
b_i^{(a)} := K_a b_i K_a^{\dagger}.
\end{equation}

The probability of observing symbol \( a \) given that the system is in basis state \( b_i \) is:
\begin{equation}
\label{quantum_observation_probability}
P[a \mid b_i] = \operatorname{tr}(b_i^{(a)}).
\end{equation}

Since each transformed basis element \( b_i^{(a)} \) can be expanded back into the original basis \( B \):
\begin{equation}
\label{Kraus_action_expansion}
b_i^{(a)} = \sum_{j=1}^{N^2} \gamma_{i,j}^{(a)} b_j,
\end{equation}
where \(\gamma_{i,j}^{(a)} \in \mathbb{R}\) are expansion coefficients depending on both \( K_a \) and the structure of the basis \( B \).
Substituting~\eqref{Kraus_action_expansion} into~\eqref{quantum_observation_probability} yields:
\begin{equation}
\label{expansion_of_observation_probability}
\begin{split}
P[a \mid b_i] &= \operatorname{tr}\left( \sum_{j=1}^{N^2} \gamma_{i,j}^{(a)} b_j \right) \\
              &= \sum_{j=1}^{N^2} \gamma_{i,j}^{(a)} \operatorname{tr}(b_j) \\
              &= \sum_{j=1}^{N^2} P[b_j \mid b_i, a],
\end{split}
\end{equation}
where $\operatorname{tr}(b_j)=1$ and
\begin{equation}
\label{quantum conditional probability}
 P[b_j \mid b_i, a] = \gamma_{i,j}^{(a)}     
\end{equation}
represents the unnormalized local probability mass allocated to state \( b_j \) when transitioning from \( b_i \) upon emitting symbol \( a \).
The coefficients \(\gamma_{i,j}^{(a)}\) quantify how the probability mass is distributed across basis states after applying the Kraus operator.  Since the basis \( B \) is generally non-orthogonal, the expansion coefficients \(\gamma_{i,j}^{(a)}\) may be negative even though \( b_i^{(a)} \) is positive semidefinite. The coefficients \(\gamma_{i,j}^{(a)}\) act as local expansion weights and do not individually guarantee positivity. The overall positivity is ensured at the operator level rather than at the coefficient level.

This phenomenon parallels the classical model’s expansion coefficients \(\alpha_{i,j}^{(a)}\) in~\eqref{representstion of probabilities  in classical base}, which can also take negative values despite the underlying process being probabilistic.

To account for the non-orthogonality of the basis, define the Gram matrix \( G \in \mathbb{R}^{N^2 \times N^2} \) with entries:
\[
G_{jk} = \operatorname{tr}(b_j b_k).
\]

Then, the expansion coefficients can be explicitly expressed as:
\begin{equation}
\label{explicit_gamma}
\gamma_{i,j}^{(a)} = \sum_{k=1}^{N^2} (G^{-1})_{jk} \operatorname{tr}(b_k K_a b_i K_a^{\dagger}).
\end{equation}

To ensure that the QHMM \(\mathbf{Q}\) accurately simulates the classical stochastic process defined by (\ref{transition probabilities}), it is essential to establish a relationship between the transition probabilities in both models. The equality of non-normalized conditional transition probabilities allocated to the basis classical states $q_j$ (\ref{transition probabilities}) and quantum states $b_i$ (\ref{quantum conditional probability}) establish relation between the coefficients $\alpha_{i,j}^{(a)}$ and $\gamma_{i,j}^{(a)}$:

\begin{equation}
        \alpha_{i,j}^{(a)} = \frac{\gamma_{i,j}^{(a)}}{P\bigl[\mathbf{p}_j\bigr]}
\end{equation}

This equality enables the construction of Kraus operators that reproduce the probabilistic transitions of the classical stochastic language. In particular, the expansion coefficients \(\gamma_{i,j}^{(a)}\)~\eqref{explicit_gamma} are quadratic functions of the parameters of the Kraus operators \( K_a \).

For a finite-rank \( r \) stochastic language with \( m \) observable symbols \(\{ a_1, \ldots, a_m \}\), the classical process defines \( m r^2 \) parameters \(\alpha_{i,j}^{(a)}\), corresponding to \( m \) transition matrices (\ref{transition matrix}). The probabilities of the basis prefixes $\mathbf{p}_j$ are defined by the Hankel matrix.

In the quantum case, each Kraus operator \( K_a \) acts on an \( N \)-dimensional Hilbert space. A quantum system of dimension \( N \) can accommodate up to \( N^4 - N^2 \) real parameters through its Kraus operators. To ensure the existence of a solution for the system of quadratic equations, we require:
\begin{equation}
\label{N_estimate}
N \geq \left\lceil \left( 0.5 + \sqrt{0.25 + m r^2} \right)^{\frac{1}{2}} \right\rceil.
\end{equation}

Under this condition, the number of parameters available from the Kraus operators exceeds the number of equations defining the classical transition structure.The relationship between the Kraus parameters and the classical transition coefficients is given by (\ref{explicit_gamma}):
\begin{equation}
\label{Kraus_alpha_equation}
\sum_{k=1}^{N^2} (G^{-1})_{jk} \operatorname{tr}\bigl(b_k K_a b_i K_a^{\dagger}\bigr) = \alpha_{i,j}^{(a)} P\bigl[\mathbf{p}_j\bigr] .
\end{equation}
Since the system is underdetermined (given the dimension condition~\eqref{N_estimate}), solutions always exist in the complex field \(\mathbb{C}\) due to the algebraic closure property.

The final step is to define the initial state  \(\rho_0\) of the quantum model. It depends on \( N^2 - 1 \) real parameters. These parameters are determined by the system of equations derived from the Hankel matrix:
\begin{equation}
\label{initial_conditions}
\operatorname{tr}\bigl(T_{\mathbf{p}} \rho_0\bigr) = H[\mathbf{p}, \epsilon], \quad \mathbf{p} \in P,
\end{equation}
where \( P \) is the basis set of \( r \) linearly independent prefixes. There are $r$ equations and \( N^2 - 1 \) variables.
Using the estimate (\ref{N_estimate}) we have  
\begin{equation}
\label{rh0_estimate}
N^2-1 \geq \left\lceil \left( -0.5 + \sqrt{0.25 + m r^2} \right)\right\rceil.
\end{equation}
For every $m>1$ the inequality (\ref{rh0_estimate}) implies
\begin{equation}
\label{rh0_estimate final}
N^2-1 \geq r.
\end{equation}
Therefore, under the condition (\ref{N_estimate}) there exists a QHMM that reproduces the classical stochastic process characterized by the finite-rank Hankel matrix $H$. 
\end{proof}

The theorem demonstrates a fundamental distinction between quantum stochastic generative models and classical hidden Markov models (HMMs).
While a finite Hankel rank is a necessary condition for an HMM representation, it is not sufficient\cite{Fox68,Dharm63,carlyle_paz,vidyasagar2011complete}. In contrast, Theorem \ref{theorem:Main2} asserts that every finite-rank stochastic process language can be generated by a quantum stochastic generator as a quantum hidden Markov model (QHMM).  This implies that while every classical HMM can be embedded within a QHMM, there exist stochastic processes that admit QHMM representations but lack classical HMM representations. Consequently, the class of languages generated by QHMMs is a proper superset of those generated by classical HMMs. 

\subsection{\label{subsection:unitary_qhmm}Unitary Definition of QHMM}

The definition of a QHMM in the previous section is based on the concept of a POVM operation acting on a quantum state. The emission of the observable symbols is encoded in the operational elements (Kraus operators) of the quantum operation. 
This framework provides a convenient way to view QHMMs as channels for quantum information processing. It allows for the analysis of their informational complexity, expressive capacity, and establishes connections to stochastic process languages and the corresponding automata. 
However, this approach cannot be directly used for the implementation of the QHMMs on quantum computing hardware. 
\begin{figure*}[!ht]
    \centering
    \includegraphics[scale=0.5]{./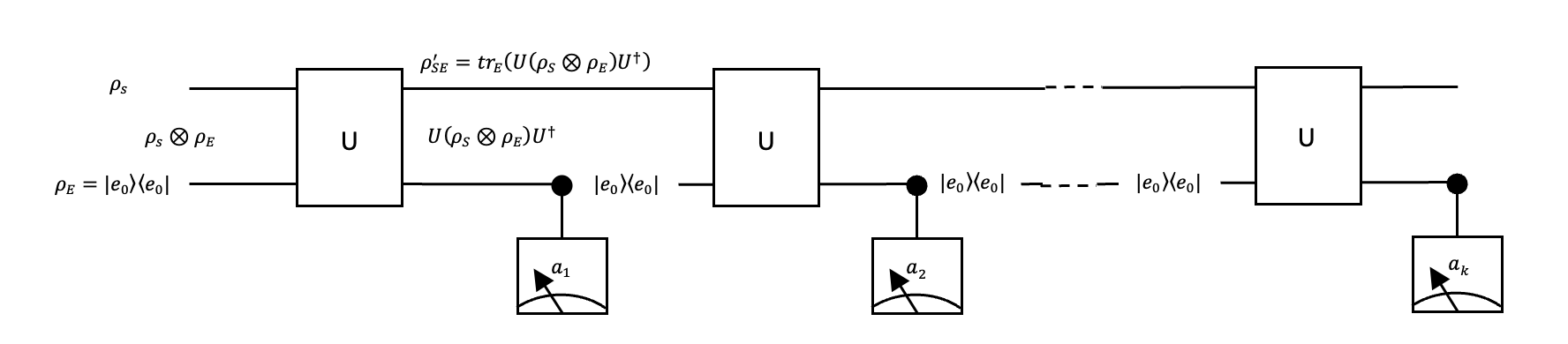}
    \caption{Unitary Simulation of a Quantum Operation.}
    \label{fig:unitary_composite_circuit}
\end{figure*}

In Section~\ref{subsubsection:qstates_qoperations} we demonstrated that every POVM operation $\mathcal{T}$ defined in Hilbert space $\H_S$ can be represented as composition of basic quantum operations:

\begin{enumerate}
    \item Dilation of the ``state'' system with an ``emission'' quantum system $\H_E$: 
    $$\H_{S} \mapsto\H_S \otimes \H_E.$$
    \item Unitary transformation $U$, entangling the components of the bipartite system $\rho_{SE}$:  
    $$U(\rho_S \otimes \rho_E) U^{\dagger}.$$
    \item Partial trace over the emission component, which provides the reduced density of the state system as the output of the operation:
    $$\rho_{S} = \operatorname{tr}_E \left( U (\rho_S \otimes
    \rho_E) U^{\dagger} \right).$$ 
\end{enumerate}
\noindent
The minimal dimension of the emission system $\H_E$ corresponds to the Kraus rank of the operation $\mathcal{T}$. This rank quantifies the operation's complexity by indicating the minimal auxiliary resources required for its unitary implementation. The unitary implementation of a QHMM relies on Stinespring's dilation theorem ~\cite{Stinespring1955},which states that any quantum operation can be represented as a unitary evolution on a larger (dilated) system.
\begin{theorem}
    \label{theorem_bijective_map}
    Any QHMM $\mathbf{Q} =  \bigl\{ \Sigma, \, \H_S,\, \mathcal{T}=\{ T_a \}_{a \in \Sigma},\, \rho_0  \bigr\}$ has parameterization in terms of a unitary $U$ defined in a dilated Hilbert space $\H$, where $\operatorname{dim}\H \geq \operatorname{dim}\H_S + |\Sigma|$. 
\end{theorem}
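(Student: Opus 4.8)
The plan is to prove the statement as a constructive instance of Stinespring's dilation theorem~\cite{Stinespring1955}, reusing verbatim the bipartite machinery already developed in Section~\ref{subsubsection:qstates_qoperations}. The input data is the operator-sum representation $\{T_a\}_{a\in\Sigma}$ of the channel $\mathcal{T}$, which by Definition~\ref{def_qhmm} obeys the completeness relation $\sum_{a\in\Sigma}T_a^{\dagger}T_a=I_N$ and generates $\mathcal{T}\rho_S=\sum_{a\in\Sigma}T_a\rho_S T_a^{\dagger}$. The goal is to exhibit a single unitary $U$ on a dilated space $\H=\H_S\otimes\H_E$ whose compression through a fixed ancilla state $\ket{e_0}$ reproduces exactly these Kraus operators via \eqn{kraus_op}, so that tracing out $\H_E$ recovers $\mathcal{T}$ as in \eqn{op_sum}--\eqn{t_u} and measuring $\H_E$ realizes the symbol emission as in \eqn{povm_equality}.

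First I would fix the emission register. Taking $\dim\H_E=M\ge|\Sigma|$ lets me identify each symbol $a\in\Sigma$ with a distinct element $\ket{a}$ of an orthonormal basis $\{\ket{e}\}$ of $\H_E$, and I single out the fixed initial ancilla $\ket{e_0}$. On the $N$-dimensional subspace $\H_S\otimes\ket{e_0}\subset\H$ I define the linear map
\begin{equation*}
    U\bigl(\ket{\psi}\otimes\ket{e_0}\bigr)\;=\;\sum_{a\in\Sigma}\bigl(T_a\ket{\psi}\bigr)\otimes\ket{a},\qquad \ket{\psi}\in\H_S,
\end{equation*}
which is precisely the Stinespring isometry written as the first ``block'' of $U$; by orthonormality of $\{\ket{a}\}$ it satisfies $(I_N\otimes\bra{a})\,U\,(I_N\otimes\ket{e_0})=T_a$, matching \eqn{kraus_op} with the measurement outcome $e$ identified with the symbol $a$.

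Next I would check the isometry property and then extend. For any $\ket{\psi},\ket{\phi}\in\H_S$, orthonormality gives $\bigl(\bra{\phi}\otimes\bra{e_0}\bigr)U^{\dagger}U\bigl(\ket{\psi}\otimes\ket{e_0}\bigr)=\sum_{a}\bra{\phi}T_a^{\dagger}T_a\ket{\psi}=\braket{\phi}{\psi}$, where the final equality is exactly the completeness relation; hence $U$ maps $\H_S\otimes\ket{e_0}$ isometrically into $\H$. Because an isometry preserves dimension, the image $U(\H_S\otimes\ket{e_0})$ is $N$-dimensional, so its orthogonal complement in $\H$ and the orthogonal complement $(\H_S\otimes\ket{e_0})^{\perp}$ of the domain both have dimension $N(M-1)$; choosing any isometric identification of these two complements extends $U$ to a genuine unitary on all of $\H$. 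With $U$ unitary, substituting into \eqn{op_sum} and using \eqn{kraus_op} returns $\operatorname{tr}_E\!\left(U\rho_{SE}U^{\dagger}\right)=\sum_a T_a\rho_S T_a^{\dagger}=\mathcal{T}\rho_S$, so the pair $(U,\rho_0)$ parameterizes $\mathbf{Q}$. The dimension bookkeeping closes the bound: since $\dim\H=\dim\H_S\cdot\dim\H_E\ge\dim\H_S\cdot|\Sigma|$, and both factors are at least $2$ for any nontrivial model, the dilated space satisfies $\dim\H\ge\dim\H_S+|\Sigma|$ as claimed.

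The genuinely substantive step is the passage from the partial isometry, defined only on $\H_S\otimes\ket{e_0}$, to a global unitary, rather than the isometry check, which is immediate from completeness. This is precisely the step that consumes the dilation dimension and that would fail if the ancilla were too small, which is why $\dim\H_E\ge|\Sigma|$ is required. The extension is never obstructed by the equal-codimension argument above, so existence is unconditional, but it is non-unique: $U$ is fixed only up to the choice of ancilla basis $\{\ket{e}\}$ (noted after \eqn{kraus_op}) and the freedom in identifying the two complements. I would close by remarking that this gauge freedom is exactly what the continuity theory of Stinespring dilations~\cite{KRETSCHMANN20081889} controls when the parameterization is later used for learning, and that the initial density $\rho_0$ is simply prepared on the $S$-register and plays no role in the dilation of $\mathcal{T}$.
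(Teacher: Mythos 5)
Your proposal is correct and takes essentially the same route as the paper's proof: you construct the Stinespring isometry $\ket{\psi}\otimes\ket{e_0}\mapsto\sum_{a\in\Sigma} T_a\ket{\psi}\otimes\ket{a}$, verify the isometry property from the completeness relation $\sum_a T_a^{\dagger}T_a=I_N$, extend it to a unitary on $\H_S\otimes\H_E$, and recover $\mathcal{T}$ via the partial trace. Your only refinements are making the unitary extension explicit through the equal-codimension argument (where the paper simply invokes the Gram--Schmidt process) and spelling out the dimension bookkeeping for the bound $\operatorname{dim}\H\geq\operatorname{dim}\H_S+|\Sigma|$, which the paper's proof leaves implicit.
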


\begin{proof}
    Let's introduce an auxiliary (``emission'') quantum system with Hilbert space $\H_E$, $\operatorname{dim}(\H_E) = M, M = |\Sigma|$,  and orthonormal basis $\{ \ket{e_i} \}^{M-1}_{i = 0}$.  We assume that $\operatorname{ord}(a)$ is the position of the symbol $a$ in the set $\Sigma = \{a_1, \ldots, a_{M-1}\}$.
    Let's select a specific state $\ket{e_0}$ to be the initial state. 
    \noindent
    A linear operator $V:H_{S} \rightarrow \H_S \otimes \H_E$ is defined as follows:
    \begin{equation}
    \label{eqn:isometry_v}
        V: \ket{v} \mapsto \sum_{a \in \Sigma} T_a \ket{v} \otimes \ket{e_{\operatorname{ord}(a)}}, \ket{v} \in \H_S.
    \end{equation}
    \noindent 
    Since the quantum operation $\mathcal{T}$ is complete: $\sum_{a \in \Sigma} T_a^{\dagger} T_a = I_N$, the operator $V$ is an isometry: $V^{\dagger}V=I_N$, and it can be extended to a unitary operator $U:\H_S\otimes \H_E \rightarrow \H_S \otimes \H_E$ by using the Gram-Schmidt process \cite{nielsen_chuang_2010}.

    If we apply the unitary operator $U$ to the dilated system $\H_S \otimes \H_E$ and trace over the emission subsystem
    \begin{equation*}
    \operatorname{tr}_E \left( U(\ket{v}\bra{v} \otimes \ket{e_0}\bra{e_0})U^{\dagger} \right) = \sum_{a \in \Sigma} T_a \ket{v} \bra{v} T_a^{\dagger} , 
    \end{equation*}
\noindent
where $\ket{v} \in \H_S$,  we derive the operator-sum representation of $\mathbf{Q}$. Hence, the unitary $U$ parameterizes the QHMM.   
\end{proof}
\noindent
The above parameterization is unique up to a unitary transformation on the selected basis of the emission system. 

Every unitary operator $U$ acting on a composite system $\H_{SE} = \H_S \otimes \H_E$ defines a quantum operation by the Kraus operators (\ref{eqn:kraus_op}).
Every quantum operation defines a unitary operator on a dilated system through the isometry $V$ (\ref{eqn:isometry_v}). This equivalence allows us to provide a formal definition of unitary QHMMs.

\begin{definition}[Unitary Quantum Hidden Markov Model]
    \label{def_unitary_qhmm}
    A \textit{Unitary Quantum} HMM $\mathbf{Q}$ over an alphabet of observable symbols $\Sigma$ and finite
    $N$-dimensional Hilbert space is a 6-tuple:
    \begin{equation*}
        \mathbf{Q} =  \bigl\{ \Sigma, \H_S, \H_E, U, \mathcal{M}, R_0  \bigr\}
    \end{equation*}
    where
    \begin{itemize}
    \item $\Sigma$ is a finite set of $m$ observable symbols. 
    \item $\H_S$ is the Hilbert space of the hidden state system of dimension $N$. \item $\H_E$ is the Hilbert space of an auxiliary emission system with dimension $m \leq M \leq N^2$ and orthonormal basis $E=\{ \ket{e_i} \}^{M-1}_{i = 0}$.
    \item $U$ is a unitary operator defined on the bipartite Hilbert space $\H_S \otimes \H_E$.
    \item $\mathcal{M}$ is a bijective map $\mathcal{P}_m^{E} \rightarrow \Sigma$, where $\mathcal{P}_m^{E}$ is an $m$-element partition of $E$.
    \item $ R_0 = \rho_0 \otimes \ket{e_0}\bra{e_0}$ is an initial state.
\end{itemize}
\end{definition}

The Algorithm \ref{alg:sim_qhmm} simulates a unitary QHMM {$\mathbf{Q}$} for $T$ steps and generates a sequence of the process language $L^\mathbf{Q}$. The same generation process is presented graphically on \fig{unitary_composite_circuit}. 
\newline

\SetKwComment{Comment}{/* }{ */}

\begin{algorithm}[H]
    \SetKwInOut{Input}{Input}\SetKwInOut{Output}{Output}
    \DontPrintSemicolon
    \caption{QHMM Unitary Simulation}\label{alg:sim_qhmm}
    \Input{\textbf{QHMM} $\mathbf{Q} = \{ \Sigma, \H_S, \H_E, U, \mathcal{M}, R_0= \rho_0 \otimes \ket{e_0}\bra{e_0} \}$ }
    \Output{Generated Sequence $\mathbf{a}$}
    \textbf{Initialize: 
      \begin{itemize}
        \item State system $\H_S, \operatorname{dim}(\H_S)=N, \rho_S=\rho_0 $\; 
        \item Emission system $\H_E, \operatorname{dim}(\H_E)=M, m \leq M \leq N^2, \rho_E=\ket{e_0}\bra{e_0} $\;
        \item Prepare product state $ \rho_{SE} = \rho_S \otimes \rho_E $\;
        \item Sequence length: $T$ \;
        \item Current Step: $t \gets 0$\;
        \item Output Sequence: $\mathbf{a} \gets \epsilon$\;
    \end{itemize} }

    \While{$t  \leq T$}{
        $\rho_{SE} \gets U\rho_{SE}U^{\dagger} $\Comment*[r]{ Apply unitary on full system} 
        $\rho_{S}  \gets \operatorname{tr}_E(\rho_{SE})$\Comment*[r]{ Projective measurement of $\rho_E$} 
        $ o \gets e_i                                  $\Comment*[r]{ Measurement outcome $o$} 
        $ \mathbf{a} \gets \mathbf{a} + \mathcal{M}(E_o), E_o \in \mathcal{P}_m^{E},  o\in E_o$\Comment*[r]{ Emit symbol for outcome $o$} 
        $\rho_{SE}  \gets \rho_{S} \otimes \ket{e_0}\bra{e_0} $\Comment*[r]{ Set $\rho_E$ to initial state} 
        $t \gets t+1$
        }
    \Return{$\mathbf{a}$}
\end{algorithm}

\begin{example}
    \normalfont

    In~\cite{monras2011hidden}, Monras et al. discuss an example of QHMM in Hilbert space with dimension 2, which defines a stochastic process language with Hankel matrix of rank 3. This is an example where the minimal classic HMM (of order 3) is more complex than the equivalent QHMM.
    The quantum model $\textbf{Q}$ is defined  as follows:
    \begin{equation*}
        \textbf{Q}= \bigl\{ \Sigma,\H_{S},\mathcal{T} = \{  \mathcal{T}=\{ T_a \}_{a \in \Sigma},\}, \rho_0 \bigr\},
    \end{equation*}

    where 
    \begin{itemize}
    \item $\Sigma=\{0,1,2,3\}$ is an alphabet of observable symbols.
    \item $\H_{S}$ is a Hilbert space with dimension 2 (i.e. $\textbf{Q}$ is a 1 qubit quantum system).
    \item The operator-sum representation of $\mathcal{T}$ is $\mathcal{T}_a \cdot = K_a \cdot K_a^{\dagger}$, where the Kraus operators are: 
    $K_0=\frac{1}{\sqrt{2}}\ket{\uparrow}\bra{\uparrow}$, $K_1=\frac{1}{\sqrt{2}}\ket{\downarrow}\bra{\downarrow}$,
    $K_2=\frac{1}{\sqrt{2}}\ket{+}\bra{+}$, and $K_3=\frac{1}{\sqrt{2}}\ket{-}\bra{-}$.
    \item $\rho_0$ is the maximally entangled initial state. 
    \end{itemize}
\noindent    
    We provide the following unitary definition of the same QHMM:
    \begin{equation*}
        \textbf{Q}=\{ \Sigma,\H_{S},\H_{E},U,\mathcal{M} , \rho_0 \otimes \ket{e_0}\bra{e_0} \},
    \end{equation*}
    where 
    \begin{itemize}
    \item $\Sigma=\{0,1,2,3\}$.
    \item $\H_{S}$ is the state system Hilbert space with dimension 2.
    \item $\H_{E}$
    is the emission Hilbert space with dimension 4 (i.e. a 2 qubit quantum system with a measurement basis $\{ e | e \in [0,1,2,3] \}$), corresponding to the observable symbols.
    \item $U$ is the unitary operation implemented by the ``State transition'' circuit in \fig{MonrasCircuit}.
    \item $\rho_0$ is the maximally
    mixed state implemented by the ``Max mixed state'' circuit in \fig{MonrasCircuit}. 
    \item $\mathcal{M}$ is the identity $\{0,1,2,3\} \rightarrow \{0,1,2,3\}$.
    \end{itemize}

The circuit generating sequences with length 2 and their distribution are shown in \fig{MonrasCircuit} and \fig{monras_example}.
The ``X'' gates controlled on the classical ``observable'' registers $o_0$ and $o_1$ are used to reset the emission system to the initial state $\ket{e_0}$.  

\begin{figure*}[!ht]
    \centering
    \includegraphics[scale=1]{./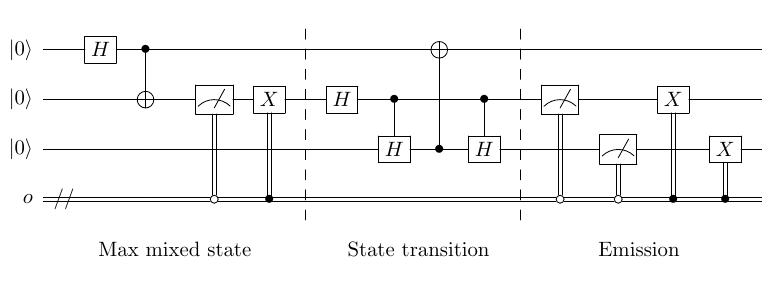}
    \caption{QHMM generating 1-symbol sequences, Monras et al. \cite{monras2011hidden}}
    \label{fig:MonrasCircuit}
\end{figure*}

\begin{figure*}[!ht]
        \centering
        \includegraphics[scale=1.0] {./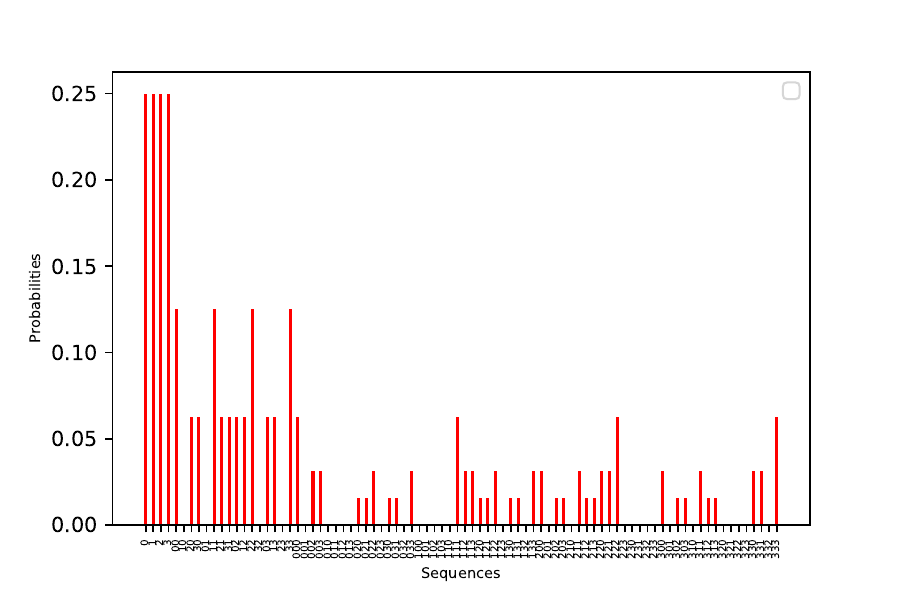}
        \caption{ Observable Sequences Distribution, Monras et al. \cite{monras2011hidden}}
        \label{fig:monras_example}
\end{figure*}

\end{example}

\subsection{\label{subsec:primmeassystemdesign}Primary and Measurement System Design. Test Case for the Ansatz Circuit Design}

To get ideas for system design we refer to the theory of open quantum systems, which studies the system dynamics under the system-environment interaction \cite{breuer2007}. Researchers consider Markovian and non-Markovian time evolutions \cite{Rivas2012,smart2022relaxation}. The precise definition of (non-) Markovianity is a subject of ongoing research \cite{rivas2014quantum,Pollock2018}. The major difference between these two evolution regimes are the memory effects in the non-Markovian case. Non-Markovian regime complicates the dynamics of the system, makes it irreversible, e.g. see \cite{filippov2020quantum}. The information in non-Markovian evolution flows from the system to the environment and from the environment to the system \cite{morris2019non}. In real physical systems the description of the environment or even its dimension may not be known to us, however, in the present research we design both the system and the environment to model stochastic processes, so we have full control.

Building on the inspiration from the open quantum systems we consider two circuit designs to implement QHMM (\fig{hqmm_circuit_design}).

\begin{figure*}[htbp]
    \centering
        \begin{minipage}{0.95\textwidth}
            \raggedright{a)}
            \begin{quantikz}
                & \lstick{$\ket{0}$} & \gate[wires=2]{\text{Initial State}} & \qw & \gate[wires=2]{U_{AE}} & \qw & \meter{} & \qw & \gate[wires=2]{U_{AE}} & \qw & \meter{} & \qw & \gate[wires=2]{U_{AE}} & \qw & \meter{} & \qw\\
                & \lstick{$\ket{0}$} &\qw & \qw & & \qw & \gate{\ket{0}} & \qw & & \qw & \gate{\ket{0}} & \qw & & \qw & \gate{\ket{0}} & \qw\\
            \end{quantikz}
        \end{minipage}
        \begin{minipage}{0.95\textwidth}
            \raggedright{b)}
            \begin{quantikz}
                & \lstick{$\ket{0}$} & \gate[wires=2]{\text{Initial State}} & \qw & \gate[wires=2]{U_{AE}} & \qw & \meter{} & \qw & \gate[wires=2]{U_{AE}} & \qw & \meter{} & \qw & \gate[wires=2]{U_{AE}} & \qw & \meter{} & \qw\\
                & \lstick{$\ket{0}$} &\qw & \qw & & \qw & \qw & \qw & & \qw & \qw & \qw & & \qw & \qw & \qw\\
            \end{quantikz}
        \end{minipage}
    \caption{QHMM Ansatz Circuit Design}
    \label{fig:hqmm_circuit_design}
\end{figure*}

The circuit on \fig{hqmm_circuit_design} a) collapses the state of the environment to the ground state after the first iteration of the unitary evolution with $U_{AE}$. The state of the system collapses to the result of the measurement so it is the only information transmitted to the next iteration. The circuit on \fig{hqmm_circuit_design} b) may exhibit longer term memory effects.

For circuit training purposes we introduce a separate transformation of the measurement system $U_{Meas}$ (\fig{hqmm_circuit_design_meas}):
$$U_{AE} = (U_{\text{Meas}}\otimes I) U_{AE}^1$$

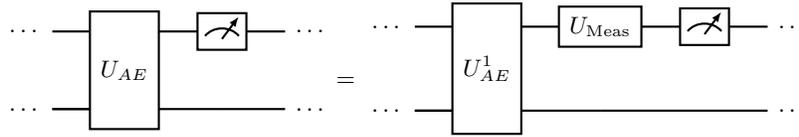
\begin{figure*}[htbp]
    \begin{center}
        \begin{minipage}{1\textwidth}
            \begin{quantikz}
                & \lstick{$\dots$} & \gate[wires=2]{U_{AE}} &  \meter{} & \qw \rstick{$\dots$}\\
                & \lstick{$\dots$} & \qw & \qw & \qw \rstick{$\dots$}\\
            \end{quantikz} = 
            \begin{quantikz}
                & \lstick{$\dots$} & \gate[wires=2]{U_{AE}^1} & \gate{U_{\text{Meas}}} & \meter{} & \qw \rstick{$\dots$}\\
                & \lstick{$\dots$} & \qw & \qw & \qw &  \qw \rstick{$\dots$}\\
            \end{quantikz}
        \end{minipage}
    \end{center}
\caption{Separated Measurement System Transformation}
\label{fig:hqmm_circuit_design_meas}
\end{figure*}

The choice of the initial state depends on the use case. In the examples shown on \fig{monras_qka_linear_hw_circuit} and \fig{market_realamplitudes_circuit} below we use maximally mixed state as the initial state.

Based on the preliminaries in Section \ref{subsubsection:qstates_qoperations} we can write the probability of a sequence for the circuit \fig{hqmm_circuit_design} a) as

    \begin{multline}
        \label{eqn:evolution_measurement_multi}
            \text{Pr}\{j_1j_2\dots j_k\} = \\ \text{Tr} \Biggl \{M_{j_k} {\mathlarger{\sum\limits_{i}{}}}{}K_i \dots M_{j_2} {\mathlarger{\sum\limits_{i}{}}}{}K_i M_{j_1} \biggr . \\ \Biggl . {\mathlarger{\sum\limits_{i}{}}} {K_i\rho_AK_i^\dagger}  M_{j_1}^\dagger K_i^\dagger  M_{j_2}^\dagger\dots K_i^\dagger  M_{j_k}^\dagger \biggr \}
    \end{multline}

\noindent
where $M$ are projective measurements, and $K$ are Kraus operators $K_i = \left(I\otimes\langle{e_i}\vert\right)U\left(I\otimes\vert{e_0}\rangle\right)$. Please, note that the same expression for the circuit \fig{hqmm_circuit_design} b) would be much more complicated, since the start state $e_0$ of the environment may be changing between iterations creating a new set of Kraus operators each time.

How do we know that the circuit on \fig{hqmm_circuit_design} a) is able to produce probabilities according to the equation
\eqn{evolution_measurement_multi}?
Here we suggest to use a well-known quantum operation - amplitude damping noise model
\cite{nielsen_chuang_2010, wilde2016quantuminformationtheory} as a test case and build a sequence generator.
It is described by the following unitary transformation:
\begin{equation*}
    \mathbf{U}_{AD}=
    \begin{pmatrix}
        1 & 0 & 0 & 0\\
        0 & 0 & 0 & 1\\
        0 & -\sqrt{\gamma} & \sqrt{1-\gamma} & 0\\
        0 & \sqrt{1-\gamma} & \sqrt{\gamma} & 0
    \end{pmatrix},
\end{equation*}
which, alternatively, can be represented in a circuit form:

\begin{multline*}
    \mathbf{U}_{AD}= \\ (I\otimes\vert0\rangle\langle0\vert+ X\otimes\vert1\rangle\langle1\vert)\times \\ \mathbf(\vert0\rangle\langle0\vert\otimes I+ \vert1\rangle\langle1\vert\otimes RY(\theta))
\end{multline*}

or

    \begin{center}
            \begin{quantikz}
                & \lstick{} & \ctrl{1} & \gate{X} & \qw \\
                & \lstick{} & \gate{RY(\theta)} & \ctrl{-1} & \qw\\
            \end{quantikz}
    \end{center}
\noindent
where $\gamma = \sin^2(\frac{\theta}{2})$. The full circuit is shown on \fig{ad_test_case}.

\begin{figure}[htbp]
    \begin{center}
            \begin{quantikz}
                & \lstick{$\ket{0}$} & \gate{H} & \ctrl{1} & \gate{X} & \meter{} & \qw\\
                & \lstick{$\ket{0}$} & \qw & \gate{RY(\theta)} & \ctrl{-1}& \gate{\ket{0}}  &  \qw\\
            \end{quantikz}
    \end{center}
    \caption{Amplitude damping test case circuit for 1 step sequence}
    \label{fig:ad_test_case}
\end{figure}
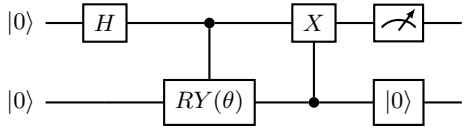

From this unitary we derive Kraus operators describing evolution of the principal subsystem given that the start state of the environment is $e_0=\vert{0}\rangle$:
\begin{equation*}
\begin{split}
    \mathbf{K_0}= \\
    & =
    \left(\begin{pmatrix}
        1 \\
        0
    \end{pmatrix}\otimes
    \begin{pmatrix}
        1 & 0 \\
        0 & 1
    \end{pmatrix}\right)^{\dagger}U
    \left(\begin{pmatrix}
        1 \\
        0
    \end{pmatrix}\otimes
    \begin{pmatrix}
        1 & 0 \\
        0 & 1
    \end{pmatrix}\right) \\
    & =
    \begin{pmatrix}
        1 & 0 \\
        0 & 0
    \end{pmatrix},
\end{split}
\end{equation*}

\begin{equation*}
\begin{split}
\mathbf{K_1}= \\
& =
\left(\begin{pmatrix}
        0 \\
        1
    \end{pmatrix}\otimes
    \begin{pmatrix}
        1 & 0 \\
        0 & 1
    \end{pmatrix}\right)^{\dagger}U
    \left(\begin{pmatrix}
        1 \\
        0
    \end{pmatrix}\otimes
    \begin{pmatrix}
        1 & 0 \\
        0 & 1
    \end{pmatrix}\right) \\
    & =
    \begin{pmatrix}
        0 & -\sqrt{\gamma} \\
        0 & \sqrt{1-\gamma}
    \end{pmatrix}.
\end{split}
\end{equation*}

In this test case we want to compare the sequence probabilities calculated using \eqn{evolution_measurement_multi} with circuit (\fig{ad_test_case}) results, when it is run on the simulator and on \emph{ibmq\_montreal} hardware device.
The probabilities calculated on the simulator will include shot, or sampling, noise. In order to reduce this error we use $100,000$ shots.
The hardware result will include both: hardware and shot noise.
The circuit that was run on the simulator and \emph{ibmq\_montreal} hardware device is shown on \fig{ampl_damping_qiskit_circ}.
\begin{figure*}[!ht]
    \centering
    \includegraphics[scale=0.55]{./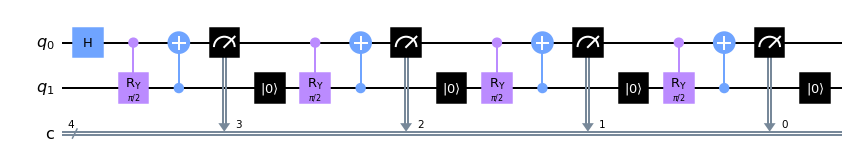}
    \caption{Amplitude damping noise circuit in Qiskit. $\theta=\frac{\pi}{2}$}
    \label{fig:ampl_damping_qiskit_circ}
\end{figure*}
Please note that the measurements are recorded on the classical register in reverse order following Qiskit's
convention of putting the least significant bit to the right.
Mid-circuit measurement and reset instructions have become available fairly recently in early 2021
\cite{midcircuitmeas}.
They are now part of a broader dynamic circuit capability introduced in 2022 \cite{dynamiccircuits}.

The comparison of the sequence probabilities is shown on \fig{ampl damping seq prob}.

\begin{figure*}[!ht]
    \centering
    \includegraphics[scale=0.55]{./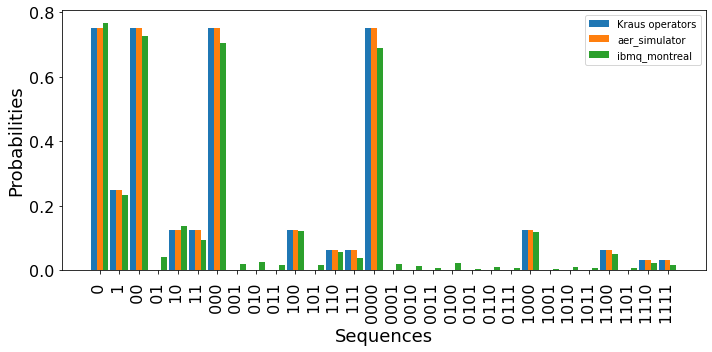}
    \caption{Sequence probabilities for amplitude damping noise model. $\theta=\frac{\pi}{2}$}
    \label{fig:ampl damping seq prob}
\end{figure*}

All calculations produce identical results up to sampling and hardware errors, so we confirm that our approach
works as expected.
It is worth noting that hardware results are very close to simulations.
Here we have used \emph{Sampler} primitive in \emph{Qiskit Runtime} to run this experiment
\cite{ampldampingruntimeexperiment}.
Behind the scenes this primitive implements error mitigation \cite{samplererrormitigation}.
We have repeated the experiments for a range of $\gamma$ parameters including the edges $0$ and $1$.
In all cases, we observed the same level of agreement between the results.

The Hankel matrix for the current process is shown in Table~\ref{tab:ampl_dump_Hankel_matr}. The rank of the Hankel matrix changes with the length of the sequence (see ~\fig{ampl_dump_Hankel_rank}). It is interesting to note that if we choose a different start state on the quantum circuit - use excited state $\vert1\rangle$ instead of $\vert+\rangle$ state, then the Hankel matrix will have rank $2$ for all sequences of length $2$ and above.

\begin{table*}

    \centering

        \begin{tabular}{|l|p{40pt}|p{40pt}|p{40pt}|p{40pt}|p{40pt}|p{40pt}|p{40pt}|}

            \hline

            \hfill{~} & \hfil empty &\hfil 0 &\hfil 1 &\hfil 00 &\hfil 01 &\hfil 10 &\hfil 11 \\

            \hline

            \hfil empty & \hfil 1 & \hfil 0.75 & \hfil 0.25 & \hfil 0.75 & \hfil 0 & \hfil 0.125 & \hfil 0.125 \\

            \hfil 0     & \hfil 0.75 & \hfil 0.75 & \hfil 0 & \hfil 0.75 & \hfil 0 & \hfil 0 & \hfil 0 \\

            \hfil 1     & \hfil 0.25 & \hfil 0.125 & \hfil 0.125 & \hfil 0.125 & \hfil 0 & \hfil 0.0625 & \hfil 0.0625 \\

            \hfil 00    & \hfil 0.75 & \hfil 0.75 & \hfil 0 & \hfil 0.75 & \hfil 0 & \hfil 0 & \hfil 0 \\

            \hfil 01    & \hfil 0 & \hfil 0 & \hfil 0 & \hfil 0 & \hfil 0 & \hfil 0 & \hfil 0 \\

            \hfil 10    & \hfil 0.125 & \hfil 0.125 & \hfil 0 & \hfil 0.125 & \hfil 0 & \hfil 0 & \hfil 0 \\

            \hfil 11    & \hfil 0.125 & \hfil 0.0625 & \hfil 0.0625 & \hfil 0.0625 & \hfil 0 & \hfil 0.03125 & \hfil 0.03125 \\

        \hline

        \end{tabular}

    \caption{Hankel matrix of the amplitude damping process.}
    \label{tab:ampl_dump_Hankel_matr} 
\end{table*}
 
    \begin{figure}[!ht]
    \centering
    \includegraphics[scale=0.55]{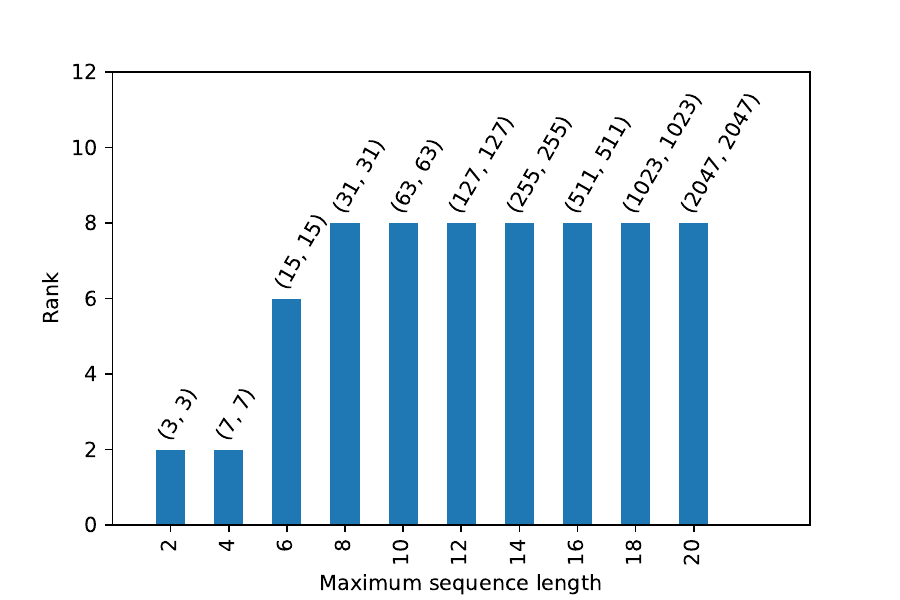}
    \caption{Hankel matrix rank for different length sequences. Hankel matrix dimensions are shown in parentheses.}
    \label{fig:ampl_dump_Hankel_rank}
    \end{figure}

\section{\label{section:learning_quantum_hidden_markov_models}Learning Quantum Hidden Markov Models}

In this section, we formalize and study the problem of learning QHMMs. The learning problem is defined as follows:  given an empirical specification of a stochastic process language $L$, which is referred to as the learning target, the objective is to find a QHMM $Q$ that is equivalent or approximates the sequence function $f^{L}$ of the target.
We discuss and formalize the components of the learning problem which include: 
\begin{itemize}
    \item Specification of the target stochastic language by estimating of its characteristics from the data. These characteristics may include 
     order, Hankel matrix, finite distributions.
     \item Definition of a set of potential solutions or a hypotheses space of unitary QHMMs.
     \item Design of hypotheses' quality criteria or a fitness function.
     \item Design of learning algorithms.
\end{itemize}

\subsection{Empirical Specification of Stochastic Languages}

For every stochastic process language $L$ we  defined a sequence function $f^L$ and corresponding generalized Hankel matrix $H^L$ (\ref{eqn:stochastic language seq function}). Sequence function $f^Q$ and Hankel matrix $H^Q$ are defined for the language of every QHMM $Q$ as well (\ref{eqn:qhmm_sequence_function}), (\ref{eqn:qhmm_hankel_matrix}). We assume that the target stochastic language is specified by a finite sample of the functional relation $f^L$, respectively a finite sub-matrix of  $H^L$. If the sample is derived analytically or through exhaustive simulation of a known HMM, then its probabilities will be exactly the same as the probabilities of the target distribution. It is possible though, that the sample is derived through observation of a "natural" random source which is assumed to be Markovian. In this case the probabilities of the sample need to be estimated and this will have impact on the precision of the learned model. First, we discuss the size of a "representative" sample for a language or the \textit{sample complexity} of the learning problem. It is known~\cite{carlyle_paz} that if a language $L$ is defined by a classic HMM of order $n$ then it is uniquely specified by the probabilities of all sequences up to length $2n-1$. If the alphabet $\Sigma$ contains $m$ symbols this estimate defines a sample size of $\sum_{i=1}^{2n-1}m^i$ probabilities. If it is acceptable the specification to be valid for \textit{almost all} HMMs-defined languages excluding a set with measure zero~\cite{huang}, then the representative sample would contain the probabilities of all sequences of length exactly $2l+1$ where $l > 8\lceil log_m{n} \rceil$ or $m^{2l+1}$ probabilities.
In many learning scenarios, a classical HMM for the learning target is not provided, and in some cases, it is not known whether such an HMM exists. Instead, the only available training data consists of a set of observed sequences, each of length $T$, sampled from an unknown process language $L$:
\begin{equation*}
    \mathcal{Y}^L=\bigl\{ \textbf{y}_1, \cdots \textbf{y}_l : \textbf{y}_i \in\Sigma^{T} \bigr\}.
\end{equation*}

Using these empirical samples we have to estimate the sequence function (\ref{eqn:stochastic language seq function}) and then the order and the finite  distributions of the unknown language. 
To estimate the sequence function for sequences with length $k$ we construct the following set:
\begin{equation}
\label{eqn:samle k} 
    \textbf{S}_k= \bigl \{ \textbf{a} : |\textbf{a}|=k, \exists  \textbf{p}, \textbf{s} \in \Sigma^{*}, \exists y \in  \mathcal{Y}_L : \textbf{p}\textbf{a}\textbf{s} = \textbf{y} \bigr  \}.
\end{equation}
\noindent
If the sample $\textbf{S}_k$ contains $m$ independently drawn sequences from the unknown
distribution $D^L_{k}$, then the estimate $\hat{f}(\textbf{a})$ is the empirical frequency of $\textbf{a}$ in $\textbf{S}_k$:
\begin{equation*}
    \hat{f}(\textbf{a})= \frac{1}{m} \sum_{\textbf{b} \in \textbf{S}_k} \textbf{1}_{\textbf{a}}(\textbf{b}).
\end{equation*}
\noindent
The estimate of the finite distributions $D^L_{k}$ is
\begin{equation}
\label{eqn:stochstic language distribution_1}
    \hat{D}^{L}_{k} = \bigl \{ \hat{f}(\textbf{a}) : \textbf{a} \in \textbf{S}_k  \bigr \}
\end{equation}
\noindent
The approximation of the Hankel matrix is:
\begin{equation*}
    \hat{H}[\textbf{u},\textbf{v}]=\hat{f}(\textbf{u}\textbf{v}), |\textbf{u}\textbf{v}|=k .
\end{equation*}
\noindent
The error of this approximation depends on the sample  size $m$~\cite{hsu} and is restricted as follows:
\begin{equation*}
\norm{H - \hat{H}} \leq \mathcal{O}\left( \frac{1}{\sqrt{m}}\right).
\end{equation*}
If we are given a sample $\mathcal{Y}^L$ with no additional information we have to estimate a plausible order of the
classical model by the maximum rank of the Hankel matrix.
We start with low sequences length and estimate the rank increasing the length.
If at particular sequence length the rank does not increase anymore we can use this rank as an estimate.
This procedure will provide only local approximation since the problem to infer the rank of the Hankel matrix from
data is undecidable~\cite{sontag}.

\subsection{QHMM Hypotheses Space}
\label{subsection:hypothesis space}
The hypotheses space $\Q$ of the learning problem contains the unitary QHMMs ($\mathbf{Definition}~\ref{def_unitary_qhmm}$):
\begin{equation}
\label{eqn:hyp_space}
    \Q = \{ \mathbf{Q} : \mathbf{Q}=\left( 
    \Sigma, \H_{S}, \H_{E}, U, \mathcal{M} , \rho_0 \right)\},
\end{equation}
\noindent
where the parameters are specified and restricted by the given samples of the target stochastic process language:

\begin{itemize}
    \item The alphabet $\Sigma$ contains the observed $m$ symbols.
    \item $\H_{S}$ is a Hilbert space with dimension $\hat{N}= \left\lceil \sqrt{\hat{r}} \right\rceil$ where ${\hat{r}}$ is the
    estimated maximal rank of the Hankel matrix built for the data sample $\mathcal{Y}$.
    \item $\H_{E}$ is a Hilbert space with dimension $ 2^{\lceil \log_2 m\rceil}\leq M \leq  \hat{N^2} $.
    We select any orthonormal basis $\mathcal{E}=\{\ket{e_{0}} \cdots \ket{e_{M-1}}  \}$ of $\H_{E}$.
    \item $U$ is a unitary operation on the Hilbert space $\H_S \otimes \H_E$ implemented by a quantum circuit of
    $log_2 \hat{N}+ log_2 M$ qubits using gates from a base \textit{gates type} set $\mathcal{G}=\{g_0 \cdots g_k  \}$.

    \item $\mathcal{M}$ is a bijective map $\mathcal{P}_m^{E} \rightarrow \Sigma$, where $\mathcal{P}_m^{E}$ is an $m$-element partition of $E$.
    
    \item $\rho_0$ is initial state implemented as one of the following: the maximally entangled state, the
    maximally mixed state, or the ground state.
\end{itemize}

Every hypothesis $\mathbf{Q}$ defines a sequence function $f^\mathbf{Q}$ and corresponding Hankel matrix $H^\textbf{Q}$ which are used for evaluation of
its quality. Every unitary operator $U$ has representation as a quantum circuit $\mathcal{C}_U$ with linear structure
\begin{equation}
    \label{eqn:c_u}
    \mathcal{C}_U=(g_i)_{i\geq1}.  
\end{equation}
\noindent
The quantum gates $g_i$ are encoded as 3-tuples:
\begin{equation}
\label{eqn:qgate}
    g=\langle \textbf{t},([q_c,] q_d),([p_1[, p_2]]) \rangle
\end{equation}
where $\textbf{t} \in \mathcal{G}$ is the gate's type, $q_c$ and $q_d$ are the control and data qubits, and $p_1,p_2$ are gate's parameters. We will make the assumption that the set of gate types $\mathcal{G}$ includes single-qubit gates and two-qubit controlled gates.

Formally, the hypothesis space is the infinite set of all finite quantum circuits in the $NM$-dimensional Hilbert space $\H_{S} \otimes \H_{E}$, encoded as lists of gates ~(\ref{eqn:qgate}).

\subsection{Quality of a Hypothesis}
\subsubsection {Precision of a Hypothesis}
A standard measure of the quality of a generative hypothesis $Q$ defining language $L^Q$ is its variational distance to the target distributions $D^L$(\ref{eqn:stochstic language distribution_0}). Let's assume that the distributions $D^L$ are defined by an unknown target QHMM model. The divergence between the target and the  hypothesis for any sequence ${\mathbf{a} \in \Sigma^*}$ can be defined as the divergence in probabilities assigned to the sequence ${\mathbf{a}}$ by the target and the hypothesis(\ref{eqn:seq_probability}):
\begin{equation}
\label{eqn:trace_dist1}
\begin{split}
 \delta^{LQ}(\mathbf{a}) & = \bigl{|} P\bigl[\mathbf{a} \big \vert L \bigr] - P\bigl[\mathbf{a} \big \vert Q \bigr]\bigr{|} \\ 
        & =  \bigl{|} \operatorname{tr}(T_{\mathbf{a}}^{L}\rho_0) - \operatorname{tr}(T_{\mathbf{a}}^{Q}\rho_0\bigr{|}) \\
        & =  \bigl{|} \operatorname{tr}\rho_{\mathbf{a}}^{L} - \operatorname{tr}\rho_{\mathbf{a}}^{Q} \bigr{|}, 
\end{split}
\end{equation}
\noindent
where $\mathcal{T}^{L}=\{ T_{a}^L \}_{a \in \Sigma}$ and $\mathcal{T}^{Q}=\{ T_{a}^Q \}_{a \in \Sigma}$ are the quantum operations of the target model and the hypothesis and we assume a fixed initial state $\rho_0$. The quantum states $\rho_{\mathbf{a}}^{L} = T_{\mathbf{a}}^{L}\rho_0$ and $\rho_{\mathbf{a}}^{Q}=T_{\mathbf{a}}^{Q}\rho_0$ represent the sequence generation paths under the target and the hypothesis models. The usual measurable distance between two quantum states is the \textit{trace norm} ${\lVert \rho_{\mathbf{a}}^L - \rho_{\mathbf{a}}^Q \rVert_1 = \operatorname{tr}{\bigl{|} \rho_{\mathbf{a}}^L - \rho_{\mathbf{a}}^Q \bigr{|}}}$. In the context of learning a QHMM, it is not feasible to directly measure the trace norm because the target model is generally unknown or may not even exist. The introduced divergence $\delta^{LQ}$ can be estimated empirically and is dominated by the theoretical trace norm: 
\begin{equation}
\label{eqn:trace_dist2}
\begin{split}
 \delta^{LQ}(\mathbf{a}) & = \bigl{|} \operatorname{tr}\rho_{\mathbf{a}}^{L} - \operatorname{tr}\rho_{\mathbf{a}}^{Q} \bigr{|} \\
 & = \bigl{|}\sum_{i=1}^N (\rho_{\mathbf{a}}^{L})_{ii} - (\rho_{\mathbf{a}}^{Q})_{ii}\bigr{|} \\
 & \leq  \sum_{i=1}^N \bigl{|}(\rho_{\mathbf{a}}^{L})_{ii} - (\rho_{\mathbf{a}}^{Q})_{ii}\bigr{|} \\
 & = \operatorname{tr}{\bigl{|} \rho_{\mathbf{a}}^L - \rho_{\mathbf{a}}^Q \bigr{|}}.
\end{split}
\end{equation}
\noindent
This relation will be used later in our analysis of QHMMs learning difficulty.

The divergence between the target language and the hypothesis on finite distributions of sequences with lengths exactly $n$ is
\begin{equation}
\label{eqn:distributions divergence trace n}
\Delta_{n}({D^L}||{D^Q})=\max_{\textbf{a} \in \Sigma^{n}} \delta^{LQ}
(\mathbf{a}) 
\end{equation}
\noindent
The average divergence between the hypothesis $Q$ and the target $L$ when given finite sample of sequences with lengths up to $n$ is defined as
\begin{equation}
\label{eqn:average divergence trace n}
\Delta_{\leq n}({D^L}||{D^Q})=\frac{1}{n}\sum_{i=1}^{n} \Delta_{i}({D^L}||{D^Q})    
\end{equation}
\noindent
We will refer to the average divergence (\ref{eqn:average divergence trace n}) $\Delta_{\leq n}({D^L}||{D^Q})$ as the \textit{empirical divergence} between a hypothesis and the learning target.

The \textit{empirical divergence rate} of the hypothesis $Q$ with respect to the target $L$ is defined as
\begin{equation}
\label{eqn:divergency rate limit}
\hat{\Delta}({D^L}||{D^Q})= \lim_{n\to\infty} \Delta_{\leq n}({D^L}||{D^Q}),
\end{equation}
\noindent
if the limit exists and it is finite. 

We can prove that the empirical  divergence rate between any QHMM hypothesis and a finite order stochastic process language converges with the increase of the sequences length $n$ of the finite distributions in the learning sample.
\begin{proposition}
\label{empirical  divergence rate}
If the learning target $L$ is a finite order stochastic process language, then for any hypothesis $Q \in \mathcal{Q}$ the empirical divergence rate (\ref{eqn:divergency rate limit}) 
$$\hat{\Delta}({D^L}||{D^Q})= \lim_{n\to\infty} \Delta_{\leq n}({D^L}||{D^Q})$$ 
exists and is finite.
\end{proposition}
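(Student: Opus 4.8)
The plan is to separate finiteness from existence, since finiteness comes for free and all the real work is in the limit. For every word $\mathbf{a}$ both $f^L(\mathbf{a})$ and $f^Q(\mathbf{a})$ are probabilities in $[0,1]$, so $\delta^{LQ}(\mathbf{a})\leq 1$ and hence $0\leq\Delta_i\leq 1$ for every $i$, where $\Delta_i=\Delta_i(D^L||D^Q)$. Thus every running average $\Delta_{\leq n}=\frac{1}{n}\sum_{i=1}^{n}\Delta_i$ lies in $[0,1]$ and any limit is automatically finite. I would then reduce existence to the asymptotics of the sequence $(\Delta_n)$: it is enough to show that $\Delta_n=\pi_n+o(1)$ with $\pi_n$ eventually periodic, because the Ces\`aro average of such a sequence converges to the mean of $\pi$. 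This is exactly why the definition uses a running average rather than $\Delta_n$ itself, and it is the fact I would lean on when $\Delta_n$ oscillates.

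The main tool I would develop is a monotonicity property of the largest word-probabilities. Set $\mu_n^L=\max_{|\mathbf{a}|=n}f^L(\mathbf{a})$ and $\mu_n^Q=\max_{|\mathbf{a}|=n}f^Q(\mathbf{a})$. For the target, the marginalization identity $\sum_{b\in\Sigma}P[\mathbf{a}b]=P[\mathbf{a}]$ of a stochastic language together with $f^L\geq 0$ gives $f^L(\mathbf{a}b)\leq f^L(\mathbf{a})$. For the hypothesis, writing $f^Q(\mathbf{a}b)=\operatorname{tr}(T_b\,\rho_{\mathbf{a}}^Q)$ with $\rho_{\mathbf{a}}^Q=T_{\mathbf{a}}\rho_0\geq 0$ and using that $\mathcal{T}=\sum_b T_b$ is trace preserving while each $T_b$ is completely positive yields $\sum_b f^Q(\mathbf{a}b)=\operatorname{tr}(\mathcal{T}\rho_{\mathbf{a}}^Q)=f^Q(\mathbf{a})$, hence $f^Q(\mathbf{a}b)\leq f^Q(\mathbf{a})$. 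Therefore both $\mu_n^L$ and $\mu_n^Q$ are non-increasing and converge to limits $\mu_\infty^L,\mu_\infty^Q\in[0,1]$. Since $\Delta_n\leq\max(\mu_n^L,\mu_n^Q)$, this already settles the vanishing case: if $\mu_\infty^L=\mu_\infty^Q=0$ then $\Delta_n\to 0$ and $\hat{\Delta}=0$.

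The remaining case is $c:=\max(\mu_\infty^L,\mu_\infty^Q)>0$. Here I would fix $\epsilon\in(0,c)$ and call a length-$n$ word \emph{heavy} if $f^L(\mathbf{a})\geq\epsilon$ or $f^Q(\mathbf{a})\geq\epsilon$. Because length-$n$ probabilities sum to at most $1$ under each model, there are at most $2/\epsilon$ heavy words at every length, and by the prefix-monotonicity above every prefix of a heavy word is heavy, so the heavy words form a prefix-closed tree of bounded width. A non-heavy word contributes less than $\epsilon$ to $\Delta_n$, so up to an additive $\epsilon$ the value $\Delta_n$ is the maximum of $\delta^{LQ}$ over boundedly many heavy words; along each of the finitely many infinite branches of this tree both $f^L$ and $f^Q$ are monotone and bounded, hence convergent, so $\delta^{LQ}$ converges along the branch. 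A diagonal argument as $\epsilon\to 0$ should then pin the asymptotics of $\Delta_n$ to maxima of these branch limits. As an alternative route I would use the finite-dimensional weighted-automaton representation of $g=f^L-f^Q$, built from the finite-rank linear representation of $L$ and the $N^2$-dimensional vectorized super-operator representation of $Q$, and invoke that the peripheral spectrum of the associated trace-preserving and sub-stochastic operators consists of roots of unity (quantum Perron--Frobenius), which forces $\Delta_n$ to be an eventually periodic sequence plus an exponentially small transient.

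The hard part will be that $\Delta_n$ is a maximum over the exponentially large set $\Sigma^n$, whose long-word behaviour is governed by the joint action of all products $T_{\mathbf{a}}$ rather than by powers of the single channel $\mathcal{T}=\sum_a T_a$; the monotonicity of $\mu_n^L,\mu_n^Q$ controls $\limsup_n\Delta_n$ but not, by itself, convergence of the maximum. The delicate issue is the contribution of \emph{transient} heavy branches that stay above $\epsilon$ for a while and then die, which could in principle make $\Delta_n$ oscillate with non-vanishing amplitude around the branch-limit values. This is precisely where the Ces\`aro average in the definition of $\hat{\Delta}$ does the decisive work: even if $\Delta_n$ itself fails to converge and only settles into a bounded, eventually periodic oscillation, the running average $\Delta_{\leq n}$ still converges, which is exactly the statement of the proposition.
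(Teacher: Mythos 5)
Your finiteness argument is correct and essentially free, and your two monotonicity lemmas are sound: $f^L(\mathbf{a}b)\le f^L(\mathbf{a})$ follows from the marginalization property of stochastic languages, and $f^Q(\mathbf{a}b)\le f^Q(\mathbf{a})$ from positivity of each $T_b$ together with trace preservation of $\mathcal{T}=\sum_b T_b$. The genuine gap is in the existence half, and you name it yourself but never close it. Your heavy-tree construction controls $\delta^{LQ}$ only along the finitely many infinite branches; the maximum $\Delta_n$ in (\ref{eqn:distributions divergence trace n}) can be attained at \emph{transient} heavy nodes, and nothing in your argument constrains how those transients recur as $n$ grows. Your final step then asserts that Ces\`aro averaging rescues convergence even if $\Delta_n$ merely ``settles into a bounded, eventually periodic oscillation'' --- but you never prove the eventually-periodic form $\Delta_n=\pi_n+o(1)$, and for a merely bounded oscillation the claim is false: the sequence with $\Delta_n=1$ for $2^{2k}\le n<2^{2k+1}$ and $\Delta_n=0$ otherwise is bounded, yet its running averages oscillate between roughly $1/3$ and $2/3$ and have no limit. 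Note also that your main argument never uses the hypothesis that $L$ has finite order --- the heavy tree and branch limits exist for an arbitrary stochastic language --- so it cannot possibly suffice: the proposition is precisely a statement about where that hypothesis is needed. The only place finite order appears in your proposal is the one-sentence ``alternative route'' (a finite-rank weighted-automaton representation of $f^L-f^Q$ plus peripheral-spectrum structure), which is indeed the kind of argument that could deliver $\Delta_n=\pi_n+o(1)$, but you leave it entirely undeveloped.

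For comparison, the paper's proof takes a completely different and much shorter route: it invokes the consistency of the relative entropy rate for finite-order process languages \cite{Finesso2010}, i.e.\ existence of $\lim_{n\to\infty}\frac{1}{n}D_{KL}^{n}(D^L\Vert D^Q)$ in (\ref{eqn:Hankel Dist convergence}), and then transfers this to the empirical divergence via Pinsker's inequality and the trace-norm domination (\ref{eqn:trace_dist2}); the finite-order hypothesis does all of its work inside the cited theorem. (One may debate whether domination of $\hat{\Delta}$ by a convergent relative-entropy rate by itself settles existence of the limit --- the paper's write-up is terse on exactly this point --- but that is its intended structure: cite a convergence theorem specific to finite-order processes, then bound.) If you want to complete your own approach rather than follow the paper, the remaining work is to flesh out your spectral aside: represent $f^Q$ by the $N^2$-dimensional vectorized super-operators, take a finite-rank linear representation of $f^L$ (this is where finite order enters), and analyze products of the resulting sub-stochastic operators to show that the contribution of transient heavy nodes decays or recurs with a well-defined density. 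As submitted, the proposal does not prove the proposition.
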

\begin{proof}
For the finite order process languages the estimate of the relative entropy is consistent \cite{Finesso2010}: 
\begin{equation}
\label{eqn:Hankel Dist convergence}
\hat{D}_{KL}({D^L}||{D^Q})= \lim_{n\to\infty} \frac{1}{n}D_{KL}^{n}({D^L}||{D^Q}) 
\end{equation}
\noindent
where
\begin{equation}
\label{eqn:Hankel convergence}
D_{KL}^{n}({D^L}||{D^Q})= \sum_{\mathbf{a} \in \Sigma^{n}} P\bigl[\mathbf{a} \big\vert L \bigr]\log \frac{P\bigl[\mathbf{a} \big\vert L \bigr]}{P\bigl[\mathbf{a} \big\vert Q \bigr]},
\end{equation}
\noindent
and the trace distance between the finite distributions is bounded by the relative entropy according to the Pinsker inequality:
$$ D_{KL}({D^L}||{D^Q}) \geq \frac{1}{2}D_{\operatorname{tr}}^2({D^L}||{D^Q}),$$
\noindent
where
$$D_{\operatorname{tr}}({D^L}||{D^Q})= \lim_{n\to\infty} D_{\operatorname{tr}}({D^{L}_{n}}||{D^{Q}}_{n}),$$
$$D_{\operatorname{tr}}({D^{L}_{n}}||{D^{Q}}_{n})=\frac{1}{n}\sum_{i=1}^{n} d_{\operatorname{tr}}({D^L_{i}}||{D^Q_{i}}) ,$$
$$d_{\operatorname{tr}}({D^{L}_{i}}||{D^{Q}_{i}})=\sum_{\textbf{a} \in \Sigma^{i}}\operatorname{tr}{\bigl{|} \rho_{\mathbf{a}}^L - \rho_{\mathbf{a}}^Q \bigr{|}}.$$
From the inequality in (\ref{eqn:trace_dist2}) follows
\begin{equation}
\begin{split}
D_{KL}({D_L}||{D_Q}) & \geq \frac{1}{2}D_{\operatorname{tr}}^2({D^L}||{D^Q})\\
& \geq \frac{1}{2}\hat{\Delta}^2({D^L}||{D^Q}),
\end{split}
\end{equation} 
and we can conclude that the limit (\ref{eqn:divergency rate limit}) exists in the cases when the target $L$ is a finite order stochastic process language.

\end{proof}

The empirical divergence rate (\ref{eqn:divergency rate limit}) can be estimated from a sample of the target language and is  considered to be the \textit{precision} of a hypothesis $Q$. This measure can be used in any approximate QHMM learning model even for languages which are not generated by any hypothesis in the space $\Q$.  

Theoretically the divergence between the target $L$ and a hypothesis $Q$ defined by quantum operations $\mathcal{T}^L$ and $\mathcal{T}^Q$ ($\mathbf{Definition}$ \ref{def_qhmm}) is the maximum probability divergence for sequences of length exactly $n , n>0$ over all initial states $\rho_0$ is the trace norm defined as:
\begin{equation}
\label{trace_divergence}
    {\lVert (\mathcal{T}^L)^n - (\mathcal{T}^Q)^n\rVert}_1 \hspace{0.05cm} = \hspace{0.05cm}
    \underset{\rho}{\mathrm{max}} \hspace{0.05cm}{\lVert (\mathcal{T}^L)^n \rho-(\mathcal{T}^Q)^n \rho}\rVert_1   
\end{equation}
To estimate the relation between the trace divergence and the empirical divergence we consider the quantum states which define distributions over the sequences with length $n$ :
\begin{equation}
\label{n-th power}
    \rho_{n}^{i} = (\mathcal{T}^i)^{n}\rho_0, \rho_0 \in D(\H_S), i = \{L, Q\},   
\end{equation}
where $\rho_0$ is an initial state. The trace norm ${\lVert \rho_{n}^{L} - \rho_{n}^{Q}\rVert}_1$ is bounded from below by the empirical divergence (\ref{eqn:distributions divergence trace n}) of $Q$  as follows:
\begin{equation}
\label{eqn: trc_nrm_bnd}
\begin{split}
{\lVert \rho_{n}^{L} - \rho_{n}^{Q}\rVert}_1 & = {\lVert  (\mathcal{T}^L)^{n}\rho_0 - (\mathcal{T}^Q)^{n}\rho_0\rVert}_1 \\
                                             & =  \underset{\mathbf{a} \in \Sigma^{n} }{\mathrm{max}} \hspace{0.1cm}{\lVert T_{\mathbf{a}}^L \rho_0-T_{\mathbf{a}}^Q \rho_0}\rVert_1 \\  
                                             & =  \underset{\mathbf{a} \in \Sigma^{n} }{\mathrm{max}} \hspace{0.1cm}  {\operatorname{tr}\bigl{|}\rho_{\mathbf{a}}^L - \rho_{\mathbf{a}}^Q\bigr{|}} \\
                                             & \geq \underset{\mathbf{a} \in \Sigma^{n} }{\mathrm{max}} \hspace{0.1cm} \delta^{LQ}(\mathbf{a)} \\
                                             & = \Delta_{n}({D^L}||{D^Q})
\end{split}
\end{equation}
\noindent
From the inequality (\ref{eqn: trc_nrm_bnd}) follows that, given finite sample of sequences with length $n$,  the empirical divergence between the hypothesis $Q$ and the target $L$  (\ref{eqn:distributions divergence trace n})   is a lower bound of the trace divergence (\ref{trace_divergence}):
\begin{equation}
\label{eqn:avg dive trace vs trc divergence }
\Delta_{n}({D^L}||{D^Q}) \leq {\lVert (\mathcal{T}^L)^n - (\mathcal{T}^Q)^n\rVert}_1    
\end{equation}
\noindent
This inequality will be used to demonstrate a critical feature of the  heuristic search approach to the learning of QHMMs: small changes in the norms of the unitary hypotheses lead to small changes of their precision.  

\subsubsection {Complexity of a Hypothesis}
Another measure of quality is the hypothesis' complexity. If we assume that the state Hilbert space $\H_S$ has fixed dimension $N$, where $N=\operatorname{rank}(\hat{H_L})$, then the free parameters of a hypothesis $Q$ are the unitary $U$ and the dimension of the emission component $M$. If for a unitary $U$ we denote by $c_2(U)$ the number of two qubit gates then as a simple measure of its gate complexity we use the function:

$$ C_g(Q) = \frac{c_2(U)}{\binom{NM}{2}}$$

The complexity of the unitary implementation is related also to the dimension of the emission system $\H_E, |\Sigma| \leq M \leq N^2$. The following measure reflects the dimension-related complexity:  
$$ C_e(Q) = \frac{M}{N^2}$$
\noindent
The full complexity of a hypothesis is estimated as follows:
\begin{equation}
\label{eqn: u_complexity}
C(Q) = c_{q}C_{q}(Q) + c_{e}C_{e}(Q), 
\end{equation}
\noindent
where $c_q$ and $c_e$ are real hyper parameters reflecting the trade-off between complex unitary or wider circuit.

We will integrate the hypothesis' precision (\ref{eqn:average divergence trace n}) and complexity (\ref{eqn: u_complexity}) measures into a single quality function called $\textit{fitness}$ which is defined as follows: 
\begin{equation}
\label{eqn:fitness}
    F(Q) = -\left(D^{n}({D_L}||{D_Q})+C(Q)\right),
\end{equation}
\noindent
   
\subsection{\label{subsection:qhmm_learning_problem} Formalization of the QHMM Learning Problem}
We will discuss a specific formalization of the QHMM learning problem defined as follows:
\begin{definition}[Unitary QHMM learning problem]
    \label{def_unitary_qhmm_lp}
    A Unitary Quantum HMM learning problem is defined as follows. Given:
    \begin{itemize}
    \item A sample $\bigl \{D^{L}_t : 0 < t \leq n \bigr \}$ of the finite distributions of a unknown stochastic process language $L$ (learning target),
    \item A unitary hypothesis space $$\Q = \bigl \{ Q : Q=\left( 
    \Sigma, \H_{S}, \H_{E}, U, \mathcal{M} , \rho_0 \right)\bigr \}$$  \normalfont{defined in Section \ref{subsection:hypothesis space}},
    \item A fitness function $F(Q)$ \normalfont{(\ref{eqn:fitness})} defined over the hypothesis space,
    \end{itemize}
    \noindent  
    find a hypothesis $Q^{*} \in \Q$ that maximizes the fitness function:   
    \begin{equation}
    \label{eqn:argmax}
        \mathbf{Q}^{*} = \underset{Q \in \Q}{\mathrm{argmax}} \hspace{0.1cm} F(Q).
    \end{equation}
\end{definition}

The corresponding classical HMM learning problem is expected to be computationally hard \cite{Kearns1990}, therefore we propose the optimization task (\ref{eqn:argmax}) to be approached by heuristic search algorithms. 

The heuristic search algorithms use the fitness function as a heuristic search function, providing information about the distance and direction  towards the optimal solution. Therefore the properties of the fitness function are strongly related to the complexity of the learning task. 
To examine this relationship, we utilize the concept of a \textit{fitness landscape}, which represents a multidimensional surface defined as follows:
\begin{equation}
\label{F-Landscape}
\bigl \{ F(Q):Q \in \Q \bigr \}.
\end{equation}
\noindent
Intuitively, an optimization task is expected to be efficient, if the points on the fitness landscape are correlated with the optimal point or at least with the points in their neighbourhoods. This property requires the landscape to be \textit{smooth} and any small change of a  hypothesis $Q$ results in restricted change of its fitness function. To formalize these intuitive notions, we introduce appropriate metrics in the hypothesis spaces $\Q$ and in the fitness space. 

The distance in the hypothesis space is quantified by the standard operator norm of the unitary operators. If $Q_1$ and $Q_2$ are two hypotheses with unitary operators correspondingly $U_1$ and $U_2$ ($\mathbf{Definition}$~\ref{def_unitary_qhmm_lp}), then the distance $\Delta Q^{12} = \Delta U^{12}$ is defined as follows:
\begin{equation}
    \label{eqn:delta_op_nrm}
    \Delta Q^{12}={\lVert U_1-U_2 \rVert}.
\end{equation}
According to the \textit{genetic algorithms'} terminology, the distance in the hypothesis representation space is called \textit{genotypes} distance.

To introduce distance in the fitness space, we note that the unitary HQMMs representations are  defined in a larger Hilbert space (by tensoring an emission system) and use a \textit{stabilized} version of the trace norm known as \textit{diamond norm} \cite{AKN1998}: 
\begin{equation}
    \label{eqn:dm_nprm}
    {\lVert\mathcal{T}\rVert}_\diamond = \underset{\rho \in D(\H_S\otimes \H_S)}{\mathrm{max}} \hspace{0.1cm}{\lVert (\mathcal{T}\otimes I_N)\rho \rVert}_1.
\end{equation}
\noindent
In the diamond norm definition the dimension of the emission system $M$ is equal to the dimension of the state system $N$, since the increase of $M > N$ cannot increase the trace norm.
The diamond norm induces distance between the probabilities of $1$-symbol sequences generated by the hypotheses $Q_1$ and $Q_2$ :
\begin{align*}
\begin{split}
    \Delta \mathcal{T}^{12} & =  {\lVert \mathcal{T}^1 - \mathcal{T}^2\rVert}_\diamond \\
    & = \underset{\rho \in D(\H_S\otimes \H_S)}{\mathrm{max}} \hspace{0.1cm}{\lVert (\mathcal{T}^1-\mathcal{T}^2)\otimes I_N\rho \rVert}_1, 
\end{split}
\end{align*}
which we call a \textit{phenotypes} distance. Since the diamond norm is a stabilized version of the trace norm (Lemma 12, \cite{AKN1998}) we have the inequality:
\begin{equation}
\label{Diamond dominates Trace}
{\lVert (\mathcal{T}^1-\mathcal{T}^2) \rVert}_1 \leq {\lVert \mathcal{T}^1 - \mathcal{T}^2\rVert}_\diamond,   
\end{equation}
\noindent
and from inequality (\ref{eqn:avg dive trace vs trc divergence }) follows, that the average divergence between the hypothesis $Q_1$ and $Q_2$ on finite sample of sequences with length $1$ is a lower bound of the diamond norm:
\begin{equation}
\label{eqn:avg dive trace vs diamond}
\Delta_{1}({D^1}||{D^2})  \leq  {\lVert \mathcal{T}^1 - \mathcal{T}^2\rVert}_\diamond   
\end{equation}
\noindent

The introduced distances in hypothesis and fitness spaces imply smooth fitness landscape for the single-symbol sequences,  due to the Continuity of Stinespring’s representation Theorem \cite{KRETSCHMANN20081889}. According to the Theorem, two quantum operations $\mathcal{T}^1,\mathcal{T}^2$, respectively two QHMMs, are close in diamond norm iff they have unitary representations which are close in operator norm:   
\begin{equation}
    \label{eqn:cb_nprm}
    \frac{1}{2}\Delta \mathcal{T}^{12} \leq \Delta U^{12} \leq\sqrt{\Delta \mathcal{T}^{12}}
\end{equation}
This inequality suggests that the operator norm difference between the hypotheses dominates the trace (\ref{Diamond dominates Trace}) and average (\ref{eqn:avg dive trace vs diamond}) empirical divergences of their single-symbol distributions: 
\begin{equation}
    \label{eqn:op_norm_trace_distance}
    \frac{1}{2} \Delta_{1}({D^1}||{D^2}) \leq \frac{1}{2} {\lVert \mathcal{T}^1 - \mathcal{T}^2\rVert}_1  \leq \Delta U^{12}.  
\end{equation}
The distributions over sequences with any finite length $n$ are generated by the $n-th$ powers of the quantum operations (\ref{eqn:seq_probability}) and their unitary dilations: 
\begin{equation}
\label{eqn: n-th power of U}
    (\mathcal{T}^{i})^{n}\rho_0=U_{i}^n\rho_0 U_{i}^{{\dagger}n}, \rho_0 \in D(\H_S), i \in {1,2}.    
\end{equation}
The operator-norm distance between the unitary operators defining $n$-symbol sequences is related to the genotypes' distance $\Delta Q^{12}={\lVert U_1-U_2 \rVert}$ as follows:

\begin{equation}
    \label{eqn:op_norm_trace_distance nth rewrite}
    \begin{split}
        {\lVert U_1^{n}-U_2^{n} \rVert} & = {\Big\| (U_1-U_2)\sum_{k=1}^n U_1^{n-k}U_2^{k-1} \Big\|}\\
                                        & \leq {\Big\| U_1-U_2\Big\|} {\Big\| \sum_{k=1}^n U_1^{n-k}U_2^{k-1}\Big\|}\\
                                        & \leq {\big\| U_1-U_2\big\|} (\sum_{k=1}^n \big\| U_1^{n-k}U_2^{k-1}\big\|)\\
                                        & \leq n{\big\| U_1-U_2\big\|}.
    \end{split}
\end{equation}

\noindent
For the case of quantum operations defining $n$-symbol sequences, and accounting that the diamond norm is stabilized trace norm, from the Continuity Theorem (\ref{eqn:op_norm_trace_distance}) follows:
\begin{equation}
    \label{eqn:op_norm_trace_distance nth iteration}
    \frac{1}{2} {\big\| (\mathcal{T}^1)^{n} - (\mathcal{T}^2)^{n}\big\|}_1 \leq {\big\| U_1^{n}-U_2^{n} \big\|}.
\end{equation}
By applying (\ref{eqn:op_norm_trace_distance nth rewrite}) and (\ref{eqn:trace_dist2}),(\ref{eqn:distributions divergence trace n})  to (\ref{eqn:op_norm_trace_distance nth iteration}) we can derive an upper bound estimate for the $n-$symbol distributions divergence in terms of genotypes' distance in the hypothesis space :
\begin{equation}
\label{eqn: stepn}
    \frac{1}{2n} \Delta_n({D^1}||{D^2}) \leq \frac{1}{2n} {\big\| \mathcal{T}_1^{n} - \mathcal{T}_2^{n}\big\|}_1 \leq \Delta U^{12}
\end{equation}
The empirical divergence of a learning sample of sequences with lengths up to $n$ is bounded as follows: 
\begin{equation}
\label{eqn: sum stepn1}
    \frac{1}{2n}\sum_{i=1}^n\frac{1}{i} \Delta_i({D^1}||{D^2}) \leq \Delta U^{12}
\end{equation}
\noindent
This inequality allows the precision component in the fitness distance between of two hypotheses to be estimated by the operator norm distance of their unitary representations(\ref{eqn:average divergence trace n}) :
\begin{equation}
\label{eqn:restricted distributions divergence trace}
\Delta_{\leq n}({D^1}||{D^2})  = \frac{1}{n}\sum_{i=1}^n \Delta_{i}({D^1}||{D^2})
\end{equation}
\noindent
 By replacing the harmonic weight $\frac{1}{i}$ of the divergence in (\ref{eqn: sum stepn1}) by $\frac{1}{n}$ and applying (\ref{eqn:restricted distributions divergence trace}): 
 $$\Delta_{\leq n}({D^1}||{D^2})  \leq \sum_{i=1}^n\frac{1}{i} \Delta_{i}({D^1}||{D^2}) $$
 we derive the estimate:
\begin{equation}
\label{eqn: final estimate}
    \frac{1}{2n} \Delta_{\leq n}({D^1}||{D^2}) \leq \Delta U^{12}
\end{equation}

Inequality (\ref{eqn: final estimate}) demonstrates that the fitness landscape of the QHMMs learning problem is smooth: for every two unitary hypothesis close in genotype distance (operator norms) the corresponding phenotypes' distance (empirical distributions divergence) for any sequences length $n$ is restricted.

To experimentally investigate the difficulty of the heuristic approach to the QHMM learning task we estimate the landscape properties of the the QHMM learning problem discussed in Example \ref{exm: market}. The dependency between the change in the operator norm and the corresponding divergence of the sequence distributions is estimated by a random walk starting at the optimal hypothesis $Q^*$ with unitary $U^* = U_0$. A new unitary is generated at every step $t = 1,2,\dots$ by random single parameter mutation with standard deviation $10\%$. The expected fitness divergences (phenotypes distance) $\Delta_{n}(D^*||D^t)$ (\ref{eqn:distributions divergence trace n}) for sequences with lengths $n\in[2\dots5]$, and the expected total fitness divergence $\Delta_{\leq 5}({D^*}||{D^t})$ (\ref{eqn:restricted distributions divergence trace}) conditioned on the genotypes distance (\ref{eqn:delta_op_nrm})  $ \Delta U={\lVert U^{*}-U_t \rVert}, \Delta U \in [0,1]$ are shown on \fig{ Landscape Continuity}.

\begin{figure}[ht]
    \centering
    \begin{minipage}{.5\textwidth}
        \centering
        \includegraphics[width=1.0\linewidth]{./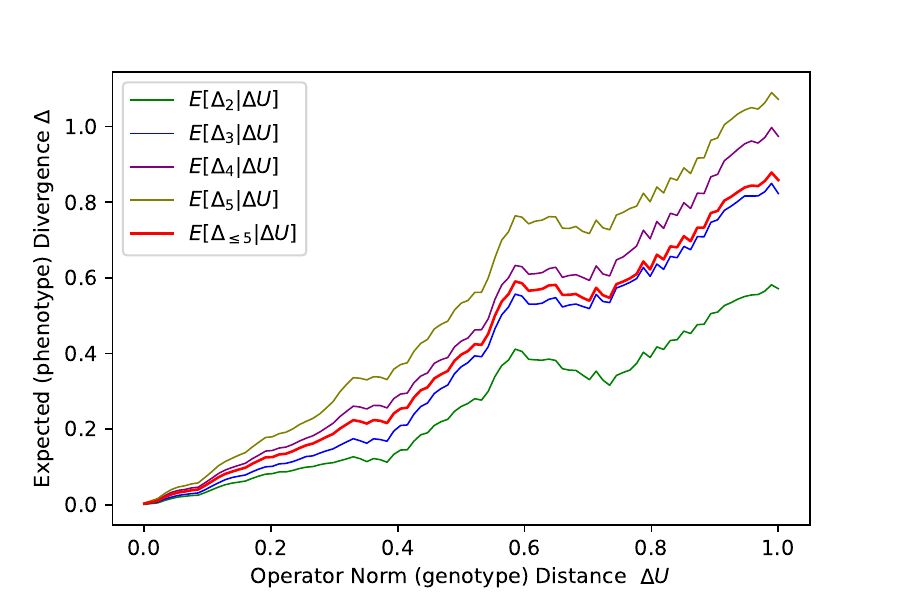}
        \caption{Smoothness of Fitness Landscape}
        \label{fig: Landscape Continuity}
    \end{minipage}
\end{figure}

\begin{figure}[ht]
    \centering
    \begin{minipage}{.5\textwidth}
        \centering
        \includegraphics[width=1.1\linewidth, height = 0.7\linewidth]{./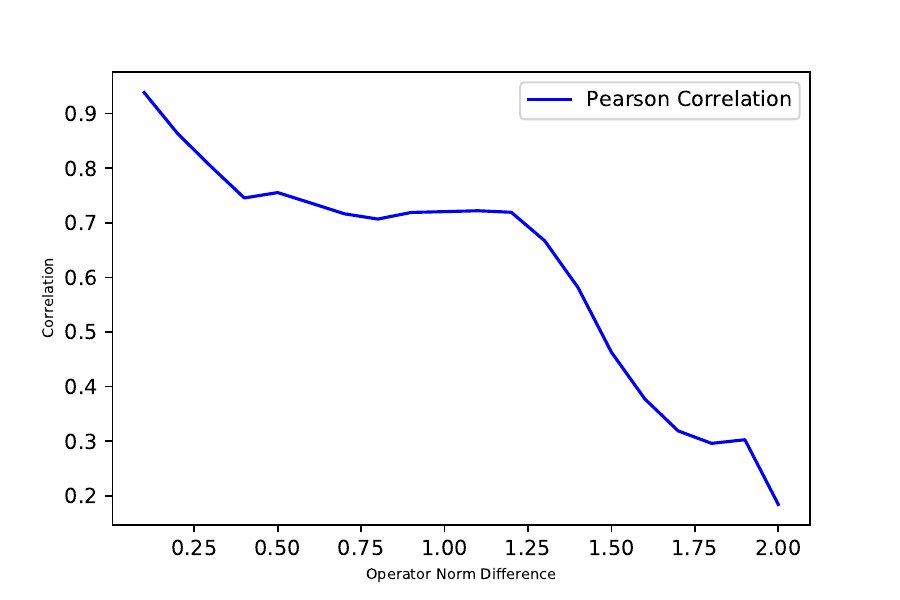}
        \caption{Correlation of Fitness Landscape}
        \label{fig: Landscape Correlations}
    \end{minipage}
\end{figure}

On \fig{ Landscape Correlations} we present estimations of the  Pearson correlation coefficient of the fitness of a hypothesis and its distance to the optimal solution for sequences with length up to 5 symbols: 
\begin{equation}
    \label{eqn:Pearson}
    r = \frac{\operatorname{cov(\Delta_{\leq 5}({D^*}||{D^t})), \Delta U)}}{\operatorname{std(\Delta_{\leq 5}({D^*}||{D^t}))std( \Delta U)}}
\end{equation}

The coefficient is estimated for three mutation rates and can be used as measure of the expected effectiveness of the evolutionary algorithms applied to the QHMM learning problem.

\subsection{\label{subsection:qhmm_evolutionary_learning}QHMM Learning Algorithm}
The learning algorithm takes as input an estimation of the Hankel matrix associated with the unknown process language $\hat{H}^L \in R^{\Sigma^{\leq n} \times \Sigma^{\leq n}}$ and the specification of a hypothesis space $\Q$. Two of the components of the hypothesis space are fixed by the  Hankel matrix $\hat{H}^L$: the alphabet $\Sigma$ and the dimension of the state Hilbert space $\H_S$ which is $\hat{N}=\operatorname{rank}(\hat{H}^L)$. The free parameters, subject of optimization, are the dimension $M$ of the emission Hilbert space $\H_E$, the unitary $U$, the mapping $\M$, and the type of the initial density $\rho_0$. In order to simplify the discussion and focus on the essential part of the algorithm, we will assume that only the unitary transformation $U$ will be learned.   

The hypothesis space $\Q$ comprises both discrete and continuous components, represented by structural and parametric subspaces. The structural subspace contains the non-parametric parts $g = \langle (t),(q),(.)\rangle$ of the linear quantum circuits' $\{\mathcal{C}_U \}$ as defined in (\ref{eqn:c_u}).
The parametric subspace is defined by vectors $\{\mathcal{P}_U \}$ with circuit gates' parameters:
\begin{equation*}
    \label{eqn:p_u}
    \mathcal{P}_U=(p : \exists g \in\mathcal{C}_U, g = \langle (.),(.),(p) \rangle.  
\end{equation*}
\noindent
The algorithm performs evolutionary global search in the structural subspace as the fitness of every circuit structure is estimated by parametric optimization of (\ref{eqn:fitness}):
\begin{equation}
    \label{eqn:fit_c}
    Fit(\mathcal{C}_U) = \underset{\mathcal{P}_U}{\mathrm{max}}F(Q_U),
\end{equation}
where $\mathcal{C}_U$ is the unitary encoding of the hypothesis $Q_U$.
The optimal parameters ${P}_U^* = \underset{\mathcal{P}_U}{\mathrm{argmax}}F(Q_U))$ of every hypothesis $Q_U$ are calculated by local multivariate nonlinear optimization procedure. Derivative-free solvers as $Powell, TNC, Cobyla$ as well as gradient- based $BFSG, GC, SLSQP$ have been used. The type of solver for each structure optimization is selected by an adaptive local search optimization procedure. After the optimization the optimal parameters ${P}_U^*$ become parameters of the hypothesis and the optimal fitness value becomes fitness of the hypothesis. The combination of evolutionary global search in the space of quantum circuit structures and  local parametric optimization classifies our learning method as \textit{Lamarckian learning} \cite{Lamarckian}.  

\begin{algorithm}[ht]
 \LinesNotNumbered
    \caption{Random Hypothesis}
    \label{alg:random}
        \begin{algorithmic} 
        \Procedure{RandomGate}{$\Q$}
            
            \State {$\textbf{t} \sim \operatorname{gatesDistribution}( \mathcal{G} )$} \qquad \qquad \Comment{Select gate's type}
            \State {$q_c, q_d \sim \operatorname{qubitsDistribution} (N, M)$} \qquad  \qquad \Comment{Select control and data qubits} 
            \State {$p_1,p_2 \sim \operatorname{parametersDistribution} ([0,\,8\pi])$}\qquad\Comment{Select gate parameters} 
            \State {$gate \gets \langle \textbf{t}, (q_c, q_d), (p_1,p_2) \rangle$} \qquad \quad \qquad \quad \qquad \quad\Comment{Create a gate}
            \State \Return{} $gate$ 
        \EndProcedure       
        \Procedure{RandomHypothesis}{$\Q$}
            \State {$minGts, maxGts$ $\gets$ $const_1, const_2$} \qquad \qquad \qquad \quad\Comment{Min-Max number of gates}
            \State {$numGts$ $\sim$ $\operatorname{Uniform}$($minGts, maxGts$)} \qquad \Comment{Random number of gates}
            \State {$\mathcal{C} \gets$ [\,\,\.]}\qquad \qquad\qquad \qquad \qquad\qquad \qquad \qquad  \Comment{Initial circuit for the hypothesis}
            \For{$g = 1$ \KwTo $numGts$} {
              \State{\hspace{\algorithmicindent} $\mathcal{C}$ $\gets$ $\mathcal{C}$ + [\Call{RandomGate}{$\Q$}]}
            }
            \State $\mathcal{P}_{\mathcal{C}}^{*}$ $\gets$ $\underset{\mathcal{P}_{\mathcal{C}}}{\mathrm{argmax}}\hspace{0.1cm} F(\mathcal{C})$  \Comment{Find parameters $\mathcal{P}_{\mathcal{C}}^*$ of $\mathcal{C}$ which maximize the fitness}
            \State $F_{\mathcal{C}}$ $\gets$ $ F(\mathcal{C}^*)$  \Comment{Fitness value of $\mathcal{C}^*$ } 
            \State \Return{} $\mathcal{C}^*$ 
        \EndProcedure         
    \end{algorithmic}
\end{algorithm}

\begin{algorithm}[ht]
    \LinesNotNumbered
    \SetKwInOut{Input}{Input}\SetKwInOut{Output}{Output}
    \DontPrintSemicolon
    \caption{Evolutionary Learning}\label{alg:evo_qhmm}
    
    \Input{$\mu, \lambda, \Q$ (\ref{eqn:hyp_space}), $F(Q)$ (\ref{eqn:fitness}), $F^*$, $gMax$, $\mathcal{G}$ }
    \Output{Best Hypothesis $Q^{*} = \underset{Q \in \Q}{\mathrm{argmax}} \hspace{0.1cm} F(Q)$}
    
    \textbf{Initialize} 
    \begin{itemize} 
        \setlength{\itemsep}{0pt}%
        \setlength{\parskip}{0pt}%
        \item {$g \gets 0$}   \Comment*[f]{ Evolutionary generation} 
        \item {$t \gets 0$}   \Comment*[f]{  Search progress steps  } 
        \item {$P_g \gets \Call{Random}{\mu, \Q}$} \Comment*[f]{Initial population: $\mu$ random hypotheses} 
         
    \end{itemize}
    
    \While{ $\underset{Q \in P_g}{\mathrm{max}}\hspace{0.1cm} F(Q) < F^* \operatorname{and} g < gMax$}{
        $\tau       \gets \mathbf{Temperature}(t)$ \\ 
        $Parents \gets \mathbf{Select} (P_g,\lambda, selectionDistribution)$ \\
        $Children \gets \mathbf{Modify}(Parents,\tau, modificationDistributions)$ \\
        $P_{g+1}   \gets \mathbf{Select} (P_g\cup Children,\mu, survivalDistribution)$ \\
        $g \gets g + 1$ \\
        \Comment*[f]{Check for search progress during last $progWin$ steps } \\
        \eIf{ $\mathbf{noProgress}(progWin)$} 
            {
                $t\gets 0$
            }{
                    $t\gets t+1$
            }
        
        $Distributions \gets \mathbf{Adapt}(Distributions)$
    }
    
    \Return{$\underset{Q \in P_g}{\mathrm{argmax}}\hspace{0.1cm} F(Q)$ }
\end{algorithm}

\begin{algorithm}[ht]
 \LinesNotNumbered
    \caption{Mutate a Hypothesis}
    \label{alg:mutation}
        \begin{algorithmic} 
        \Procedure{MutateHypothesis}{$\mathcal{C}$, $pos$, $mType$}
        
          \Switch{$mType$}{
            \hspace*{0.25cm}\textbf{when} $"gte"$ $:$ $\mathcal{C}[pos].gate \sim \operatorname{gatesDistribution}( \mathcal{G} )$ \Comment{ Mutate gate's type }\par
            \hspace*{0.25cm}\textbf{when} $"qbt"$ $:$ $\mathcal{C}[pos].qubits \sim \operatorname{qubitsDistribution}( N, M )$ \Comment{ Mutate gate's qubits }\par
            \hspace*{0.25cm}\textbf{when} $"rpl"$ $:$ $\mathcal{C}[pos] \gets  \Call{RandomGate}{\Q}$ \Comment{ Replace gate }\par
            \hspace*{0.25cm}\textbf{when} $"dlt"$ $:$ $\mathcal{C}[pos:] \gets \mathcal{C}[pos+1:]$  \Comment{ Delete gate }\par
            \hspace*{0.25cm}\textbf{when} $"ins"$ $:$ $\mathcal{C} \gets \mathcal{C}[:pos] + [\Call{RandomGate}{\Q}] + \mathcal{C}[pos:]$  \Comment{ Insert gate }\par
         }
        \Return{$\mathcal{C}$}
        \EndProcedure         
    \end{algorithmic}
\end{algorithm}

\begin{algorithm}[ht]
 \LinesNotNumbered
    \caption{Hypothesis Modification}
    \label{alg:Hypothesis Modification}
        \begin{algorithmic}      
        \Procedure{ModifyHypothesis}{$\mathcal{C}$}
            \Indp            
            \State {$\mathcal{C}_{best}$ $\gets$ $\mathcal{C}$ } \Comment{Current best hypothesis}

            \State {$\mathcal{C}_{current}$ $\gets$ $\mathcal{C}$ } \Comment{Current hypothesis}

            \State {$searchSteps \sim \operatorname{localSearchLen}(1..max)$}\qquad \Comment{Random local search steps}

            \For{$step = 1$ \KwTo $searchSteps$} {
                \Indp
                \State {$sType \sim \operatorname{localSearchType}(\mathbf{"depth"}, \mathbf{"breadth"})$}  \Comment{local search type}
                \State {$mProb \sim \operatorname{mutationRate}(0.1, ..., 0.5)$}  \Comment{Mutation rate }
                \State {$\mathcal{C}$ $\gets$ $\mathcal{C}_{current}$ } \Comment{Current hypothesis to search around}
            
                \For{ $pos = 1$ \KwTo $\operatorname{Size}(\mathcal{C})$}
                 { 
                    \If{$\Call{Random}{}> mProb$} {
                        \State \hspace*{0.19cm} {$mType \sim \operatorname{mutationType}("gte", ..., "ins")$}  \Comment{Mutation type }
                         \Call{MutateHypothesis}{$\mathcal{C}$, $pos$, $mType$}  \Comment{Mutatate hypothesis at pos}
                    }
                }
                \State $\mathcal{P}_{\mathcal{C}}^{*}$ $\gets$ $\underset{\mathcal{P}_{\mathcal{C}}}{\mathrm{argmax}}\hspace{0.1cm} F(\mathcal{C})$  \Comment{ Find parameters $\mathcal{P}_{\mathcal{C}}^*$ which maximize the fitness}
                
                \State $F_{\mathcal{C}}$ $\gets$ $ F(\mathcal{C})$  \Comment{Fitness value of $\mathcal{C}$ with optimal parameters $\mathcal{P}_{\mathcal{C}}^{*}$ }

                \If{$\Call{F}{\mathcal{C}}> \Call{F}{\mathcal{C}_{best}}$} {
                    \State {$\mathcal{C}_{best}$ $\gets$ $\mathcal{C}$ }  \Comment{ Save the new best hypothesis}}
                \If{$\Call{Accept}{\operatorname{F}(\mathcal{C}),\operatorname{F}(\mathcal{C}_{current}), Temperature} $} {
                    \State {$\mathcal{C}_{current}$ $\gets$ $\mathcal{C}$ }   \Comment{ New current  hypothesis}}            
            }

            \If{$\Call{Accept}{\operatorname{F}(\mathcal{C}),\operatorname{F}(\mathcal{C}_{best}), Temperature} $}
            {
                \State { \hspace{0.5cm} $\mathcal{C}_{best}$ $\gets$ $\mathcal{C}$ }   \Comment{ new preferred hypothesis}} 
            \State \Return{} $\mathcal{C}_{best}$
        \EndProcedure         
    \end{algorithmic}
\end{algorithm}

The algorithm starts with random generation of a finite sample of $\mu$ hypotheses forming the initial \textit{population} (Algorithm \ref{alg:random}). The process of iterative improvement of the population consists of three base randomized steps, including selection of set of existing hypotheses to be improved referred to as parents, modification of the parents to generate offspring or children, selection from the parents and children a new generation of the population (Algorithm \ref{alg:evo_qhmm}). The modification operations include range of hypothesis mutations (Algorithm \ref{alg:mutation}) utilized by a stochastic local search (Algorithm \ref{alg:Hypothesis Modification}). 
Since the random operations selecting offspring and survivors are biased towards hypotheses with higher fitness (\textit{survival of the fittest}), the entire population evolves towards regions of the space $\Q$ which contain better potential solutions and eventually an optimal solution. The trade-off between exploration and exploitation within a hypothesis space region is controlled by a global variable referred to as  $\textit{Temperature}$. The temperature defines the probability of accepting a new hypothesis, even if it is not superior to its parent. The temperature gradually decreases with the search progression, thereby increasing the likelihood of selecting only better solutions. If a certain threshold of steps is reached without any fitness improvement, the temperature is reset to its highest value to allow exploration of new regions. The temperature($\tau$) is calculated as a function of the number of steps $t$ (FIG (\ref{fig:temp_control})) as follows:

\begin{equation*}
\tau={(t^{\frac{3}{2}}+1)}^{-\frac{1}{4}}.    
\end{equation*}

The probability a new hypothesis with fitness $F_{new}$ to be accepted instead of its superior parent with fitness $F_{old} \geq F_{new}$ at temperature $\tau$ is defined by:
$$P_{new}= \exp(\frac{0.6}{\tau}\frac{F_{old}-F_{new}}{F_{old}})$$
\begin{figure}[ht]
    \centering
    \begin{minipage}{.5\textwidth}
        \centering
        \includegraphics[width=1.0\linewidth]{./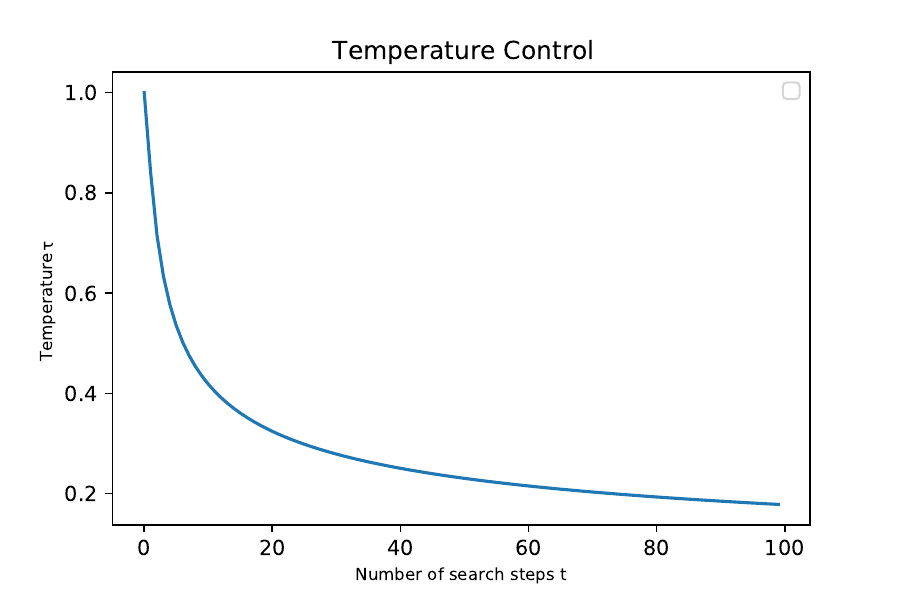}
        \caption{Temperature control}
        \label{fig:temp_control}
    \end{minipage}
\end{figure}

\begin{figure}[ht]
    \centering
    \begin{minipage}{.5\textwidth}
        \centering
        \includegraphics[width=1.0\linewidth]{./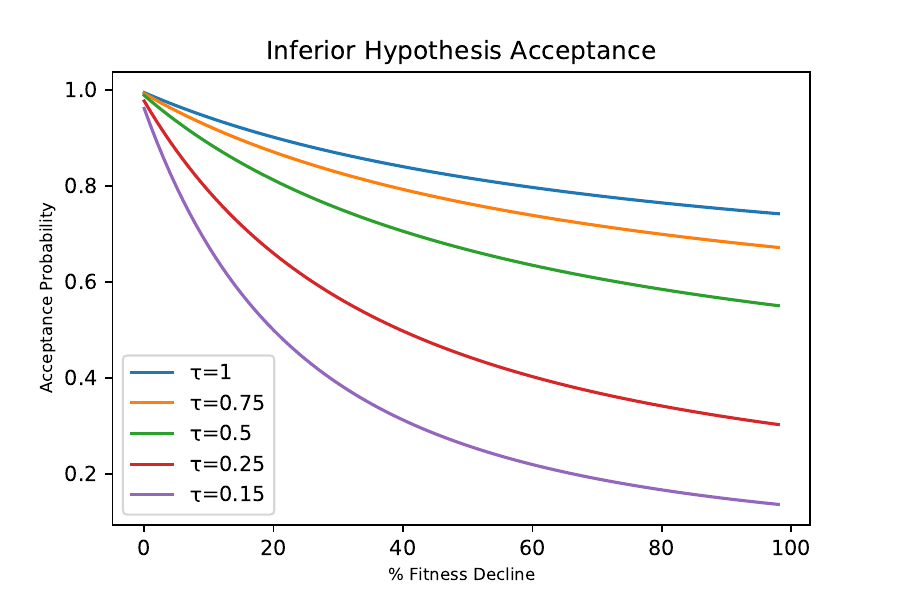}
        \caption{Inferior Hypotheses Acceptance}
        \label{fig:Hypothesis_Acceptance_Probability}
    \end{minipage}
\end{figure}

\begin{figure}[ht]
    \centering
    \begin{minipage}{.5\textwidth}
        \centering
        \includegraphics[width=1.0\linewidth]{./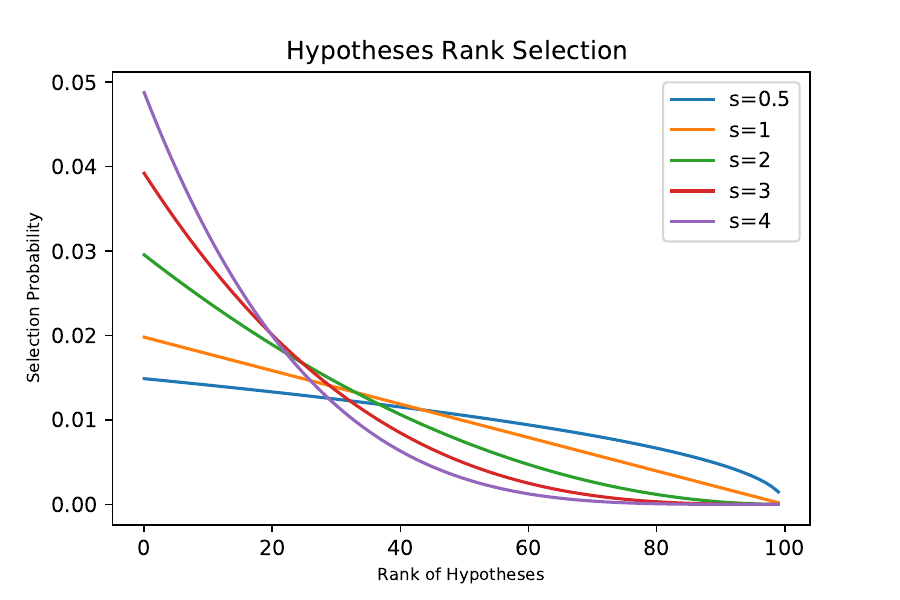}
        \caption{Rank Selection Distributions}
        \label{fig:Hypothesis selection distribution}
    \end{minipage}
\end{figure}

\begin{figure}[ht]
    \centering
    \begin{minipage}{.5\textwidth}
        \centering
        \includegraphics[width=1.0\linewidth]{./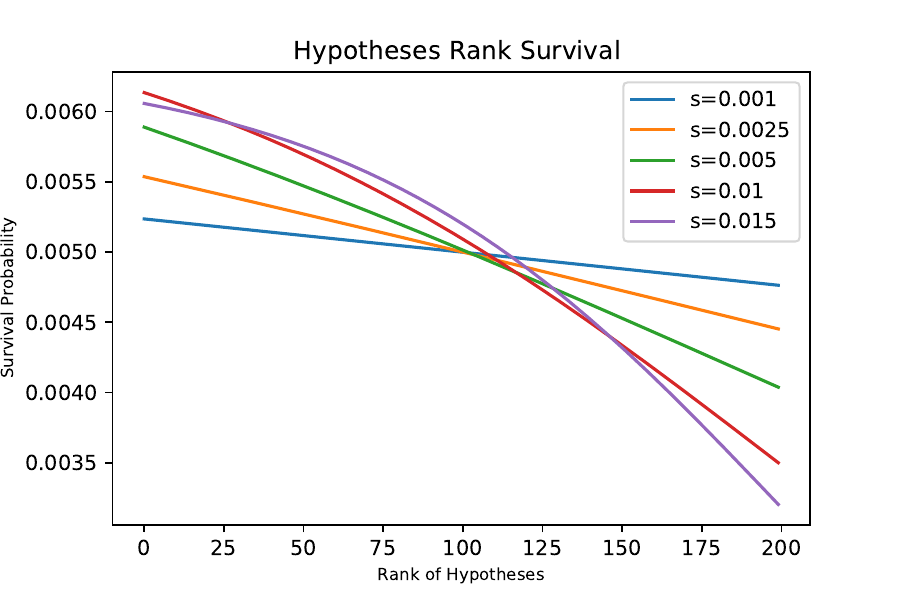}
        \caption{Rank Survival Distributions}
        \label{fig: Hypothesis survival distribution}
    \end{minipage}
\end{figure}

\begin{figure*}[ht]
    \centering
        \includegraphics[scale=1]{./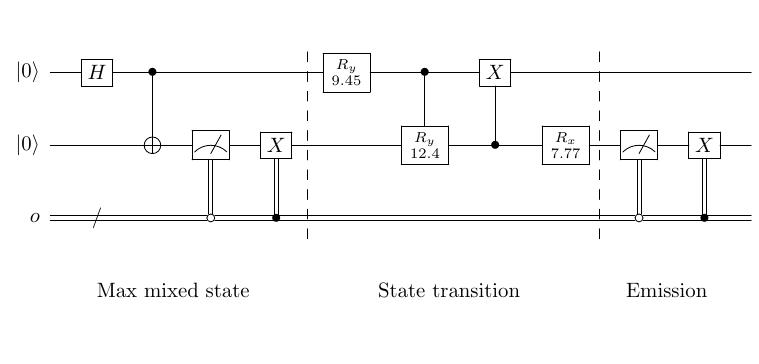}
         \caption{QHMM Defining Market Distribution}
    \label{fig:example31fig}
\end{figure*}

\begin{figure*}[ht]
        \centering
        \includegraphics[scale=1]{./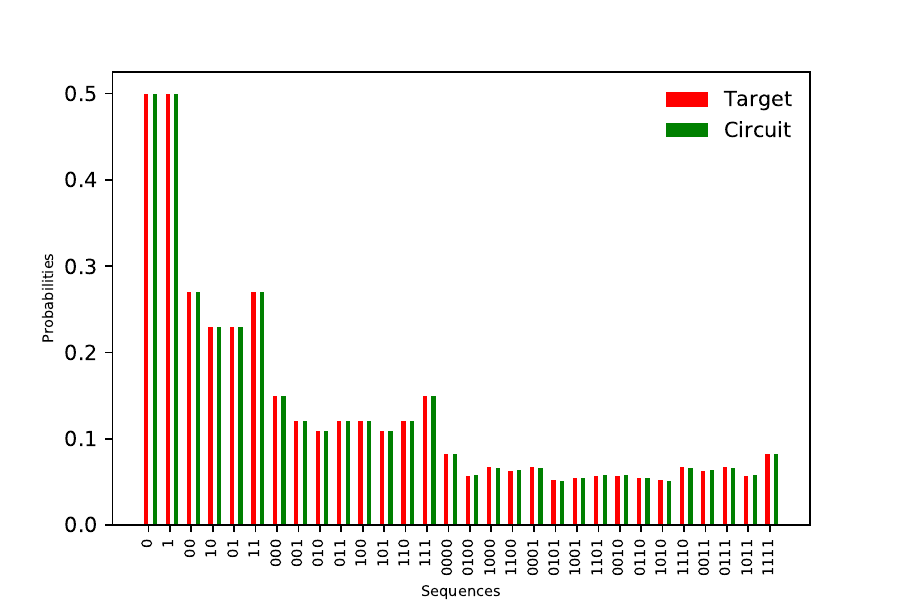}
    \caption{Targeted and Learned Market Distributions}
    \label{fig:example32fig}
\end{figure*}

The control of hypothesis acceptance probability by the global temperature $\tau$ is illustrated \fig{Hypothesis_Acceptance_Probability}.

The selection of a set of hypotheses, referred to as parents, to be modified and evaluated as potential new members of the population relies on the fitness of these hypotheses. The candidates with better fitness are more likely to be selected for further consideration. This approach can be successful in case of unimodal landscapes. However, if the fitness landscape is rugged, characterized by numerous local extrema, employing a uniform, strong selection pressure strategy can result in premature convergence towards a local solution. Therefore, in our algorithm we use selection distributions based on rank, fitness, and tournaments.
Every selection distribution type is used with up to 5 levels of selection pressure. For example, let's consider rank-based selection where the hypothesis are ranked according their fitness values with the fittest hypothesis receiving the lowest rank. Then the probability a hypothesis with rank  $i$  to be selected for offspring generation is:
$$
 P_{selection}[Q_i]= \frac{(\mu - i)^s}{\sum_{r=1}^{\mu}(\mu - r)^s},
$$
where $\mu$ is the size of the population and $s$ is a parameter called \textit{selection strength}. The shape of the rank selection distribution for different values of the parameter is shown on \fig{Hypothesis selection distribution}. 

The survival distributions define the chance a hypothesis - old or newly generated - to participate in the new generation. These distributions are also based on rank and fitness and use up to 5 levels of selection pressure. The probability a hypothesis with rank  $i$  to be selected to ``survive'' in the next generation is:
\begin{equation}
 P_{survival}[Q_i]= \frac{(d_i+1)^{-1}}{\sum_{r=1}^{\mu}(d_r+1)^{-1}}, 
\end{equation}
where
\begin{equation}
 d_r = \exp{s(r-\mu - \lambda)},
\end{equation}
$\mu$ is the size of the population, $\lambda$ is the size of the offspring,  and $s$ is a parameter called \textit{survival strength}. The shape of the rank survival distribution for different values of the parameter is shown on \fig{ Hypothesis survival distribution}.

The evolutionary algorithm uses multiple distributions to implement the processes of selection, modification, and survival (TABLE~\ref{tab:distributions descriptions}). The parameters of these distributions along with the population size $\mu$ and the offspring size $\lambda$ are referred to as \textit{hyperparameters} of the algorithm. 
The hyperparameters are critical for the performance of the algorithm. Since each learning task is unique and the fitness landscape can change during the evolutionary process, it is essential to dynamically adjust the hyperparameters as the learning evolves. To achieve adaptive control over the hyperparameters, we employ a reinforcement learning algorithm based on the "multi-armed bandit model". This algorithm effectively addresses the exploration-exploitation trade-off, and maximizes the cumulative reward obtained through the parameters adaptation.   
Let's assume that the evolutionary algorithm utilizes a procedure or distribution $A(h)$ which depends on a parameter $h \in \{h_1, \cdots, h_k\}$. For a generation $t$ we define a distribution $D_{t}^h = \{ p_i: p_i = P[h_i]\}$. Every time when $A$ is used during the generation $t$ we first sample a value $h \sim D_{t}^h$ and then use it for this particular activation $A(h)$. We register if the particular usage of $A(h)$ has been successful or not. Success is considered when the fitness of a hypothesis or a population has increased and it this case the value of $h$ receives reward $1$. in the end of the generation $t$ we will have a reward vector  $\mathbf{r}_t = [r_t^{1}, \cdots, r_t^{k}]$, where $r_t^{j}$ is the number of times the usage of the value $h=h_j$ during the generation $t$ has resulted in fitness improvement. Let the initial distribution be the uniform distribution  $D_{0}^h = \{\frac{1}{k}, \cdots, \frac{1}{k} \}$. The distribution at generation $t+1$ is defined as follows:
$$D_{t+1}^h = \Biggl\{ p_i: p_i = \gamma\frac{1}{k}+(1-\gamma)\frac{r_t^{i}}{\sum_{j=1}^{k}r_t^{j}}, i=[1,k]\Biggr\},$$
\noindent
where $\gamma \in [0, 1]$ quantifies the trade-off between exploration and exploitation. The pseudo code of the algorithm is presented as Algorithm \ref{alg:evo_qhmm}.

\begin{table*}[ht]
        \begin{tabular}{|l|l|l|}
            \hline
            \hspace*{1.5mm}{$Distribution$}\hspace*{1.5mm} &  \hspace*{1.5mm}Description &  \hspace*{1.5mm}Domain  \\
            \hline
            selectionDistributionTypes &   \hspace*{1.5mm}Probabilities of  selection types. & $[Fitness, Rank, Tournament]$\\
            selectionDistributionStrength &   \hspace*{1.5mm}Probabilities of selective pressure levels &  $[0.1, 0.2, 0.5, 0.7, 1]$\\
            survivalDistributionTypes&   \hspace*{1.5mm}Probabilities of survival types & $[Fitness, Rank]$\\
            survivalDistributionStrength&   \hspace*{1.5mm}Probabilities of survival pressure levels. &  $[0.1, 0.2, 0.5, 0.7, 1]$\\
            gatesDistribution &   \hspace*{1.5mm}Probabilities of gate types &  $\mathcal{G}=\{g_0 \cdots g_k  \}$\\
            qubitsDistribution &   \hspace*{1.5mm}Probabilities  of qubits &  $[1,n+m], P(n+m,2) $\\
            localSearchLen Distribution &   \hspace*{1.5mm}Probabilities  of local search steps &  $[1,10]$\\
            localSearchType Distribution &   \hspace*{1.5mm}Probabilities  of local search types &  $[depth,breadth]$\\ 
            mutationRate Distribution &   \hspace*{1.5mm}Probabilities  of hypothesis' mutation rates &  $[0.1, ... 0.5]$\\ 
            mutationType Distribution &   \hspace*{1.5mm}Probabilities  of hypothesis' mutation types &  $['gte','qbt','rpl','dlt','ins']$\\
            optimizationAlgs Distribution &   \hspace*{1.5mm}Probabilities of nonlinear solvers &  $['tnc','cbla','bfsg','gc','slsqp']$\\
            \hline
        \end{tabular}
        \caption{Adaptive distributions used by the evolutionary algorithm}
        \label{tab:distributions descriptions}
    \end{table*}

\begin{example}
\label {exm: market}
    \normalfont
    In this example we apply the QHMM learning algorithm to the Market classical HMM discussed in \textbf{Example}~\ref{example1}.
    The minimal order of the classical model estimated using the Hankel matrix is: $n=4$.
    The learning sample as distributions of sequences with lengths up to $t=2n-1=7$ was generated using the string
    function of the classic model (\eqn{hmm_sequence_function}).
    The QHMM sequences distribution is generated using the sequence function (EQ~\ref{eqn:hmm_sequence_function}).
    The preliminary model specification based on the learning sample analysis is the following:
    \begin{equation*}
        \textbf{Q}=\{ \Sigma, \H_{S}, \H_{E}, U, \mathcal{M} , \rho_0\}
    \end{equation*}
    where $\Sigma=\{0,1\}$, $\H_{S}$ is a Hilbert space with dimension $N=\sqrt{n}=2$ (i.e. 1 qubit quantum system),
    $\H_{E}$ is a Hilbert space with dimension $2$ (i.e. 1 qubit quantum system and the measurement basis is $\{ [0],\, [1] \}$), $U$ is a unitary operation on the Hilbert space $\H_E \otimes \H_S$ implemented by quantum
    gates in $\mathcal{G}=\{X,Y,RX,RY\}$, and $\rho_0$ is the tensor product of uniformly distributed state and
    grounded emission systems.

    Exact match (i.e zero-divergence between classic and quantum distributions) was reached in average of 150
    iterations by population of 100 solutions.
    The best solution is shown in \fig{example31fig}.
    Part of the target and learned distributions at the end of the search are shown in \fig{example32fig}.
    
\end{example}

    \begin{table*}[ht]
        \begin{tabular}{c|l}
            \hline
            \hspace*{1.5mm}{$S$}\hspace*{1.5mm} &  \hspace*{1.5mm}Description  \\
            \hline
            0 &   \hspace*{1.5mm}Low Variance \\
            1 &   \hspace*{1.5mm}Low-Medium Variance\\
            2 &   \hspace*{1.5mm}Medium-High Variance\\
            3 &   \hspace*{1.5mm}High Variance \\
            \hline
        \end{tabular}
        \hfill
        \begin{tabular}{|c|c|c|c|c|}
            \hline
            \hspace*{1.5mm}{$S$}\hspace*{1.5mm} & \hspace*{3.5mm}{0}\hspace*{3.5mm} & \hspace*{3.5mm}{1}\hspace*{3.5mm}
            & \hspace*{3.5mm}{2}\hspace*{3.5mm} & \hspace*{3.5mm}{3}\hspace*{3.5mm}  \\
            \hline
            0 &  0.60&	0.25&	0.05&	0.10\\
            1 &  0.05&  0.15&   0.05&   0.75\\
            2 &  0.75&  0.05&   0.15&   0.05\\
            3 &  0.10&  0.05&   0.65&   0.20 \\
            \hline
        \end{tabular}
        \hfill
        \begin{tabular}{|c|c|c|c|c|}
            \hline
            \hspace*{1.5mm}{$S$}\hspace*{3.5mm}& \hspace*{3.5mm}{0}\hspace*{3.5mm}& \hspace*{3.5mm}{1}\hspace*{3.5mm}& \hspace*{3.5mm}{2}\hspace*{3.5mm} & \hspace*{3.5mm}{3}\hspace*{3.5mm}   \\
            \hline
            0 &  0.00&	0.50& 0.50& 0.00\\
            1 &  0.01&	0.49& 0.49& 0.01\\
            2 &  0.13&	0.37& 0.37& 0.13\\
            3 &  0.22&	0.28& 0.28& 0.22\\
            \hline
        \end{tabular}
        \caption{Gaussian Mixture HMM. Left: Hidden States Descriptions. Center: States Transition Probabilities. Right: Emission Probabilities.}
        \label{tab:hmm_gauss_mixture_example}
    \end{table*}

\begin{figure*}[ht]
    \centering
    \begin{minipage*}{.49\textwidth}
        \centering
        \includegraphics[width=0.8\linewidth]{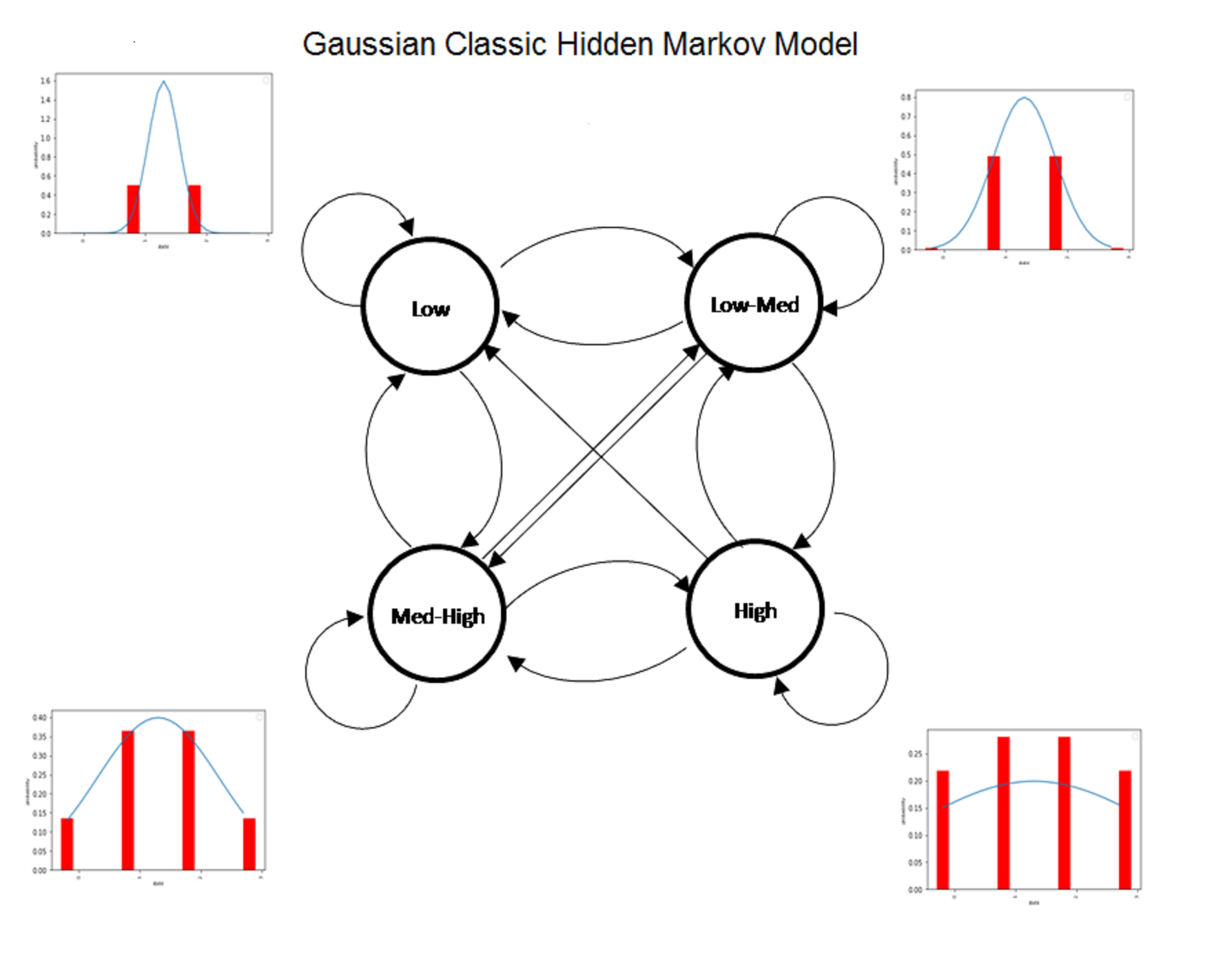}
    \end{minipage*}
    \begin{minipage}{.49\textwidth}
        \centering
        \includegraphics[width=1.0\linewidth]{./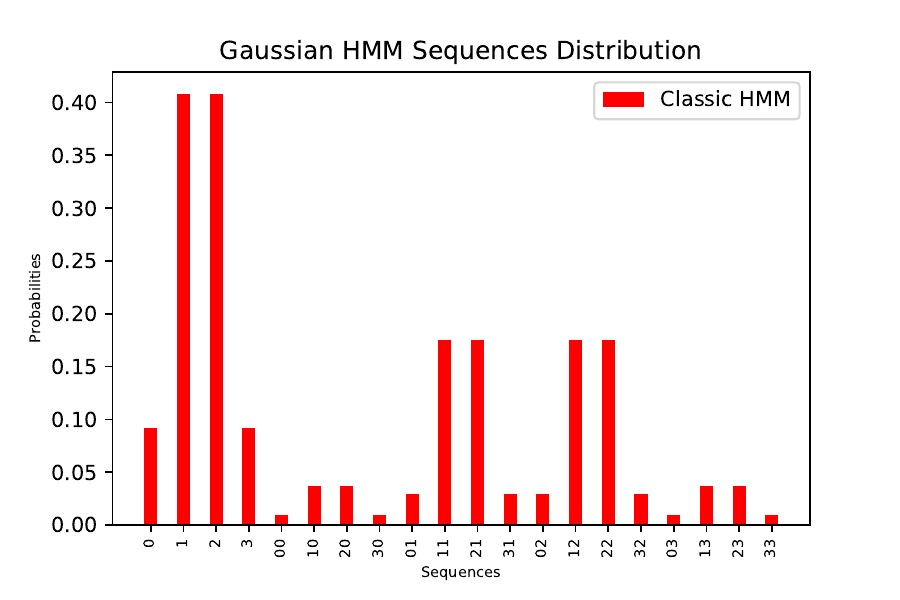}
    \end{minipage}

    \caption{Gaussian Mixture HMM}
    \label{fig:hmm_gauss_mixture}
\end{figure*}

    \begin{figure*}[ht]
        \centering
            \includegraphics[scale=1.0]{./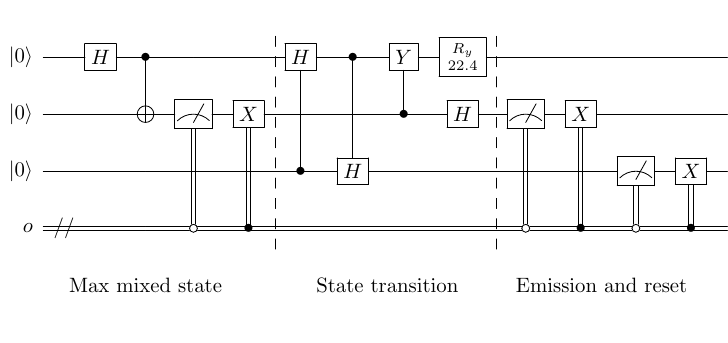}
        \caption{Gaussian QHMM}
        \label{fig:exampleGaussMix1}
    \end{figure*}
    
    \begin{figure*}[!tbp]
            \centering
            \includegraphics[scale=0.8]{./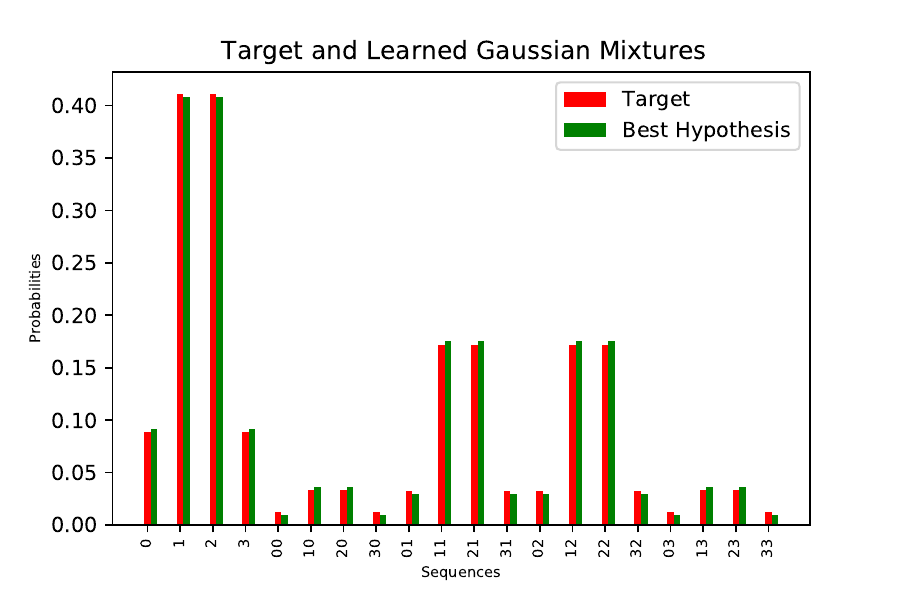}
        \caption{Learned Gaussian Mixture}
        \label{fig:exampleGaussMix2}
    \end{figure*}

In the next example we demonstrate that the QHMMs are efficient generators of stochastic mixtures of distributions. 
\begin{example}
    \label{example Gaussian Mixture}
    \normalfont
    Let's consider a process each point of which is drawn from a Normal distribution belonging to a finite set of Normal distributions:
            $$\{ Y_t: Y_t \sim \mathcal{N}(\mu, \,\sigma^2(s), \, s \in S) \}, $$ 
    \noindent
  where $S =\{s_1, \ldots, s_N\}$ is a finite set of unobservable states. 
  The classical HMM of such a process with n=4 hidden states, defining  standard deviations correspondingly $[0.5,\, 1,\, 2,\, 4 ]$ is specified in Table~\ref{tab:hmm_gauss_mixture_example}. The model is minimal and in each state $s \in S$ an integer observable in the interval $[0,3]$ is emitted by sampling from a normal distribution $\mathcal{N}(\mu=1.5, \, \sigma(s))$ as shown in \fig{hmm_gauss_mixture}. The learning sample is a set of distributions of sequences with lengths up to $t=2N-1=7$. It was generated using the string function of the classic model (\eqn{hmm_sequence_function}).
 The preliminary model specification based on the learning sample analysis is the following:
  \begin{equation*}
        \textbf{Q}=\{ \Sigma, \H_{S}, \H_{E}, U, \mathcal{M} , \rho_0\}
  \end{equation*}
  where $\Sigma=\{0,1\}$, $\H_{S}$ is a Hilbert space with dimension $N=\sqrt{n}=2$ (i.e. 1 qubit quantum system), $\H_{E}$ is a Hilbert space with dimension $4$ (i.e. 2-qubit quantum system and the measurement basis is $\{ [0,\,0],\, [0,\,1],\, [1,\,0],\, [1,\,1] \}$), $U$ is a unitary operation on the Hilbert space $\H_E \otimes \H_S$ implemented by quantum
  gates in $\mathcal{G}=\{X,Y,RX,RY,P\}$, and $\rho_0$ is the tensor product of maximally mixed state system and grounded emission systems.
  Divergence error less than $0.01\%$  was reached in average of 250
  iterations by population of 150 hypotheses. The best solution is shown in \fig{exampleGaussMix1}.  Part of the target and learned distributions at the end of the search are shown in \fig{exampleGaussMix2}.

\end{example}
\subsection{\label{subsection:ansatz_learning_qhmm}Learning QHMM with Ansatz Circuits}

In this paper, we propose the following algorithm for learning \emph{quantum hidden Markov model} using ansatz circuits.

\begin{algorithm}[H]
    \SetKwInOut{Input}{Input}\SetKwInOut{Output}{Output}
    \DontPrintSemicolon
    \caption{Quantum Hidden Markov Model Learning with Ansatz Circuits}
    \label{alg:train_hqmm}

    \Input{table of observed sequences and corresponding probabilities, min-max sequence length (optional)}
    \Output{trained circuit that can be extended to the required sequence length}

    Initialize start state, variational quantum ansatz circuit, start values of the parameters. Multiple
    options for start states and ansatz circuits are discussed in Section~\ref{subsec:primmeassystemdesign}. \;

    $ Cost = \sum_{i} l_i \times \left(p_i^{target} - p_i^{current}\right)^2 $ where $l_i$ is the sequence length\;

    Find optimal parameters for quantum circuit with classical technique, e.g. Nelder–Mead method. \;

\end{algorithm}

This approach essentially utilizes the training of parameterized quantum circuits to model sequence data. 

\begin{figure*}[h!]
        \begin{minipage}{.75\textwidth}
            \centering
            \begin{quantikz}[slice style=blue]
                & \lstick{} & \qw & \ctrl{1} \gategroup[wires=3, steps=3, style={dotted, cap=round, inner sep=7pt}, label style={label position=above, yshift=0.4cm}]{Entanglement block} \gategroup[wires=3, steps=5, style={dotted, cap=round, inner sep=14pt}, label style={label position=below, yshift=-0.75cm, xshift=3.0cm}]{N repetitions} & \ctrl{2} & \qw & \qw & \gate{R_y(\theta_{0})} & \qw & \qw & \ctrl{1}\gategroup[wires=3, steps=5, style={dotted, cap=round, inner sep=14pt}, label style={label position=below, yshift=-0.5cm}]{} & \ctrl{2} & \qw & \qw & \gate{R_y(\theta_{0})} & \qw & \qw \\
                & \lstick{} & \qw & \targ{} & \qw & \ctrl{1} & \qw & \gate{R_y(\theta_{1})} & \qw & \qw & \targ{} & \qw & \ctrl{1} & \qw & \gate{R_y(\theta_{1})} & \qw & \qw \\
                & \lstick{} & \qw & \qw & \targ{} & \targ{} & \qw & \gate{R_y(\theta_{2})} & \qw & \qw & \qw & \targ{} & \targ{} & \qw & \gate{R_y(\theta_{2})} & \qw & \qw
            \end{quantikz}
        \end{minipage}
    \caption{Real Amplitudes Ansatz Circuit}
    \label{fig:RealAmplitudes}
\end{figure*}

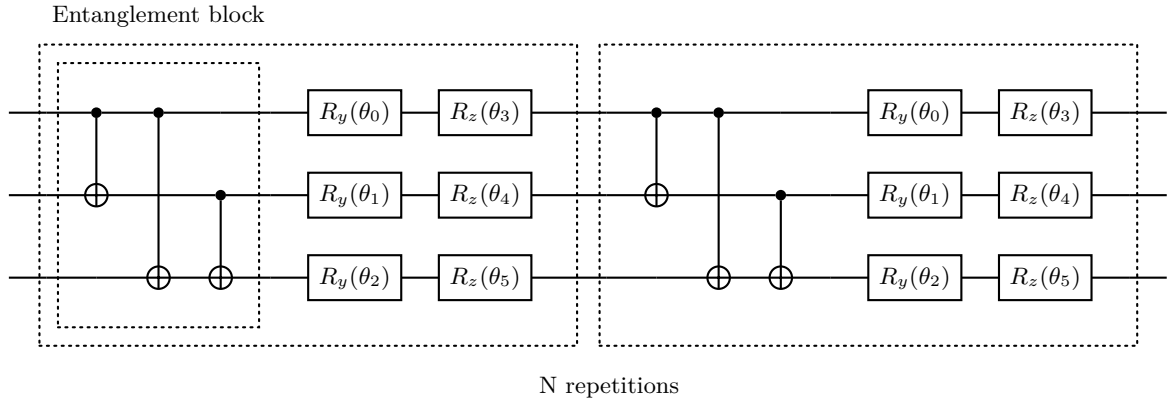
\begin{figure*}[!ht]
        \begin{minipage}{1.\textwidth}
            \centering
            \begin{quantikz}[slice style=blue]
                & \lstick{} & \qw & \ctrl{1} \gategroup[wires=3, steps=3, style={dotted, cap=round, inner sep=7pt}, label style={label position=above, yshift=0.4cm}]{Entanglement block} \gategroup[wires=3, steps=6, style={dotted, cap=round, inner sep=14pt}, label style={label position=below, yshift=-0.75cm, xshift=4.0cm}]{N repetitions} & \ctrl{2} & \qw & \qw & \gate{R_y(\theta_0)} & \gate{R_z(\theta_3)} & \qw & \qw & \ctrl{1}\gategroup[wires=3, steps=6, style={dotted, cap=round, inner sep=14pt}, label style={label position=below, yshift=-0.5cm}]{} & \ctrl{2} & \qw & \qw & \gate{R_y(\theta_0)} & \gate{R_z(\theta_3)} & \qw & \qw \\
                & \lstick{} & \qw & \targ{} & \qw & \ctrl{1} & \qw & \gate{R_y(\theta_1)} & \gate{R_z(\theta_4)} & \qw & \qw & \targ{} & \qw & \ctrl{1} & \qw & \gate{R_y(\theta_1)} & \gate{R_z(\theta_4)} & \qw & \qw \\
                & \lstick{} & \qw & \qw & \targ{} & \targ{} & \qw & \gate{R_y(\theta_2)} & \gate{R_z(\theta_5)} & \qw & \qw & \qw & \targ{} & \targ{} & \qw & \gate{R_y(\theta_2)} & \gate{R_z(\theta_5)} & \qw & \qw
            \end{quantikz}
        \end{minipage}
    \caption{EfficientSU2 Ansatz Circuit with $R_y$ and $R_z$ Rotations}
    \label{fig:EfficientSU2}
\end{figure*}

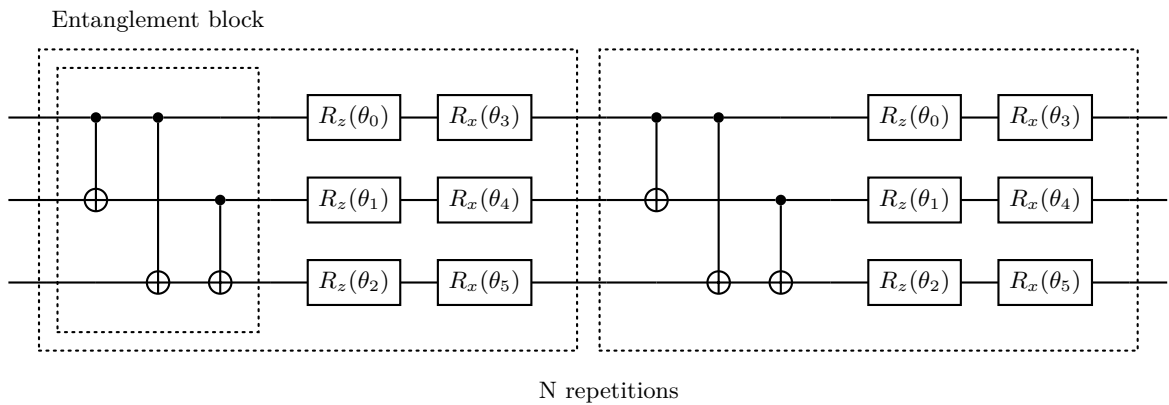
\begin{figure*}[!ht]
        \begin{minipage}{1.\textwidth}
            \centering
            \begin{quantikz}[slice style=blue]
                & \lstick{} & \qw & \ctrl{1} \gategroup[wires=3, steps=3, style={dotted, cap=round, inner sep=7pt}, label style={label position=above, yshift=0.4cm}]{Entanglement block} \gategroup[wires=3, steps=6, style={dotted, cap=round, inner sep=14pt}, label style={label position=below, yshift=-0.75cm, xshift=4.0cm}]{N repetitions} & \ctrl{2} & \qw & \qw & \gate{R_z(\theta_0)} & \gate{R_x(\theta_3)} & \qw & \qw & \ctrl{1}\gategroup[wires=3, steps=6, style={dotted, cap=round, inner sep=14pt}, label style={label position=below, yshift=-0.5cm}]{} & \ctrl{2} & \qw & \qw & \gate{R_z(\theta_0)} & \gate{R_x(\theta_3)} & \qw & \qw \\
                & \lstick{} & \qw & \targ{} & \qw & \ctrl{1} & \qw & \gate{R_z(\theta_1)} & \gate{R_x(\theta_4)} & \qw & \qw & \targ{} & \qw & \ctrl{1} & \qw & \gate{R_z(\theta_1)} & \gate{R_x(\theta_4)} & \qw & \qw \\
                & \lstick{} & \qw & \qw & \targ{} & \targ{} & \qw & \gate{R_z(\theta_2)} & \gate{R_x(\theta_5)} & \qw & \qw & \qw & \targ{} & \targ{} & \qw & \gate{R_z(\theta_2)} & \gate{R_x(\theta_5)} & \qw & \qw
            \end{quantikz}
        \end{minipage}
    \caption{EfficientSU2 Ansatz Circuit with $R_z$ and $R_x$ Rotations}
    \label{fig:qka}
\end{figure*}

\begin{figure*}[!ht]
    \centering
    \includegraphics[scale=0.55]{./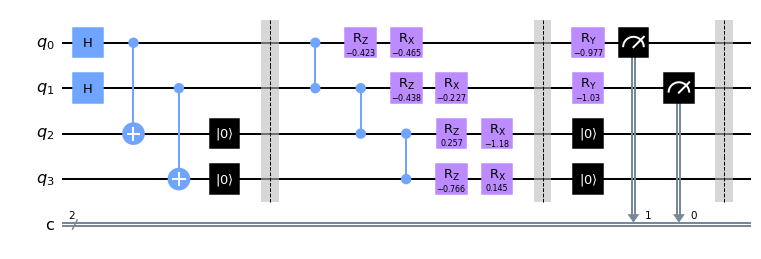}
    \caption{Circuit for 1 step with optimized parameters}
    \label{fig:monras_qka_linear_hw_circuit}
\end{figure*}

\begin{figure*}[!ht]
    \centering
    \includegraphics[scale=0.4]{./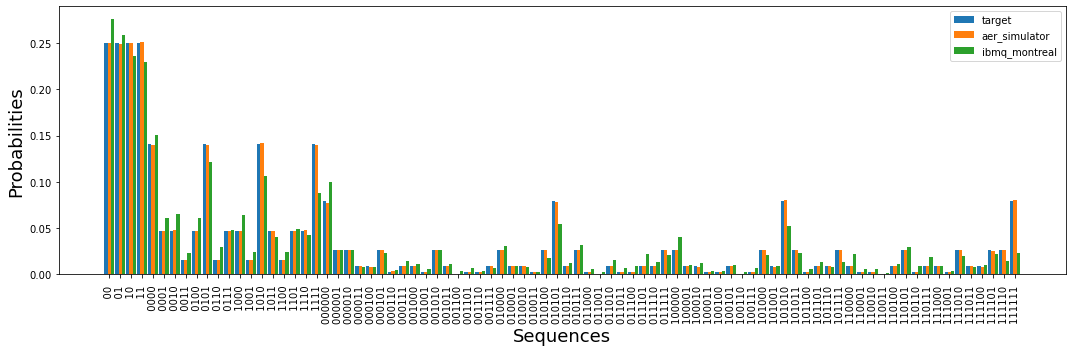}
    \caption{Predicted and observed probabilities}
    \label{fig:monras_qka_linear_hw_probabilities}
\end{figure*}

\subsection{Ansatz Circuit Template}

The topic of efficient ansatz selection is an area of active research \cite{https://doi.org/10.1002/qute.201900070,
    aoki1983, du2022quantum, haug2021capacity, holmes2022connecting}.

Due to the presence of noise in quantum devices, one must select an ansatz which is sufficiently
expressive (i.e. able to access the solution space in the Hilbert space) while maintaining a low parameter count and
lower depth of the quantum circuits in order to suppress the noise.
 
Moreover, the proposed ansatz should contain sufficient entanglement to be non-simulable on a classical computer.
Qiskit provides a variety of built-in ansatz templates with various combinations of the number of parametric gates
and the circuit depth \cite{Qiskit}.
The two ansatz circuits we test in this work are \texttt{RealAmplitudes} (\fig{RealAmplitudes}) and
\texttt{EfficientSU2} (\fig{EfficientSU2} and \fig{qka}). We consider two variations for EfficientSU2 ansatz circuit: 1) with $R_y$, $R_z$ (\fig{EfficientSU2}) and 2) with $R_z$, $R_x$ (\fig{qka}) rotation gates. Furthermore, in the entanglement block we test full and linear entanglement schemes. The latter scheme is more hardware efficient as it doesn't assume fully connected mapping between qubits.

\section{\label{section:examples_of_qhmms}Examples of QHMMs}

\subsection{Simple Four Symbol Stochastic Process}

Let's consider a process defined by the following Kraus operators \cite{monras2011hidden}:
$$\frac{1}{\sqrt{2}}\vert{0}\rangle{}\langle{0}\vert{},\frac{1}{\sqrt{2}}\vert{1}\rangle{}\langle{1}\vert{},\frac{1}{\sqrt{2}}\vert{+}\rangle{}\langle{+}\vert{},\frac{1}{\sqrt{2}}\vert{-}\rangle{}\langle{-}\vert{}$$ or
$$\left(\begin{matrix}\frac{1}{\sqrt{2}}&0\\0&0\end{matrix} \right), \left(\begin{matrix}0&0\\0&\frac{1}{\sqrt{2}}\end{matrix} \right), \frac{1}{2}\left(\begin{matrix}\frac{1}{\sqrt{2}}&\frac{1}{\sqrt{2}}\\\frac{1}{\sqrt{2}}&\frac{1}{\sqrt{2}}\end{matrix} \right),\frac{1}{2}\left(\begin{matrix}\frac{1}{\sqrt{2}}&-\frac{1}{\sqrt{2}}\\-\frac{1}{\sqrt{2}}&\frac{1}{\sqrt{2}}\end{matrix} \right)$$

\begin{table}[!ht]
    \centering
    \begin{tabular}{|l|l|r|}
        \hline
        Ansatz &  Entanglement &  Cost \\
        \hline
        EfficientSU2 ($R_z,R_x$)    & full      & 3.90e-5 \\
        EfficientSU2 ($R_z,R_x$)    & linear    & 4.23e-5 \\
        RealAmplitudes & full      & 0.155  \\
        RealAmplitudes & linear    & 0.689 \\
        EfficientSU2 ($R_y,R_z$)    & full      & 0.156  \\
        EfficientSU2 ($R_y,R_z$)    & linear    & 0.155  \\
    \hline
    \end{tabular}
    \caption{Optimized cost for different ansatz circuits}
    \label{tab:monras_results_comparison}
\end{table}

In this case, we will use the maximally mixed state as the start state.
Next, we model this sequence using the set of ansatz circuits described in Section~\ref{subsec:primmeassystemdesign} and
compare different entanglement strategies that include full entanglement and linear entanglement.
The comparison of the results is presented in Table~\ref{tab:monras_results_comparison}.

Clearly, the EfficientSU2 ($R_z,R_x$) style ansatz performs much better than others.
However, there is only marginal difference between full and linear entanglement strategies. Due to the heavy hex lattice used in IBM processors, it is expensive to run the circuit with full entanglement on four qubits.
However, we have seen that the use of linear entanglement produces equally good results, so we will proceed with the linear entanglement design \fig{monras_qka_linear_hw_circuit}. The result is presented in
\fig{monras_qka_linear_hw_probabilities}.
Hardware results~\cite{monraslinearexperiment} clearly capture the pattern.
However, the current level of hardware noise still seems too high for longer sequences.

The hardware run was executed using \emph{Sampler} primitive on \emph{Qiskit Runtime} with \emph{optimization\_level}=$2$
\cite{qiskitConfigureerrorsuppression,qiskitconfigureerrormitigation}.

\subsection{Classic Market Model with Four Hidden States}

The process is defined by transition and emission matrices in \tab{hmm_example}.

We will simulate this process using 2 qubits: one for the principal system and one for the environment.
We will use the maximally mixed state as the initial state.
For the main transition unitary, we will use RealAmplitudes, EfficientSU2 ($R_y,R_z$) and EfficientSU2 ($R_z,R_x$) ansatz circuits.
Clearly, for the circuit of only 2 qubits linear and full entanglement produce the same result.

Once we train the model following Algorithm~\ref{alg:train_hqmm}, we get the following costs associated with the ansatz circuits
(Table~\ref{tab:market_results_comparison}):

\begin{table}[!ht]
    \centering
    \caption{Optimized cost for different ansatz circuits}
    \label{tab:market_results_comparison}
    \begin{tabular}{|l|r|}
        \hline
        Ansatz &  Cost \\
        \hline
        EfficientSU2 ($R_z,R_x$)    & 0.00030 \\
        RealAmplitudes & 0.00030  \\
        EfficientSU2 ($R_y,R_z$)    & 0.00036  \\
        \hline
    \end{tabular}
\end{table}

The results are very close, so we can pick the simplest circuit to continue.
In this case, we choose RealAmplitudes since it has only 2 parameters.
The circuit for 1 step with optimized parameters is shown in \fig{market_realamplitudes_circuit}.

\begin{figure}[!ht]
    \hspace*{-1.75cm}\includegraphics[scale=0.4]{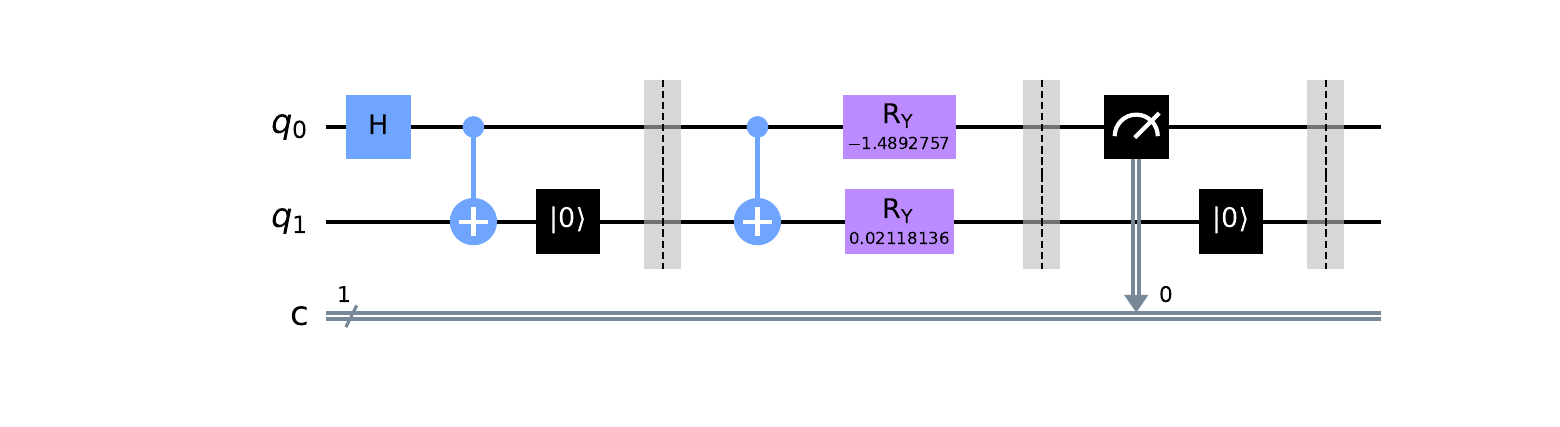}
    \caption{Circuit for 1 step with optimized parameters}
    \label{fig:market_realamplitudes_circuit}
\end{figure}
\begin{figure}[!ht]
    \centering
    \includegraphics[scale=0.3]{./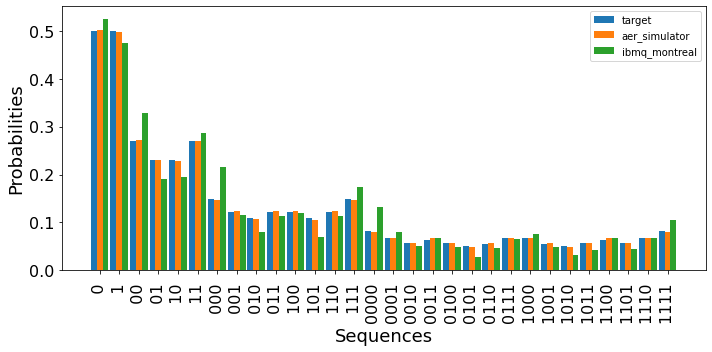}
    \caption{Predicted and observed probabilities}
    \label{fig:market_real_amplitudes_probabilities}
\end{figure}

Comparison with the hardware run on ibmq\_montreal using Qiskit Runtime and Sampler primitive
~\cite{marketrealamplexperiment} is shown in \fig{market_real_amplitudes_probabilities}.
The hardware run was executed using \emph{Sampler} primitive on \emph{Qiskit Runtime} with $optimization\_level=2$
~\cite{qiskitConfigureerrorsuppression,qiskitconfigureerrormitigation}.

The training has converged to the final solution very quickly. In fact, it is easy to see that even the RealAmplitudes ansatz is too expressive for this stochastic process and we can further simplify it by removing $R_Y$ rotation gate on $q_1$.

\newpage

\section{\label{section:conclusions_outlook}Conclusions and Outlook}

In this paper, we utilized the theory of open quantum systems to study complexity, unitary implementation, and learning from examples of a class of quantum stochastic generators known as Quantum Hidden Markov Models. We have demonstrated that the integration of unitary dynamics of a principal (or ``state'') quantum system, entangled with an emission system (``environment'') where observations are generated through orthogonal measurement, presents a state-efficient model of finite-order stochastic languages. The Markovian behavior of the model's state evolution is guaranteed by the direction of the information flow from the principal system to the environment. We note that there are other quantum architectures that generate joint evolution of a Markovian state process and a dependent observable process. It is especially important to characterize the processes generated by sequential measurements of cluster state quantum systems \cite{Briegel2001}, \cite{monras2011hidden}. Since these systems have fewer free parameters, the general entanglement architecture and the parameters of the measurement basis, we would expect efficient learning algorithms. 

Another important research opportunity is to apply the approach developed in the article to devices and processes with non-Markovian behavior. In this case, we need to establish an information flow, or input, from the environment towards the state system \cite{Breuer2010}. This will allow to model higher-order Hidden Markov Models, finite-state transducers, and generators with attention mechanisms.

We proposed three variational ansatz circuits to be used as a starting point to model classical sequence data and
tested on two use cases with 4 hidden states and 2 and 4 observed outcomes.

\section{Acknowledgements}

We thank Vaibhaw Kumar, Jae-Eun Park, Laura Schleeper, and Rukhsan UI Haq for helpful discussions at the start of this work. We would like to thank John Watrous for his comments regarding quantum channels.

The views expressed in this article are those of the authors and do not represent the views of Wells Fargo. This
article is for informational purposes only. Nothing contained in this article should be construed as investment advice.
Wells Fargo makes no express or implied warranties and expressly disclaims all legal, tax, and accounting implications
related to this article.\\

\newpage
\newpage
\bibliographystyle{ieeetr}
\bibliography{main}


\end{document}